\documentclass[a4paper]{article}
\usepackage{a4wide}

\usepackage[utf8]{inputenc} 
\usepackage[T1]{fontenc}    
\usepackage{hyperref}       
\usepackage{url}            
\usepackage{booktabs}       
\usepackage{amsfonts}       
\usepackage{nicefrac}       
\usepackage{xcolor}         
\usepackage[round]{natbib}

\usepackage{amsmath}
\usepackage{amsthm}
\usepackage{amsfonts}
\usepackage{graphicx}
\usepackage{newtxmath}      
\usepackage[algo2e,ruled,vlined]{algorithm2e}
\usepackage{tikz}
\usepackage{todonotes}
\usepackage{caption}
\usepackage{subcaption}

\usepackage{titlesec}
\titleformat*{\subsubsection}{\large\bfseries}

\newtheorem{theorem}{Theorem}
\newtheorem{lemma}[theorem]{Lemma}
\newtheorem{coro}[theorem]{Corollary}

\newtheorem{fact}[theorem]{Fact}
\newtheorem{claim}[theorem]{Claim}
\newtheorem{restate}{Theorem}
\newtheorem{observation}[theorem]{Observation}

\def\cite{\citep}
\setcitestyle{numbers,open=[,close=]}

\DeclareMathOperator{\opt}{\textsc{Opt}}

\DeclareMathOperator{\dist}{dist}
\newcommand{\cost}{\ensuremath{\mathit{cost}}}
\newcommand{\alg}{\ensuremath{\mathcal{A}}\xspace}
\newcommand{\R}{\ensuremath{\mathbb{R}}}

\newcommand{\rhs}{\textit{rhs}}
\def\1{\vvmathbb{1}}
\def\bj{\bar\jmath}

\SetKwFor{Parallel}{for}{run in parallel}{}

\title{Learning-Augmented Dynamic Power Management with Multiple States via 
New Ski Rental Bounds}

\author{%
Antonios Antoniadis\\
{\small University of Twente}\\
{\small Enschede, The Netherlands}
\and
Christian Coester\thanks{Christian Coester is supported by the Israel Academy of Sciences and Humanities \& Council for Higher Education Excellence Fellowship Program for International Postdoctoral Researchers. Research was carried out while he was at CWI in Amsterdam and supported by NWO VICI grant 639.023.812.}\\
{\small Tel Aviv University}\\
{\small Tel Aviv, Israel}\\
\and
Marek Eliáš\thanks{Research was carried out while Marek Eliáš was at CWI in
Amsterdam and was supported by NWO GROOT project number OCENW.GROOT.2019.015
(OPTIMAL)}\\
{\small Università Bocconi}\\
{\small Milan, Italy}\\
\and
Adam Polak\thanks{Supported by SNSF project \emph{Lattice Algorithms and Integer Programming} (185030)}\\
{\small EPFL}\\
{\small Lausanne, Switzerland}\\
\and
Bertrand Simon\\
{\small IN2P3 Computing Center, CNRS}\\ {\small Villeurbanne, France}
}

\date{}
\begin{document}

\maketitle

\begin{abstract}
We study the online problem of minimizing power consumption in systems with multiple power-saving states. During idle periods of unknown lengths, an algorithm has to choose between power-saving states of different energy consumption and wake-up costs. We develop a learning-augmented online algorithm that makes decisions based on (potentially inaccurate) predicted lengths of the idle periods. The algorithm's performance is near-optimal when predictions are accurate and degrades gracefully with increasing prediction error, with a worst-case guarantee almost identical to the optimal classical online algorithm for the problem. A key ingredient in our approach is a new algorithm for the online ski rental problem in the learning augmented setting with tight dependence on the prediction error. We support our theoretical findings with experiments.
\end{abstract}

\section{Introduction}
\label{sec:intro}

Energy represents up to $70\%$ of total operating costs of modern data centers~\cite{SSG} and is one of the major quality-of-service parameters in battery-operated devices. In order to ameliorate this, contemporary CPUs are equipped with sleep states to which the processor can transition during periods of inactivity. In particular, the ACPI-standard~\cite{ACPI} specifies that each processor should possess, along with the active state $C0$ that is used for processing tasks, at least one sleep state $C1$. Modern processors generally possess more sleep states $C2,\dots$; for example, current Intel CPUs implement at least 4 such $C$-states~\cite{intel}. Apart from CPUs, such sleep states appear in many systems ranging from hard drives or mobile devices to the start-stop feature found in many cars, and are furthermore often employed when rightsizing data centers~\cite{Albers19}.

Intuitively, in a ``deeper'' sleep state, the set of switched-off components will be a superset of the corresponding set in a more shallow sleep state. This implies that the \emph{running cost} for residing in that deeper state will be lower, but the \emph{wake-up} cost to return to the active state $C0$ will be higher compared to a more shallow sleep state. In other words, there is a tradeoff between the running and the wake-up cost. During each idle period, a \emph{dynamic power management (DPM)} strategy has to decide in which state the system resides at each point in time, without a-priori knowledge about the duration of the idle period. Optimally managing these sleep states is a challenging problem due to its online nature. On the one hand, transitioning the system to a too deep state could be highly suboptimal if the idle period ends shortly after. On the other hand, spending too much idle time in a shallow state would accumulate high running costs. The impact of DPM strategies in practice has been studied for instance in data centers, where each machine may be put to a sleep mode if no request is expected. See the study of Lim et al.~\cite{lim2011} on multi-tier data centers.

The special case of $2$-state DPM systems, i.e., when there is only a single sleep state (besides the active state), is essentially equivalent to the \emph{ski rental} problem, one of the most classical problems and of central importance in the area of online optimization~\cite{Phillips-Westbrook,Irani-Karlin}. This problem is defined as follows: A person goes skiing for an unknown number of days. On every day of skiing, the person must decide whether to continue renting skis for one more day or to buy skis. Once skis are bought there will be no more cost on the following days, but the cost of buying is much higher than the cost of renting for a day. It is easy to see that this captures a single idle period of DPM with a single sleep state whose running cost is $0$: The rental cost corresponds to the running cost of the active state and the cost of buying skis corresponds to the wake-up cost; transitioning to the sleep state corresponds to buying skis. Given this equivalence, the known $2$-competitive deterministic algorithm and $e/(e-1)\approx 1.58$-competitive randomized algorithm for ski rental carry over to $2$-state DPM, and these competitive ratios are tight. In fact, it was shown by Irani et al.~\cite{IraniSG03} and Lotker et al.~\cite{LotkerPR12} that the same competitive ratios carry over even to multi-state DPM. Ski rental, also known as \emph{rent-or-buy} problem, is a fundamental problem appearing in many domains not restricted to computer hardware questions. For the AI community, this problem for example implicitly appears in expert learning with switching costs: paying the price to switch to a better expert allows to save expenses in the future.

Beyond these results for the classical online setting, \cite{IraniSG03} also gave a deterministic $e/(e-1)$-competitive algorithm for the case in which the length of the idle periods is repeatedly drawn from a fixed, and known, probability distribution. When the probability distribution is fixed but unknown they developed an algorithm that \emph{learns} the distribution over time and showed that it performs well in practice. Although it is perhaps not always reasonable to assume a fixed underlying probability distribution for the length of idle periods, real-life systems do often follow periodical patterns so that these lengths can indeed be frequently predicted with adequate accuracy, see  Chung et al.~\cite{ChungBBLM02} for a specific example. Nevertheless, it is not hard to see that blindly following such predictions can lead to arbitrarily bad performance when predictions are faulty. The field of \emph{learning-augmented} algorithms \cite{MitzenmacherV20} is concerned with algorithms that incorporate predictions in a robust way.

In this work, we introduce multi-state DPM to the learning-augmented setting. Extending ideas of \cite{IraniSG03} and \cite{LotkerPR12}, we give a reduction from multi-state DPM to ski rental that is applicable to the learning-augmented setting. 
Although ski rental has been investigated through the learning-augmented algorithms lens before \cite{PurohitSK18,WeiZ20}, earlier work has focused on the optimal trade-off between \emph{consistency} (i.e., the performance when predictions are accurate) and \emph{robustness} (i.e., the worst-case performance). To apply our reduction from DPM to ski rental, we require more refined guarantees for learning-augmented ski rental. To this end we develop a new learning-augmented algorithm for ski rental that obtains the optimal trade-off between consistency and dependence on the prediction error. Our resulting algorithm for DPM achieves a competitive ratio arbitrarily close to $1$ in case of perfect predictions and its performance degrades gracefully to a competitive ratio arbitrarily close to the optimal robustness of $e/(e-1)\approx 1.58$ as the prediction error increases.

\subsection{Formal definitions}

\paragraph{Problem definition.} In the problem of \emph{dynamic power management (DPM)}, we are given $k+1$ power states denoted by $0,1,\dots,k$,
with power consumptions $\alpha_0 > \dotsb > \alpha_k\ge 0$
and wake-up costs $\beta_0 < \dotsb < \beta_k$.
For state $0$, we have $\beta_0 = 0$ and we call this the {\em active state}.
The input is a series of \emph{idle periods} of lengths $\ell_1, \dotsc, \ell_T$
received online, i.e., the algorithm does not know the length of the
current period before it ends.
During each period, the algorithm can transition to states with lower and lower
power consumption,
paying energy cost $x \alpha_i$ for residing in state $i$ for time $x$.
If $j$ is the state at the end of the idle period, then it has to pay the wake-up cost $\beta_j$
to transition back to the active state 0.
The goal is to minimize the total cost.

In the \emph{learning-augmented} setting, the algorithm receives at the beginning of the $i$th idle period a prediction
$\tau_i\ge 0$ for the value of $\ell_i$ as additional input. We define $\eta_i := \alpha_0 |\tau_i - \ell_i|$ to be the error of the $i$th prediction, and $\eta := \sum_i^T \eta_i$ to be the total prediction error.

(Continuous-time) \emph{ski rental} is the special case of DPM with $k=1$,
$\alpha_1 = 0$ and a single idle period of some length $\ell$.
In this case, we call $\alpha := \alpha_0$ the rental cost,
$\beta := \beta_1$ the buying cost, and $\ell$ the length
of the ski season. In learning-augmented ski rental, we write the single prediction as $\tau:=\tau_1$.

\paragraph{$\pmb{(\rho,\mu)}$-competitiveness.}
Classical online algorithms are typically analyzed in terms of \emph{competitive ratio}. A (randomized) algorithm $\alg$ for an online minimization problem is said to be \emph{$\rho$-competitive} (or alternatively, obtain a \emph{competitive ratio} of $\rho$) if for any input instance,
\begin{align}\label{eq:compClassical}
\cost(\alg)\leq \rho\cdot \opt\,+\,c,
\end{align}
where $\cost(\alg)$ and $\opt$ denote the (expected) cost of \alg and the optimal cost of the instance and $c$ is a constant independent of the \emph{online} part of the input (i.e., the lengths $\ell_i$ in case of DPM). For the ski rental problem one requires $c=0$, since the trivial algorithm that buys at time $0$ has constant cost $\beta$.

In the learning-augmented setting, for $\rho\ge 1$ and $\mu\ge 0$, we say that \alg is \emph{$(\rho,\mu)$-competitive} if 
  \begin{align}
    \cost(\alg)\leq \rho\cdot \opt\, +\, \mu\cdot\eta \label{eq:rhomu}
  \end{align}
  for any instance, where $\eta$ is the prediction error. This corresponds to a competitive ratio of $\rho+\mu\frac{\eta}{\opt}$ (with $c=0$). While this could be unbounded as $\eta/OPT\to\infty$, our DPM algorithm achieves a favorable competitive ratio even in this case (see Theorem~\ref{thm:DPM}, where we take the minimum over a range of pairs $(\rho,\mu)$, including $\mu=0$).

For a $(\rho,\mu)$-competitive algorithm, $\rho$ is also called the \emph{consistency}  (i.e., competitive ratio in case of perfect
predictions)
while $\mu$ describes the dependence on the prediction error.

\subsection{Our results}

Our first result is a {\em $(\rho,\mu)$-competitive algorithm} for ski rental that achieves the optimal $\mu$ corresponding to the given $\rho$.
For $\rho \in [1, \frac{e}{e-1}]$, let
\begin{equation}
\label{eq:mu}
\mu(\rho):= \max\bigg\{ \frac{1-\rho\frac{e-1}{e}}{\ln2},\,
	\rho(1-T)e^{-T} \bigg\},
\end{equation}
where $T\in[0,1]$ is the solution to $T^2e^{-T}=1-\frac{1}{\rho}$.
Let $\tilde\rho \approx 1.16$ be the value of $\rho$ for which both terms in the maximum yield the same value. The first term dominates for $\rho > \tilde\rho$ and the second term if $\rho<\tilde\rho$.
Note that $\mu(1) = 1$ and $\mu\left(\frac{e}{e-1}\right) = 0$.
See Figure~\ref{fig:mu_r} (left) for an illustration.

\begin{theorem}
\label{thm:ski-UB}
For any $\rho\in[1,\frac{e}{e-1}]$, there is a $(\rho,\mu(\rho))$-competitive randomized
algorithm for learning-augmented ski rental, i.e., given a prediction with error $\eta$,
its expected cost is at most $\rho\opt + \mu(\rho)\cdot \eta$.
\end{theorem}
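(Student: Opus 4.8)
The plan is to normalize costs by scaling so that $\alpha=\beta=1$; then $\opt=\min(\ell,1)$, the break-even length is $1$, and the prediction error is $\eta=|\tau-\ell|$. I describe the algorithm by a single randomized \emph{buy time} $Z\ge 0$: it rents until time $Z$ and buys then, unless the season ends first. Writing $q$ for the density of $Z$ and $S(\ell)=\P[Z>\ell]$ for its tail, the expected cost on a season of length $\ell$ is
\begin{equation}
g(\ell)=\int_0^\ell (z+1)\,q(z)\,dz+\ell\,S(\ell),
\end{equation}
since the algorithm pays $z+1$ when it buys before the season ends and $\ell$ otherwise. The whole problem reduces to choosing $q$, as a function of the prediction $\tau$, so that $g(\ell)\le \rho\min(\ell,1)+\mu(\rho)\,|\tau-\ell|$ for every $\ell\ge 0$.

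I would split on whether the prediction recommends buying ($\tau\ge 1$) or renting ($\tau<1$) and use a dedicated distribution in each regime. First I would reduce the continuum of predictions inside each regime to a single binding prediction using monotonicity of $\eta$ in $\tau$. In the buy regime it is natural to support $Z$ on $[0,1]$; then for every $\ell>1$ the cost equals the constant $1+\E[Z]$, so the consistency requirement collapses to $1+\E[Z]\le\rho$, while for $\ell\le 1$ the error $\tau-\ell$ is smallest at $\tau=1$, making $\tau=1$ the binding prediction. A symmetric argument (now the quantity $\ell-\tau$ shrinks as $\tau$ grows) pins down the binding prediction in the rent regime. Thus in each regime it suffices to certify one inequality $g(\ell)\le \rho\min(\ell,1)+\mu\,|\tau^{*}-\ell|$ over all $\ell$.

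To find the optimal $q$ I would use the standard guess that this inequality is tight on a maximal interval of $\ell$. Differentiating the tight identity and using $S'=-q$ turns $g'(\ell)=q(\ell)+S(\ell)$ into the first-order linear ODE $S(\ell)-S'(\ell)=\text{const}$, whose solutions have the form $S(\ell)=\text{const}+Ae^{\ell}$; the remaining probability mass sits in atoms at the endpoints of the support. The free parameters — the tight interval, the constant $A$, and the atom masses — are then fixed by normalization ($S(0)\le 1$), non-negativity of $q$, and the consistency constraint at $\eta=0$. Solving these yields, in one regime, the minimal feasible $\mu$ equal to $\frac{1-\rho(e-1)/e}{\ln 2}$ and, in the other, $\rho(1-T)e^{-T}$ where $T\in[0,1]$ solves $T^2e^{-T}=1-1/\rho$, the transcendental equation coming from the boundary and normalization conditions on the exponential tail. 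Since the algorithm uses one distribution per regime but a \emph{single} constant $\mu$ must serve all instances, the achievable value is the maximum of the two expressions, i.e.\ $\mu(\rho)$, with the crossover at $\tilde\rho\approx 1.16$.

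The main obstacle is the last step together with verifying global feasibility: I must check that the guessed distributions satisfy $g(\ell)\le \rho\min(\ell,1)+\mu(\rho)\,|\tau-\ell|$ for \emph{all} $\ell$, not only on the interval where it was forced to be tight, which requires controlling the sign of the slack on the remaining ranges and handling the two kinks of the right-hand side at $\ell=1$ (break-even) and $\ell=\tau$ (prediction). The accompanying bookkeeping — extracting the closed forms above from the exponential solution and confirming the boundary values $\mu(1)=1$ and $\mu\!\left(\tfrac{e}{e-1}\right)=0$ — is routine once the distributions and their supports are correctly identified.
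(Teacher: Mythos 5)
Your setup (normalization to $\alpha=\beta=1$, the cost functional $g$, and the tightness ODE $S-S'=\mathrm{const}$ with exponential solutions) matches the paper's analysis. The fatal step is where you ``reduce the continuum of predictions inside each regime to a single binding prediction'' and design \emph{one} distribution per regime. That reduction is only valid if the algorithm uses the same distribution for every $\tau$ in the regime, and in that case the attainable $\mu$ is provably worse than $\mu(\rho)$. Concretely, in the buy regime a single $(P_0,q)$ must satisfy both $g(\ell)\le\rho\ell+\mu(1-\ell)$ for $\ell\le 1$ (binding $\tau=1$) and $g(\ell)\le\rho$ for every $\ell>1$ (binding $\tau=\ell$); the latter forces $P_\infty=0$ and $1+\E[Z]\le\rho$. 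Integrate the first family against the measure $e^{1-\ell}\,d\ell$ on $[0,1]$ and add the constraint $1+\E[Z]\le\rho$ with weight $1$: a direct computation shows that each buy time $t\in[0,1]$ receives total weight exactly $e$ in this combination (and weight $t-1+e\ge e$ for $t>1$), so the left-hand side is at least $e$ for \emph{any} distribution, while the right-hand side equals $\rho(e-2)+\mu+\rho=\rho(e-1)+\mu$. Hence any single distribution serving all $\tau\ge1$ forces $\mu\ge e-\rho(e-1)=e\bigl(1-\rho\tfrac{e-1}{e}\bigr)$ --- the same value your own tightness ansatz produces ($P_0=\mu$, $S(\ell)=\rho-\mu+(1-\rho)e^{\ell}$, $S(1)=0$). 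But $\mu(\rho)$ is strictly smaller for every $\rho\in(1,\tfrac{e}{e-1})$: its first term $\frac{1-\rho(e-1)/e}{\ln 2}$ is smaller by the factor $e\ln 2\approx 1.88$. The rent regime has the same defect (the binding constraints $g(\ell)\le\rho\ell$ for $\ell<1$ and slope at most $\mu$ afterwards again force $\mu\ge e-\rho(e-1)$). So the approach cannot prove the theorem.

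What the theorem requires, and what the paper does, is a family of distributions whose parameters vary with the numerical value of $\tau$ (there, $P_0$ is proportional to $\tau$ and the support endpoints $a$, $b$, $T$ depend on $\tau$), with the inequality $g(\ell)\le\rho\min\{\ell,1\}+\mu|\tau-\ell|$ verified separately for each $\tau$. Two symptoms in your write-up trace back to this. First, the two expressions in $\mu(\rho)$ do not come one from each regime: in the paper both arise from predictions $\tau\le 1$ (the term $\frac{1-\rho(e-1)/e}{\ln 2}$ is the worst case at $\tau=\ln 2$, the term $\rho(1-T)e^{-T}$ at $\tau=1-T$), and the equation $T^{2}e^{-T}=1-1/\rho$ comes from optimizing over the worst prediction $\tau$ --- an optimization your framework cannot express because it has no per-$\tau$ family; the regime $\tau>1$ contributes only a strictly dominated bound. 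Second, the genuinely hard part of the upper bound --- showing that the optimal per-$\tau$ guarantee $\mu_\tau(\rho)$ stays below $\mu(\rho)$ for all $\tau>1$, where no closed form exists and the paper needs an indirect argument locating the possible local maxima of $\tau\mapsto\mu_\tau(\rho)$ at $T=\tau-1$ --- is entirely absent from your plan, because your reduction collapses that whole regime to one constraint, which is exactly where the loss occurs.
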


Note that $\rho<1$ is impossible for any algorithm (due to the case $\eta=0$) and $\rho>\frac{e}{e-1}$ is uninteresting since $\rho=\frac{e}{e-1}$ already achieves the best possible value of $\mu=0$.

We also prove a lower bound showing that $\mu(\rho)$ defined in \eqref{eq:mu}
is the best possible.

\begin{theorem}
\label{thm:ski-LB}
For any $\rho \in [1, \frac{e}{e-1}]$ and any (randomized) algorithm $\alg$,
there is a ski rental instance with some prediction error $\eta$
such that the expected cost of $\alg$ is at least
$\rho\opt + \mu(\rho) \eta$.
\end{theorem}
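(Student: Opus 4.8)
The plan is to prove a matching lower bound via Yao's minimax principle, reducing the randomized lower bound to the construction of hard \emph{distributions} over deterministic instances. After normalizing the buying cost to $1$ and the rental rate to $\alpha=1$ (so that $\opt(\ell)=\min(\ell,1)$ and $\eta=|\tau-\ell|$), a deterministic algorithm facing a fixed prediction $\tau$ is just a single buy time $b\in[0,\infty]$, with cost $\ell$ if the season ends before buying ($\ell<b$) and $b+1$ otherwise. The quantity $\Phi(\alg,I):=\E[\cost(\alg,I)]-\rho\,\opt(I)-\mu\,\eta(I)$ is affine in both the algorithm's and the adversary's mixing distributions, so by the minimax theorem it suffices to exhibit, for each $\rho$, a distribution $D$ over instances with a common prediction $\tau$ such that every buy time $b$ satisfies $\E_{\ell\sim D}[\cost(b)]\ge \rho\,\E_D[\opt]+\mu(\rho)\,\E_D[\eta]$; averaging then produces a single instance on which the fixed algorithm's expected cost is at least $\rho\,\opt+\mu(\rho)\,\eta$, which is exactly what the statement asks for.

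Since $\mu(\rho)$ is the maximum of two expressions, I would build two families of hard distributions, each forcing one term, and for a given $\rho$ invoke the construction whose term dominates. In both families the backbone is the classical continuous-time ski-rental hard density, proportional to $e^{-\ell}$: this is the \emph{indifference} density that makes $\E[\cost(b)]$ constant in $b$, since the first-order condition $\frac{d}{db}\E[\cost(b)]=\P[\ell\ge b]-\phi(b)=0$ forces $\phi(\ell)\propto e^{-\ell}$, so that the adversary cannot be helped by any particular choice of $b$ and $\min_b\E[\cost(b)]$ is pinned down. The two families differ in the prediction $\tau$ and in the placement of the extra probability mass on long seasons ($\ell\ge 1$, where $\opt=1$), which is what trades off against the error term $\eta$.

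For the regime $\rho<\tilde\rho$, where $\rho(1-T)e^{-T}$ dominates, I expect a distribution supported on $[0,T]$ together with a mass of long seasons, with $\tau$ chosen ``misleading'' enough to make the error term active; the optimal interior buy time should land precisely at the threshold $b=T$, and the stationarity condition there, combined with the normalization of the $e^{-\ell}$ density on $[0,T]$, is exactly where the transcendental relation $T^2e^{-T}=1-\tfrac1\rho$ should come from, yielding $\mu\ge\rho(1-T)e^{-T}$. For the regime $\rho>\tilde\rho$, where $\frac{1-\rho(e-1)/e}{\ln2}$ dominates, I expect a construction whose normalizing integral on the relevant interval evaluates to $\ln 2$, with the $e/(e-1)$ factor entering through $\E_D[\opt]$; computing $\E_D[\cost(b)]$, $\E_D[\opt]$ and $\E_D[\eta]$ and optimizing over $b$ should give $\mu\ge\frac{1-\rho(e-1)/e}{\ln2}$. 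In each regime the dominant term equals $\mu(\rho)$, so the corresponding instance certifies $\cost\ge\rho\,\opt+\mu(\rho)\,\eta$ and matches the upper bound of Theorem~\ref{thm:ski-UB}, establishing tightness.

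The main obstacle will be getting the two constructions exactly right---choosing the prediction $\tau$, the cutoff, and the distribution of long-season mass so that the computed $\mu$ coincides with each closed-form expression, and in particular showing that the stationarity condition for the optimal buy time reproduces the transcendental equation $T^2e^{-T}=1-1/\rho$ rather than some nearby relation. A secondary subtlety is verifying that the $\min_b$ is genuinely attained at the intended buy time, i.e.\ checking that the boundary strategies $b=0$ (buy immediately) and $b=\infty$ (never buy) do not beat the interior optimum, so that the indifference computation indeed lower-bounds every deterministic algorithm and hence, via minimax, every randomized one.
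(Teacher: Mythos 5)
Your overall route is genuinely different from the paper's, and it is viable. The paper does not use Yao-style arguments at all: it first identifies a worst-case prediction $\tau$ with $\mu_\tau(\rho)=\mu(\rho)$ (namely $\tau=\ln 2$ for $\rho\ge\tilde\rho$ and $\tau=1-T$ for $\rho<\tilde\rho$), characterizes exactly where its own upper-bound analysis is tight, and then compares the CDF of an arbitrary algorithm with the CDF of its algorithm, extracting a bad season length from the first point where the two CDFs disagree. Your plan --- exhibit a distribution $D$ over season lengths such that every deterministic buy time $b$ satisfies $\E_{\ell\sim D}[\cost(b)]\ge\rho\,\E_D[\opt]+\mu(\rho)\,\E_D[\eta]$, then average --- needs only the easy direction of Yao's principle and would deliver the exact statement of the theorem (in fact more cleanly than the paper, which formally proves non-existence of $(\rho,\widehat\mu)$-competitive algorithms for $\widehat\mu<\mu(\rho)$). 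Moreover, such certificates do exist in closed form. For $\rho\ge\tilde\rho$: take $\tau=\ln 2$, density $e^{-\ell}$ on $[0,1]$, and an atom of mass $1/e$ at $\ell=2$; then every $b$ has expected cost at least $1$, while $\E_D[\opt]=(e-1)/e$ and $\E_D[\eta]=\ln 2$, certifying exactly the first term of $\mu(\rho)$. So your sketch for that regime is essentially correct.

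The genuine gap is your construction for $\rho<\tilde\rho$. A distribution supported on $[0,T]$ plus long-season mass cannot certify $\mu(\rho)$, for a structural reason: if $D$ with prediction $\tau$ were such a certificate, then playing the paper's $(\rho,\mu(\rho))$-competitive algorithm against $D$ gives $0\le\E_D[\cost(\alg)-\rho\opt-\mu(\rho)\eta]\le 0$, which forces (i) $\tau$ to be a worst-case prediction, i.e.\ $\tau=1-T$ with $T^2e^{-T}=1-1/\rho$, and (ii) $D$ to be supported where the upper-bound analysis is tight, which for this $\tau$ (Case 1 of the algorithm) is $[\tau,\infty)=[1-T,\infty)$. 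Since $T<1/2$, that set is disjoint from $[0,T]$, so your proposed support is impossible; relatedly, the relation $T^2e^{-T}=1-1/\rho$ enters through the choice of the worst prediction (maximizing the certified bound over $\tau$), not through stationarity of an interior buy time, and no optimum sits at $b=T$. The correct certificate is: an atom of mass $\tau$ at $\ell=\tau$ --- this is the ingredient your sketch is missing, and it is exactly what makes buying at time $0$ (which the optimal algorithm plays with probability $P_0>0$) indifferent to buying just after $\tau$ --- together with density $(1-\tau)e^{\tau-\ell}$ on $(\tau,1)$ and an atom of mass $(1-\tau)e^{\tau-1}$ at $\ell=2$. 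Then $\min_b\E_D[\cost(b)]=1$, $\E_D[\opt]=1-Te^{-T}$, $\E_D[\eta]=T$, and $\bigl(1-\rho(1-Te^{-T})\bigr)/T=\rho(1-T)e^{-T}=\mu(\rho)$ precisely when $T^2e^{-T}=1-1/\rho$. With these two distributions substituted in, your argument goes through; as written, the second construction would fail.
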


However, for most values of the prediction $\tau$
it is possible to achieve a better $\mu < \mu(\rho)$, and $\mu(\rho)$ only captures the worst case over all possible predictions $\tau$.
The proof of Theorem~\ref{thm:ski-UB} is first sketched in Section~\ref{sec:ski} for clarity, after which we provide the complete analysis while describing how to achieve the best possible $\mu$
as a function of both $\rho$ and $\tau$. Theorem~\ref{thm:ski-LB} is proved in Section~\ref{sec:proofLB}.

\begin{figure}
\includegraphics[height=4cm]{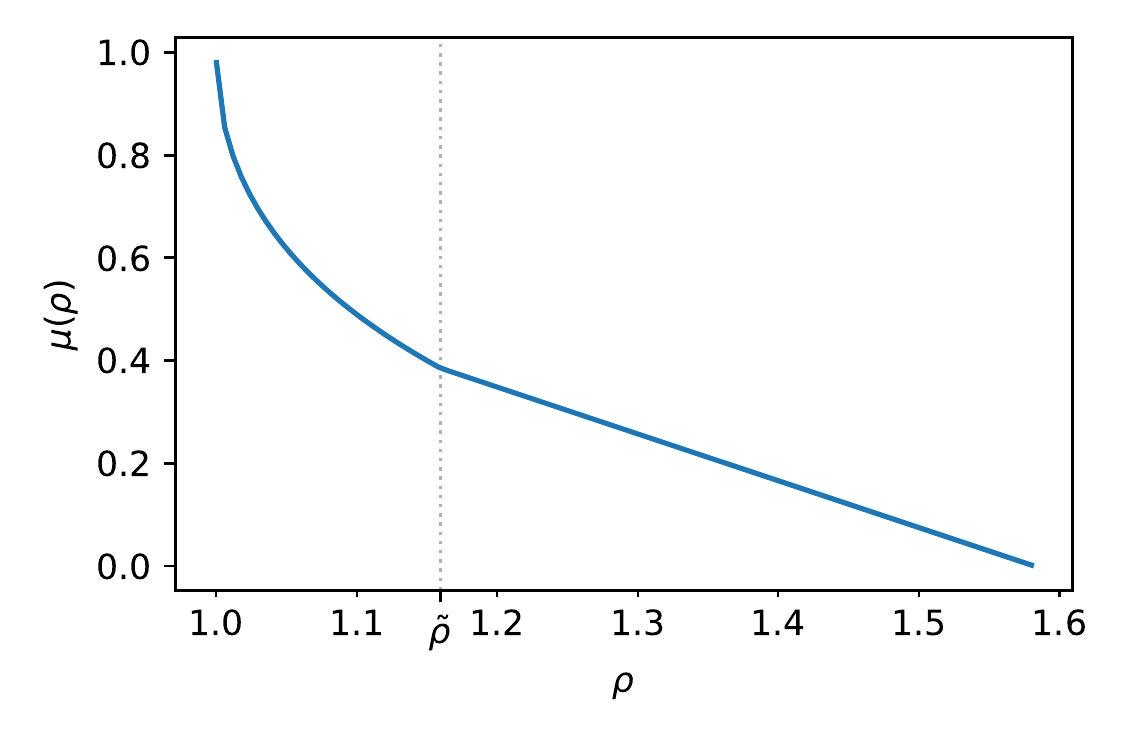}
\hfill
\includegraphics[height=4cm]{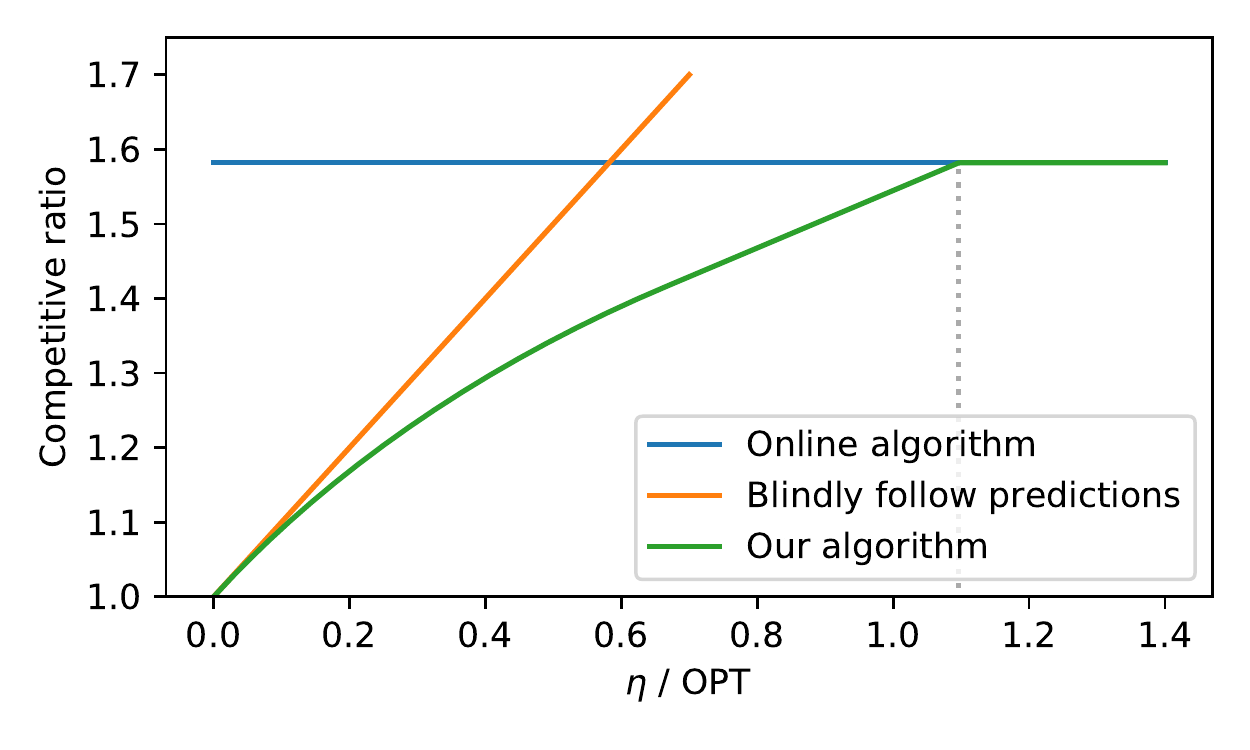}
\vspace{-2.5ex}
\caption{Illustration of $\mu(\rho)$ and of the resulting competitive ratio in function of $\eta/\opt$.}
\label{fig:mu_r}
\end{figure}

In Section~\ref{sec:DPM-1}, we give a reduction from DPM to ski rental in the learning-augmented setting, provided that the ski rental algorithm satisfies a natural monotonicity property (defined formally in Section~\ref{sec:DPM-1}):
\begin{lemma}\label{lem:redDPMSki}
	If there is a monotone $(\rho,\mu)$-competitive ski rental algorithm, then there is a $(\rho,\mu)$-competitive algorithm for DPM.
\end{lemma}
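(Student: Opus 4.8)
The plan is to follow the classical reduction of \cite{IraniSG03,LotkerPR12}, decomposing each idle period into $k$ independent ski rental instances (one per ``power level''), running the given algorithm on each, and then arguing that the monotonicity property is exactly what is needed to glue the individual buy times into a \emph{feasible} DPM schedule. The guarantee will follow term-by-term, and a final summation over periods completes the proof.

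Concretely, fix one idle period with prediction $\tau$ and unknown length $\ell$. I would first assume without loss of generality that every state lies on the lower envelope of the lines $\ell\mapsto \alpha_i\ell+\beta_i$ (a dominated state is never used by an optimal schedule, so it can be dropped); this ensures that the break-even times $\theta_j:=\frac{\beta_j-\beta_{j-1}}{\alpha_{j-1}-\alpha_j}$ are non-decreasing in $j$. For each $j\in\{1,\dots,k\}$ I introduce a ski rental instance $\mathit{SR}_j$ with rental rate $a_j:=\alpha_{j-1}-\alpha_j>0$ and buying cost $b_j:=\beta_j-\beta_{j-1}>0$, and feed it the same prediction $\tau$. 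The key algebraic identity, verified by a direct computation, is that any schedule transitioning to state $j$ at time $t_j$ with $t_1\le\dots\le t_k$ satisfies
\begin{equation*}
\cost_{\mathrm{DPM}}(\ell) \;=\; \alpha_k\ell \;+\; \sum_{j=1}^{k}\cost_{\mathit{SR}_j}(t_j;\ell),
\qquad \cost_{\mathit{SR}_j}(t_j;\ell)=a_j\min(t_j,\ell)+b_j\,\mathbf{1}\{t_j\le\ell\},
\end{equation*}
i.e.\ each term is exactly the ski rental cost of buying at time $t_j$. Applying the same splitting to the offline optimum gives the lower bound $\opt_{\mathrm{DPM}}(\ell)\ge \alpha_k\ell+\sum_j \opt_{\mathit{SR}_j}(\ell)$, since for every fixed target state $i$ one has $\alpha_i\ell+\beta_i=\alpha_k\ell+\sum_{j>i}a_j\ell+\sum_{j\le i}b_j\ge \alpha_k\ell+\sum_j\min(a_j\ell,b_j)$.

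The algorithm $\alg$ then runs the monotone $(\rho,\mu)$-competitive ski rental algorithm on each $\mathit{SR}_j$ using a single shared source of randomness, and transitions to state $j$ at the resulting (random) buy time $t_j$. The crux of the argument — and the step I expect to be the main obstacle — is guaranteeing that this yields a feasible DPM schedule, i.e.\ that $t_1\le t_2\le\dots\le t_k$ almost surely, since otherwise $\alg$ would have to occupy a deeper state before a shallower one. This is precisely where monotonicity enters: the buy-time behaviour of the ski rental algorithm depends (by scale invariance) only on the ratio $\theta_j=b_j/a_j$ and on $\tau$, and since $\theta_1\le\dots\le\theta_k$, the monotonicity property lets us couple the $k$ runs so that the buy times come out ordered. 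I would take care that this coupling is consistent with the per-instance competitiveness guarantee, which holds run-by-run regardless of how the randomness is shared.

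Finally I would assemble the bound. Taking expectations in the identity and applying $(\rho,\mu)$-competitiveness to each $\mathit{SR}_j$ gives $\E[\cost_{\mathrm{DPM}}(\ell)]\le \alpha_k\ell+\rho\sum_j\opt_{\mathit{SR}_j}(\ell)+\mu\sum_j\eta_{\mathit{SR}_j}$, where $\eta_{\mathit{SR}_j}=a_j|\tau-\ell|$ is the error of instance $\mathit{SR}_j$. Using $\rho\ge1$ to absorb the baseline term ($\alpha_k\ell\le\rho\,\alpha_k\ell$) together with the offline lower bound yields $\alpha_k\ell+\rho\sum_j\opt_{\mathit{SR}_j}(\ell)\le\rho\,\opt_{\mathrm{DPM}}(\ell)$, while the errors satisfy $\sum_j\eta_{\mathit{SR}_j}=\big(\sum_j a_j\big)|\tau-\ell|=(\alpha_0-\alpha_k)|\tau-\ell|\le\alpha_0|\tau-\ell|=\eta$, so $\mu\sum_j\eta_{\mathit{SR}_j}\le\mu\eta$. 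This establishes $\E[\cost_{\mathrm{DPM}}(\ell)]\le\rho\,\opt_{\mathrm{DPM}}(\ell)+\mu\eta$ for a single period. Since each idle period begins and ends in the active state, both $\opt$ and the total error $\eta=\sum_i\eta_i$ are additive across periods, so summing the per-period bound over $i=1,\dots,T$ gives the claimed $(\rho,\mu)$-competitiveness of $\alg$ for DPM.
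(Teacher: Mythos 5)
Your proposal is correct and takes essentially the same route as the paper's proof: the same decomposition of each idle period into $k$ ski rental instances with rental cost $\alpha_{j-1}-\alpha_j$ and buying cost $\beta_j-\beta_{j-1}$, the same use of a single shared random seed together with monotonicity (after rescaling by the break-even times) to keep the transition times ordered and hence the schedule feasible, and the same term-by-term assembly using $\rho\ge 1$, $\alpha_k\ge 0$, $\beta_0=0$ and $\sum_j(\alpha_{j-1}-\alpha_j)|\tau-\ell|\le\eta$, followed by summation over idle periods. The only cosmetic difference is that you lower-bound $\opt$ by $\alpha_k\ell+\sum_j\min\{(\alpha_{j-1}-\alpha_j)\ell,\ \beta_j-\beta_{j-1}\}$ directly, whereas the paper identifies the optimal state $j^*$ and evaluates that sum exactly; both yield the same conclusion.
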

Since our ski rental algorithm is monotone, this directly yields a $(\rho,\mu(\rho))$-competitive algorithm for DPM. From the special case $(\rho,\mu)=\left(\frac{e}{e-1},0\right)$, this theorem directly implies the following result for classical DPM (without predictions),
which was first proved by Lotker et al.~\cite{LotkerPR12} for the equivalent multi-slope ski rental problem:
\begin{coro}[\cite{LotkerPR12}]
  \label{coro:dpm158}
	There is a $\frac{e}{e-1}$-competitive randomized online algorithm for DPM (without predictions).
\end{coro}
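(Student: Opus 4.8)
The plan is to obtain the corollary as a direct instantiation of Theorem~\ref{thm:ski-UB} together with the reduction of Lemma~\ref{lem:redDPMSki}, evaluated at the single endpoint $\rho=\frac{e}{e-1}$. The one quantity that does all the work is $\mu\!\left(\frac{e}{e-1}\right)=0$, which was already recorded immediately after the definition \eqref{eq:mu}. Because the error-dependent term $\mu\cdot\eta$ then vanishes identically, a $(\rho,0)$-competitive guarantee collapses back onto the classical notion of competitiveness with additive constant $c=0$, and the prediction ceases to influence the bound.

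Concretely, I would proceed as follows. First, apply Theorem~\ref{thm:ski-UB} with $\rho=\frac{e}{e-1}$. Since $\mu\!\left(\frac{e}{e-1}\right)=0$, this yields a randomized ski rental algorithm $\alg$ satisfying $\cost(\alg)\le \frac{e}{e-1}\,\opt + 0\cdot\eta=\frac{e}{e-1}\,\opt$ for every instance and every prediction error $\eta$. Second, I would invoke the fact (stated in the paragraph following Lemma~\ref{lem:redDPMSki}, and established in Section~\ref{sec:ski}) that the algorithm constructed for Theorem~\ref{thm:ski-UB} is monotone, so the hypothesis of Lemma~\ref{lem:redDPMSki} is met. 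Third, I would feed this monotone $\left(\frac{e}{e-1},0\right)$-competitive ski rental algorithm into Lemma~\ref{lem:redDPMSki} with $(\rho,\mu)=\left(\frac{e}{e-1},0\right)$, which produces a $\left(\frac{e}{e-1},0\right)$-competitive algorithm for DPM.

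It remains to translate the learning-augmented guarantee into the classical statement. By the definition \eqref{eq:rhomu} of $(\rho,\mu)$-competitiveness, the resulting DPM algorithm obeys $\cost(\alg)\le \frac{e}{e-1}\,\opt + 0\cdot\eta = \frac{e}{e-1}\,\opt$ on every instance, independently of the prediction quality. This is precisely the bound \eqref{eq:compClassical} with $\rho=\frac{e}{e-1}$ and $c=0$, i.e.\ a $\frac{e}{e-1}$-competitive randomized online algorithm. Since the bound holds for arbitrary $\eta$, the predictions can be set to any fixed values (e.g.\ $\tau_i=0$), so the algorithm is genuinely prediction-free, which is what the classical (no-prediction) statement requires.

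There is no real obstacle here, as the corollary is a pure composition of the two preceding results at a boundary value of $\rho$; the only point deserving care is the observation that $\mu=0$ decouples the competitive bound from the prediction error, turning the learning-augmented guarantee into an ordinary worst-case competitive ratio.
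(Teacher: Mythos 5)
Your proposal is correct and follows essentially the same route as the paper: the paper obtains Corollary~\ref{coro:dpm158} precisely as the special case $(\rho,\mu)=\left(\frac{e}{e-1},0\right)$ of Lemma~\ref{lem:redDPMSki} applied to the monotone $(\rho,\mu(\rho))$-competitive ski rental algorithm of Theorem~\ref{thm:ski-UB}, using $\mu\!\left(\frac{e}{e-1}\right)=0$ to make the error term vanish. Your only additions are cosmetic --- spelling out that a $(\rho,0)$-competitive guarantee coincides with classical $\rho$-competitiveness with $c=0$ and that the predictions can then be fixed arbitrarily (e.g.\ $\tau_i=0$) --- which the paper leaves implicit.
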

 
Using techniques from online learning, in a way similar to~\cite{AntoniadisCE0S20}, we show in Section~\ref{section:BlumBurch} how to achieve ``almost'' $(\rho,\mu(\rho))$-competitiveness simultaneously for \emph{all} $\rho$:

\begin{theorem}
\label{thm:DPM}
For any $\epsilon>0$, there is a learning-augmented algorithm $\alg$ for dynamic power management
whose expected cost can be bounded as
\[ \textstyle
\cost(\alg) \leq
(1+\epsilon) \min\big\{ \rho \opt + \mu(\rho)\cdot \eta\,\big\vert\;
	\rho\in [1, \frac{e}{e-1}]\big\}
	+ O\big(\frac{\beta_k}{\epsilon}\log \frac1\epsilon\big).
\]
\end{theorem}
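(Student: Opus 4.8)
The plan is to build a finite family of DPM algorithms, one for each of a carefully chosen set of consistency parameters $\rho$, and then run an online-learning master that competes with the best of them at a small multiplicative and additive loss. Fix a parameter $\epsilon'=\Theta(\epsilon)$ to be determined. First I would place a multiplicative net $1=\rho_1<\rho_2<\dots<\rho_N=\frac{e}{e-1}$ on the interval $[1,\frac{e}{e-1}]$ with consecutive ratio $\rho_{j+1}/\rho_j\le 1+\epsilon'$; since the interval has constant length, this requires only $N=O(\log_{1+\epsilon'}\frac{e}{e-1})=O(1/\epsilon')$ net points. For each $\rho_j$, Theorem~\ref{thm:ski-UB} supplies a $(\rho_j,\mu(\rho_j))$-competitive ski rental algorithm, which is moreover monotone, so Lemma~\ref{lem:redDPMSki} converts it into a $(\rho_j,\mu(\rho_j))$-competitive DPM algorithm $\alg_j$. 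These $N$ algorithms are the experts.

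Next I would feed the experts $\alg_1,\dots,\alg_N$ into the online-algorithm combination scheme of Blum and Burch (the same device used in~\cite{AntoniadisCE0S20}). The key structural observation is that switching which expert the master currently imitates costs at most $\beta_k$: at any moment the master can wake the system up to the active state, paying at most the largest wake-up cost $\beta_k$, and then descend for free to match the state of the newly chosen expert. Hence the $N$ experts live in a ``strategy space'' of switching-diameter at most $\beta_k$, and the Blum--Burch theorem yields a randomized master $\alg$ with
\[
\E[\cost(\alg)]\;\le\;(1+\epsilon')\min_{1\le j\le N}\cost(\alg_j)\;+\;O\!\Big(\tfrac{\beta_k}{\epsilon'}\log N\Big).
\]

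It then remains to relate the right-hand side to the continuous minimum over $\rho$. By $(\rho_j,\mu(\rho_j))$-competitiveness we have $\cost(\alg_j)\le \rho_j\opt+\mu(\rho_j)\eta$, hence $\min_j\cost(\alg_j)\le \min_j[\rho_j\opt+\mu(\rho_j)\eta]$. The function $\mu$ is continuous and non-increasing on $[1,\frac{e}{e-1}]$ (readily checked from \eqref{eq:mu}; see Figure~\ref{fig:mu_r}), so for any target $\rho^*$ the smallest net point $\rho_j\ge\rho^*$ satisfies both $\rho_j\le(1+\epsilon')\rho^*$ and $\mu(\rho_j)\le\mu(\rho^*)$, giving $\min_j[\rho_j\opt+\mu(\rho_j)\eta]\le(1+\epsilon')\min_{\rho\in[1,\frac{e}{e-1}]}[\rho\opt+\mu(\rho)\eta]$. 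Chaining this with the Blum--Burch bound produces a factor $(1+\epsilon')^2$, so choosing $\epsilon'$ with $(1+\epsilon')^2\le 1+\epsilon$ (i.e.\ $\epsilon'=\Theta(\epsilon)$) and using $\log N=O(\log\frac1\epsilon)$ yields exactly the claimed bound.

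The main obstacle is the modeling step of the second paragraph, not the regret bound (which is essentially black-box). Care is needed to argue that imitating an expert up to occasional switches really costs only that expert's own cost plus the switching terms, and in particular that a mid-period switch in the monotone, continuous-time DPM cost model can always be realized by a single reset to the active state at cost at most $\beta_k$; the wake-up costs charged at the ends of idle periods must be accounted so that each is paid exactly once, by whichever expert is being followed at that time. Verifying the continuity and monotonicity of $\mu$ from \eqref{eq:mu} to control the discretization error is the only other nontrivial check, and it is routine.
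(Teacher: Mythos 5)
Your proposal is correct, and its overall architecture is the paper's: a finite family of $(\rho_j,\mu(\rho_j))$-competitive DPM experts obtained from Theorem~\ref{thm:ski-UB} plus Lemma~\ref{lem:redDPMSki}, combined with the Blum--Burch theorem applied to the MTS formulation of DPM (diameter $O(\beta_k)$), followed by a net-approximation claim and the chaining $(1+\epsilon')^2\le 1+\epsilon$. The one genuine difference is the discretization step, and your version is simpler. The paper does \emph{not} place a multiplicative net on $\rho$: it chooses net points whose \emph{$\mu$-values} are evenly spaced, $\mu(\rho_i)=(1+i\epsilon_2)\mu(\tilde\rho)$, covering $\rho\in[1,\tilde\rho)$, adds the endpoints $\{1,\tilde\rho,\frac{e}{e-1}\}$, and uses linearity of $\mu$ on $[\tilde\rho,\frac{e}{e-1}]$ to argue that a minimizer in that range is already an endpoint. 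The reason for this more elaborate construction is that the paper rounds the minimizer \emph{down} in $\rho$ (so the $\rho\opt$ term never grows), which forces it to establish the multiplicative bound $\mu(\rho_-)\le(1+\epsilon_2)\mu(\rho)$ on the error term --- and no net on $\rho$ itself can provide that near $\rho=\frac{e}{e-1}$, where $\mu\to 0$. You instead round \emph{up} in $\rho$: since $\mu$ is non-increasing, the $\mu(\rho)\eta$ term can only shrink, and the whole $(1+\epsilon')$ loss is absorbed by the $\rho\opt$ term, which is harmless because $\rho\ge 1$. This choice of rounding direction is precisely what makes your multiplicative (indeed, even an additive) net on $\rho$ suffice and eliminates the paper's two-case analysis. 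Both arguments ultimately rest on the same fact --- that $\mu$ is non-increasing on $[1,\frac{e}{e-1}]$, which the paper also invokes without detailed verification (for the second branch of \eqref{eq:mu} this takes a short calculus computation, parametrizing by $T$). Your remaining bookkeeping (switching between experts at cost at most $\beta_k$ by resetting to the active state, charging each wake-up cost once, $N=O(1/\epsilon)$ and $\log N=O(\log\frac1\epsilon)$) matches the paper's accounting and yields the stated bound.
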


The above theorem gives a competitive ratio arbitrarily close to 
$\min \{\rho + \mu(\rho)\cdot \frac{\eta}{\opt}\}$,
which is equal to $1$ if $\eta = 0$ and never greater than
$\frac{e}{e-1}$. In particular, we achieve a performance that degrades gracefully from near-optimal consistency to near-optimal robustness as the error increases.\footnote{At first glance, our consistency and robustness might seem to contradict the lower bound of \citet{WeiZ20} for ski rental. However, \cite{WeiZ20} crucially uses $c=0$ in the definition of competitiveness for ski rental.} See Figure~\ref{fig:mu_r} (right) for an illustration. 

In Section~\ref{sec:exp}, we illustrate the performance of these algorithms by simulations on synthetic datasets, where the dependence on the prediction error can be observed as expected from theoretical results.

\subsection{Related work} 
\label{sec:related_work}
\paragraph{Learning-Augmented Algorithms.}
Learning augmented algorithms have been a very active area of research since the seminal paper of Lykouris and Vassilvitskii~\cite{LykourisV18}. We direct the interested reader to a survey~\cite{MitzenmacherV20} by Mitzenmacher and Vassilvitskii, as well as \cite{AntoniadisGKK20-secretaries,secretaries-dutting, LykourisV18, AntoniadisCE0S20, Rohatgi20, AlexWei20, double-coverage, LattanziLMV20} for recent results on secretary problems, paging, $k$-server as well as scheduling problems. In the following we survey some results in the area more closely related to our work.

The ski rental problem has already been studied within the context of learning augmented algorithms.
Here, the main objective was to optimize the tradeoff between consistency and robustness (performance on perfect predictions and worst-case performance). The first results are due to Purohit et al.~\cite{PurohitSK18} who propose a deterministic and a randomized algorithm. A hyperparameter allows to choose a prescribed consistency $\rho$ and leads to a corresponding robustness. They also present a linear dependency on the error: their randomized algorithm is $(\rho,\rho)$-competitive for $\rho\ge 1$, with larger $\rho$ allowing for better robustness. Note that such a guarantee of $(\rho,\rho)$-competitiveness is not valuable in our model where we do not focus on robustness as blindly following the predictions leads to a $(1,1)$-competitive algorithm. Wei and Zhang~\cite{WeiZ20} show that the consistency / robustness tradeoff achieved by the randomized algorithm of \cite{PurohitSK18} is Pareto-optimal. Angelopoulos et al.~\cite{Angelopoulos19} propose a deterministic algorithm achieving a Pareto-optimal consistency / robustness tradeoff, but with no additional guarantee when the error is small. Interestingly, these algorithms not focusing on $(\rho,\mu)$-competitiveness are naturally monotone, so easily extend to DPM by Lemma~\ref{lem:redDPMSki}, contrarily to the tight algorithm we present in this paper. Nevertheless, experimental data (Section~\ref{sec:exp}) seem to indicate that our algorithm optimizing $(\rho,\mu)$-competitiveness for ski rental leads to better algorithms for DPM.
A variant with multiple predictions was also studied in~\cite{GollapudiP19}.

As we will see in Section~\ref{section:BlumBurch}, DPM can be cast as a problem from the class of \emph{Metrical Task Systems (MTS)}. Antoniadis et al.~\cite{AntoniadisCE0S20} gave a learning-augmented algorithm for MTS that can be interpreted as $(1,4)$-competitive within their prediction setup.

A different problem related to energy conservation is the classical online speed scaling problem, which was recently studied in the learning-augmented setting by Bamas et al.~\cite{BamasMRS20}.

\paragraph{DPM.}
The equivalence between $2$-state DPM and ski rental is mentioned in \cite{Phillips-Westbrook}. Therefore the well-known $2$-competitive deterministic and an $e/(e-1)$-competitive randomized algorithm~\cite{KarlinMMO94} for the classical ski rental problem carry over to $2$-state DPM, and these bounds are known to be tight.

Irani et al.~\cite{IraniSG03} present an extension of the $2$-competitive algorithm for two-state DPM to multi-state DPM that also achieves a competitive ratio of $2$. Furthermore they give an $e/(e-1)$-competitive algorithm for the case that the lengths of the idle periods come from a fixed probability distribution.

Lotker et al.~\cite{LotkerPR12} consider what they call \emph{multi-slope ski rental} which is equivalent to the DPM problem. Among other results, they show how to reduce a $(k+1)$-slope ski rental instance to $k$ classical ski rental instances. The reduction from DPM to ski rental presented in this paper is similar, but more general in order to also be applicable in the presence of predictions with the introduced $(\rho,\mu)$-competitiveness. They furthermore show how to compute the best possible randomized strategy for any instance of the problem.

There have been several previous approaches that try to predict the length of an idle interval (see, e.g.,~\cite{ChungBBLM02,IraniSG03}, and the survey of Benini et al.~\cite{BeniniBM00}).
However, the proposed approaches to use these predictions are not robust against a potentially high prediction error.

\citet{AugustineIS08} investigate a problem generalizing DPM where transition cost is paid for going to a deeper sleep state rather than waking up and these transition costs may be non-additive (i.e., it can be cheaper to skip states).
Albers~\cite{Albers19} studies the offline version of the problem with multiple, parallel devices and shows that it can be solved in polynomial time.

Irani et al.~\cite{IraniSG07} introduced a $2$-state problem where jobs that need to be processed have a release-time, a deadline and a required processing time. This gives further flexibility to the system to schedule the jobs and create periods of inactivity so as to maximize the energy-savings by transitioning to the sleep state. For the offline version, there is an exact polynomial-time algorithm due to Baptiste et al.~\cite{BaptisteCD12}. Recently, a $3$-approximation algorithm for the multiprocessor-case was developed~\cite{AntoniadisGKK20}.

Another related problem consists of deciding which components of a data-center should be powered on or off in order to process the current load on the set of active components (see, e.g.,~\cite{AlbersQ18}). A similar problem, where jobs have individual processing times for each machine, was studied in \cite{KhullerLS10,LiK11}. Helmbold et al.~\cite{helmbold96} considered the problem of spinning down the disk of a mobile computer when idle times are expected, which is another instance of DPM.

Several surveys cover DPM, see for example~\cite{BeniniBM00,Albers10,IraniP05}.

\section{New algorithm for ski rental}
\label{sec:ski}

For $\rho\in[1,e/(e-1)]$,
we present a $(\rho,\mu(\rho))$-competitive algorithm for (learning augmented) ski rental, proving Theorem~\ref{thm:ski-UB}. The next lemma shows that it suffices to give such an algorithm for $\alpha=\beta=1$.

\begin{lemma}
  \label{lem:unit-costs}
  An algorithm $\alg'$ that is $(\rho,\mu)$-competitive for instances of the ski rental problem with $\alpha = \beta = 1$ implies a $(\rho,\mu)$-competitive algorithm $\alg$ for arbitrary $\alpha,\beta >0$.
\end{lemma}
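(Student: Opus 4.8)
The plan is to show that any ski rental instance with arbitrary costs $\alpha,\beta>0$ can be transformed, by a simple rescaling of time, into an equivalent instance with $\alpha=\beta=1$, in such a way that the $(\rho,\mu)$-competitiveness guarantee is preserved exactly. First I would set up the scaling: given an instance $I$ with rental cost $\alpha$, buying cost $\beta$, season length $\ell$, and prediction $\tau$, define the scaled instance $I'$ with unit costs $\alpha'=\beta'=1$, scaled length $\ell'=\alpha\ell/\beta$, and scaled prediction $\tau'=\alpha\tau/\beta$. The intuition is that in a unit-cost instance the buying time is normalized to $1$, so one ``ski day'' of the original instance (which costs $\alpha$ out of a buy threshold of $\beta$) should correspond to a duration of $\alpha/\beta$ in the normalized instance.

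The key steps are then as follows. I would first check that $\opt$ scales by exactly the factor $\beta$: in the original instance $\opt(I)=\min\{\alpha\ell,\beta\}$, while in the scaled instance $\opt(I')=\min\{\ell',1\}=\min\{\alpha\ell/\beta,1\}$, and multiplying the latter by $\beta$ recovers $\opt(I)$. Next I would verify that the prediction error scales the same way: by definition $\eta=\alpha|\tau-\ell|$ for $I$, whereas for $I'$ the rental cost is $\alpha'=1$, so $\eta'=|\tau'-\ell'|=(\alpha/\beta)|\tau-\ell|$, which again equals $\eta/\beta$. Finally I would describe the algorithm $\alg$ for general costs: on input $I$, it feeds the scaled prediction $\tau'$ to $\alg'$, runs $\alg'$ on the unit-cost timeline, and translates $\alg'$'s buying decision back by the inverse time-scaling (i.e., if $\alg'$ buys at normalized time $t'$, then $\alg$ buys at real time $t=\beta t'/\alpha$). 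Because the two timelines are in exact linear correspondence and every cost component of $I$ is $\beta$ times the corresponding component of $I'$, we get $\cost(\alg \text{ on } I)=\beta\cdot\cost(\alg' \text{ on } I')$.

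Combining these, from $(\rho,\mu)$-competitiveness of $\alg'$ on $I'$, namely $\cost(\alg',I')\le \rho\,\opt(I')+\mu\,\eta'$, I would multiply through by $\beta$ to obtain $\cost(\alg,I)=\beta\,\cost(\alg',I')\le \rho\,\beta\,\opt(I')+\mu\,\beta\,\eta'=\rho\,\opt(I)+\mu\,\eta$, which is exactly the desired guarantee for $\alg$. I do not anticipate a genuine obstacle here, as the statement is essentially a change-of-units argument; the only point requiring a little care is to confirm that the prediction error $\eta=\alpha_0|\tau-\ell|$ uses the rental cost $\alpha=\alpha_0$ as its multiplier, so that $\eta$ scales by the same factor $\beta$ as $\opt$ rather than by some different factor—this is what makes the ratio $\eta/\opt$ scale-invariant and lets the single inequality pass through cleanly.
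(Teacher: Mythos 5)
Your proposal is correct and follows essentially the same argument as the paper's proof: the same time rescaling by $\alpha/\beta$ (prediction $\tau'=\frac{\alpha}{\beta}\tau$, season $\ell'=\frac{\alpha}{\beta}\ell$, buy time mapped back by $t=\frac{\beta}{\alpha}t'$), the same cost identity $\cost(\alg)=\beta\cdot\cost(\alg')$, and the same observation that both $\opt$ and $\eta$ scale by the factor $\beta$, so the competitiveness inequality passes through unchanged. No gaps.
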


\begin{proof}
Given $(\rho,\mu)$-competitive algorithm $\alg'$
for unit buying and rental costs, we define
an algorithm $\alg$ for an instance with
renting cost $\alpha$ and buying cost $\beta$ as follows:
Given a prediction $\tau$, simulate \alg' with
prediction $\frac\alpha\beta \tau$. If $\alg'$ buys at time
$t'$, then \alg buys at time $t = \frac\beta\alpha t'$.

Let $c(\ell)$ and $c'(\ell')$ denote the total expected costs of $\alg$ and
$\alg'$, respectively, with corresponding lengths of the ski season
$\ell$ and $\ell'=\frac\alpha\beta\ell$.
First, note that $c(\ell) = \beta c'(\ell')$:
If $\alg'$ buys before
$\ell'$, incurring cost $1$, then $\alg$ also buys, incurring cost $\beta$.
Similarly, if $\alg'$ rents for time $t' \leq \ell'$, paying cost $t'$,
then $\alg$ rents for time $t= \frac\beta\alpha t'$, paying cost
$\alpha t = \beta t'$.
Since $\alg'$ is $(\rho,\mu)$-competitive, we have
  \begin{align*}
  \textstyle
    c(\ell) = \beta c'(\ell') \le \beta \big(\rho\min\{ \frac\alpha\beta\ell, 1\}
		+ \mu |\frac\alpha\beta \tau - \frac\alpha\beta \ell| \big)
	= \rho\opt + \mu\eta,
  \end{align*}
  where $\opt = \min\{\alpha \ell, \beta\}$ is the cost of the offline optimum
and $\eta = \alpha|\tau-\ell|$ is the prediction error of the original instance.
\end{proof}

A key difference between proving $\rho$-competitiveness in the classical online setting and $(\rho,\mu)$-competitiveness in the learning-augmented setting is the following. In the online setting without predictions, the optimal competitive ratio of $\rho=e/(e-1)$ is achieved by the greedy algorithm that at all times buys with the probability that keeps \eqref{eq:compClassical} (for $c=0$) tight assuming the skiing season ends immediately after the current time. This crucially relies on the fact that the upper bound $\rho\cdot\opt$ in \eqref{eq:compClassical} is monotone (non-decreasing) and concave as a function of the length of the ski season. Neither monotonicity (if $\tau>1$) nor concavity (regardless of $\tau$) are satisfied for the upper bound $\rho\cdot\opt+\mu\cdot\eta$ in \eqref{eq:rhomu}, which substantially complicates the description and especially the analysis of our algorithm. In particular, we will use the value of the prediction $\tau$ to determine the times when we will aim for \eqref{eq:rhomu} to be tight.

\subsection{Description of the algorithm}\label{sec:algDescription}
We next describe our randomized algorithm for instances with $\alpha=\beta=1$, which can then be used to solve arbitrary ski rental instances using Lemma~\ref{lem:unit-costs}. 
Our algorithm is fully specified by the cumulative distribution function (CDF) $F_{\tau}$ of the time when the algorithm buys skis. The algorithm then draws a $p\in [0,1]$ uniformly at random and buys at the earliest time $t\in[0,\infty)$ such that $F_{\tau}(t)\ge p$.
The CDF $F_\tau$ will depend on the given prediction $\tau\geq 0$ as well as the
fixed $\rho$ and $\mu$, which can be chosen as $\mu = \mu(\rho)$, see
Equation~\eqref{eq:mu}.
Depending on the precise value of $\tau$, it might also be possible to choose a
smaller $\mu$ as we show later in Section~\ref{sec:full-up-bound}.

\paragraph{Definition of the CDF (see Figure~\ref{fig:cdf})} We denote by $P_0$ the probability of buying at time $0$ and,
for any $t>0$, we denote by $p_t$ the probability density of buying at
time $t$, so that the probability that the algorithm buys by time $x$
can be expressed as
\[ F_{\tau}(x) = P_0 + \int_0^x p_t dt. \]
For convenience, we also specify the probability
$P_\infty = 1 - (P_0 + \int_0^\infty p_t dt)$ of never buying.

To define $P_0$ and $p_t$, we distinguish three cases depending on the value of the prediction $\tau$.
Note that we always have $0\le\mu\le1\le\rho\le\frac{e}{e-1}$. 

\begin{figure}
\centering
\scalebox{0.9}
{
\input{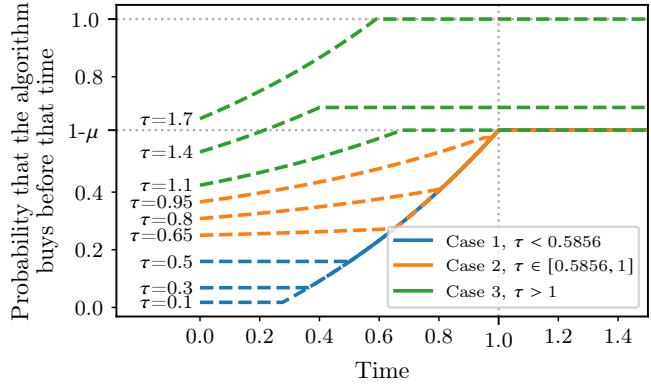}
}
\vspace{-2.5ex}
\caption{Our $(\rho,\mu)$-competitive ski rental algorithm for $\rho=\tilde\rho\approx 1.1596$ and $\mu=\mu(\tilde\rho)\approx0.3852$. The figure presents the cumulative distribution functions of the time of buying for several prediction values $\tau$. Here $\alpha=\beta=1$, i.e., at time $t=1$ buying and renting has equal costs.}
\label{fig:cdf}
\end{figure}

\paragraph{Case 1: $\boldsymbol{\mu\tau < \mu - \rho+1}$.}
We choose
\begin{align*}
P_0&=\frac{\tau(\rho-1)}{1-\tau} ,&
p_t&=\begin{cases}
\rho e^{t-1}&\text{for } t\in (b, 1]\\
0&\text{otherwise}
\end{cases}, &
P_\infty&= \min\{ \mu, 1-P_0\},
\end{align*}
where $b\in [\tau,1]$ is chosen such that
$P_0 + P_\infty + \int_{b}^1 \rho e^{t-1} dt = 1$, in order to have the sum of
probabilities equal to 1. Note that if $P_0 \geq 1-\mu$, we have $b = 1$ and
$p_t = 0$ for all $t>0$.

\paragraph{Case 2: $\boldsymbol{\mu - \rho+1 \leq \mu \tau}$ and $\boldsymbol{\tau \leq 1}$.}
We choose
\begin{align*}
P_0&= \mu\tau, &
p_t&=\begin{cases}
(\mu\tau + \rho - \mu - 1)e^t & \text{for }t\le a\\
\rho e^{t-1}&\text{for } t\in (b, 1]\\
0&\text{otherwise}
\end{cases}, &
P_\infty&= \min\{\mu, 1- P_0\},
\end{align*}
where $a \in [0,\tau]$ is chosen maximal such that
$P_0 + P_\infty+\int_{0}^a (\mu\tau+\rho-\mu-1)e^t dt \leq 1$,
and $b \in [\tau, 1]$ is chosen so that
$P_0 + P_\infty + \int_{0}^a (\mu\tau+\rho-\mu-1)e^t dt
	+ \int_{b}^1 \rho e^{t-1} dt = 1$
in order to have the sum of probabilities equal to 1.
In case $\rho=\frac{e}{e-1}$, we have $\mu=0$ and
$(\mu\tau +\rho - \mu -1)e^t = (\rho-1)e^t=\rho e^{t-1}$, recovering the classical
online algorithm of \citet{KarlinMMO94}.

\paragraph{Case 3: $\boldsymbol{\tau > 1}$.} If $\mu\tau \geq 1$, we buy at time 0.
Otherwise, we choose
\begin{align*}
P_0&=\mu \tau,  &
p_t&=\begin{cases}
(\mu \tau+\rho-\mu-1)e^t\quad&\text{if }t\le T\\
0&\text{if }t>T
\end{cases}, &
P_\infty&= \rho-\mu - (\mu \tau+\rho-\mu-1)e^T,
\end{align*}
where $T$ is the number closest to $\tau-1$ that satisfies
\begin{align}
&e^T\le \frac{\rho-\mu}{\mu \tau+\rho-\mu-1}&&\text{(equivalently $P_\infty\ge 0$)}\label{eq:T_UB}\\
&e^T\ge \frac{\rho-2\mu}{\mu \tau+\rho-\mu-1}&&\text{(equivalently $P_\infty\le
\mu$).}\label{eq:T_LB}
\end{align}
Thus, either $T=\tau-1$ if this choice satisfies both bounds, or $T$ is
at an endpoint of the feasible interval prescribed by \eqref{eq:T_UB} and
\eqref{eq:T_LB}.

\subsection{Sketch of the analysis}

We expose the main ideas of the analysis in this section.
Full analysis and the proof of Theorem~\ref{thm:ski-UB} can be found in
Section~\ref{sec:full-up-bound}.

Our algorithm is $(\rho, \mu)$-competitive if and only if for all $x\geq 0$ we have
\begin{equation}
\label{eq:compIntegralSketch}
\cost(x):=P_0+\int_0^x(1+t)\,p_t dt+\int_x^\infty x\,p_t dt + xP_\infty\le
\rho\min\{x,1\}+\mu|\tau-x|,
\end{equation}
where $\cost(x)$ denotes the expected cost of the algorithm in the case when
$\ell = x$: 
If we intend to buy at some time $t$ and $t<x$, we pay $1+t$,
otherwise we pay $x$. On the right hand side, $\min\{x,1\}$ is the optimal
cost and $|\tau-x|$ is the prediction error,
assuming $\alpha = \beta = 1$.

We first sketch the analysis for Case~2, and then discuss the differences in Case~1. These cases are relatively simple. Case~3 is far more involved and we will only sketch the ideas.

\paragraph{Case 2:}
For the algorithm to be well defined, we need to choose $\mu$
such that a suitable $b\in[\tau,1]$ exists.
For $\mu = \mu(\rho)$, this is ensured by the
inequality $\mu(\rho)\ge \frac{1-\rho\frac{e-1}{e}}{\ln 2}$ from
the definition of $\mu(\rho)$, which implies existence of such $b$ for any value
of $\tau$.
If $\tau = \ln 2$, then $\mu = \mu(\rho)$ is in fact the smallest possible,
allowing only $b = \tau$.
For other values of $\tau$, suitable $b$ exists also for smaller
values of $\mu$.
We now show that \eqref{eq:compIntegralSketch} is satisfied.

Note that \eqref{eq:compIntegralSketch} is tight for $x = 0$, with both sides equal to
$\mu \tau$. To obtain  \eqref{eq:compIntegralSketch} for all $x>0$, it suffices to show that the derivative of the left-hand side with respect to $x$ is at most the derivative of the right-hand side (where derivatives exist). For $x\in(0,\infty)\setminus\{a,b,1\}$, we have
\begin{align*}
\frac{d}{dx} \cost(x) &= (1+x) p_x + \int_x^\infty p_t dt - xp_x+ P_\infty
= p_x + \int_x^\infty p_t dt+ P_\infty.
\end{align*}
For $x \in(0,a)$ this yields
\[ \frac{d}{dx} \cost(x)
	= p_x + 1 - P_0 - (p_x - p_0)
        = 1 - \mu\tau + (\mu\tau + \rho - \mu - 1) e^0 = \rho - \mu,
\]
which is equal to the derivative of the right-hand side of
\eqref{eq:compIntegralSketch}. For $x\in(a,b)$, $\frac{d}{dx} \cost(x)$ is even smaller because $p_x$ is $0$, and the derivative of the right-hand side of \eqref{eq:compIntegralSketch} is $\rho-\mu$ or $\rho+\mu$. For $x\in (b, 1)$,
\[ \frac{d}{dx} \cost(x) = p_x + \int_x^\infty p_t dt + P_\infty
= p_x + (p_1 - p_x) + P_\infty
= \rho + P_\infty
\le \rho + \mu,
\]
which is equal to the derivative of the right-hand side of
\eqref{eq:compIntegralSketch}. Finally, for $x>1$ we have $\frac{d}{dx} \cost(x) =P_\infty\le \mu$ and the derivative of the right-hand side is also $\mu$.

\paragraph{Case 1:}
The reason we cannot define $p_t$ in the same way as in Case~2 is that $p_t$ would be negative for $t\le a$ (i.e., the algorithm would try to sell skis that it bought at time $0$, which is not allowed). We therefore choose $P_0$ such that \eqref{eq:compIntegralSketch} is tight for $x=\tau$ if we do not buy in the interval $(0,\tau]$. The remainder of the proof of \eqref{eq:compIntegralSketch} is similar to Case~2.
For $\mu = \mu(\rho)$, the existence of $b\in[\tau,1]$ follows from the
inequality $\mu\ge \rho(1-T)e^{-T}$ in the definition of $\mu(\rho)$.
Note that such $\mu$ is the smallest possible for $\tau=1-T$.

\paragraph{Case 3:} The first step in the analysis of Case 3 is to derive an inequality involving $\rho$, $\mu$, $\tau$ and $T$ that is equivalent to the algorithm being $(\rho,\mu)$-competitive. Denoting by $\mu_\tau(\rho)$ the minimal $\mu$ satisfying this inequality, it suffices to show that $\mu_\tau(\rho)\le \mu(\rho)$ for all $\tau>1$. However, the difficulty is that no closed-form expression for $\mu_\tau(\rho)$ exists. However, we are still able to show that $\tau\mapsto \mu_\tau(\rho)$ can have a local maximum only if $T=\tau-1$, and therefore $\sup_{\tau>1}\mu_\tau(\rho)$ is achieved either for $\tau\to1$ or when $T=\tau-1$. This allows us to eliminate $\tau$ from the aforementioned inequality, and we can then show that $\mu=\mu(\rho)$ satisfies the remaining inequality (with tightness occurring for $\rho\le \tilde\rho$ and $\tau=T+1$).

\subsection{Generalization and complete analysis}

\label{sec:full-up-bound}

We will show a stronger result than that of Theorem~\ref{thm:ski-UB} by considering a generalization of our algorithm to additional values of $\mu$.
For fixed $\rho\in\left[1,\frac{e}{e-1}\right]$ and a prediction $\tau\ge 0$, let $\mu_\tau(\rho)$ denote the best value of $\mu$ such that a $(\rho,\mu)$-competitive algorithm exists for instances with this particular prediction $\tau$. We will allow any $\mu\in[\mu_\tau(\rho),1]$ and show that the resulting algorithm is $(\rho,\mu)$-competitive algorithm for instances with prediction $\tau$. As this will include the value $\mu=\mu(\rho)$, Theorem~\ref{thm:ski-UB} will follow.

The benefit of this more general algorithm is that it achieves two goals at once: On the one hand, choosing $\mu=\mu_\tau(\rho)$ yields the best possible guarantee for any given prediction $\tau$. On the other hand, choosing $\mu=\mu(\rho)$ independently of $\tau$, we achieve an algorithm that satisfies the monotonicity property required by our reduction from DPM to ski rental (Section~\ref{sec:DPM-1}). Although a more specialized algorithm description only for the case $\mu=\mu_\tau(\rho)$ would suffice to prove Theorem~\ref{thm:ski-UB} (since $\mu_\tau(\rho)\le \mu(\rho)$), we remark that this algorithm would fail to be monotone, see Figure~\ref{fig:cdfs}.

\begin{figure}
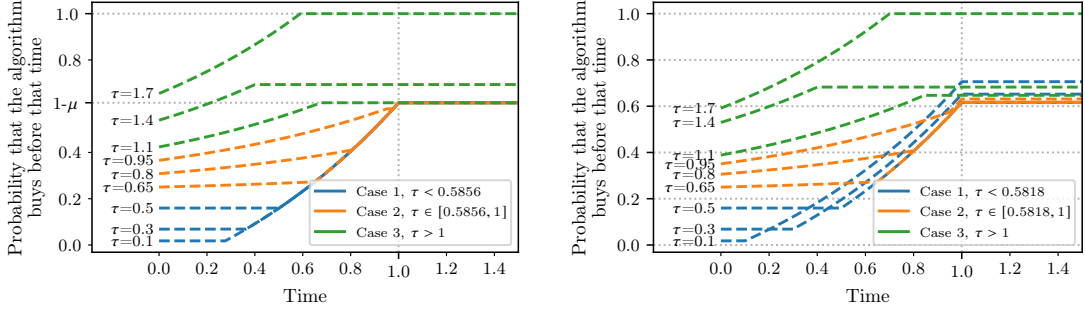

\centering
\resizebox{\linewidth}{!}
{
\input{fig/cdf.pgf}
\input{fig/cdf_mu_tau.pgf}
}
\vspace{-2.5ex}
\caption{Our $(\rho,\mu)$-competitive ski rental algorithm for $\rho=\tilde\rho\approx 1.1596$ and a) $\mu=\mu(\tilde\rho)\approx0.3852$ on the left and b) $\mu = \mu_\tau(\tilde\rho)$ on the right. The figures present the cumulative distribution functions of the time of buying for several prediction values $\tau$.}
\label{fig:cdfs}
\end{figure}

The definition of the more general algorithm is verbatim as in Section~\ref{sec:algDescription}, except we no longer fix $\mu=\mu(\rho)$ but instead consider arbitrary $\mu\in[ \mu_\tau(\rho),1]$ for $\mu_\tau(\rho)$ as defined in the next subsection.

\subsubsection{Definition of \texorpdfstring{$\boldsymbol{\mu_\tau  (\rho)}$}{}}
\label{sec:defMuTau}
For $\tau\le 1$, define
\begin{align}\label{eq:mutau}
\mu_\tau(\rho):= \begin{cases}
-\frac{\rho-1}{1- \tau}+\rho e^{ \tau-1}&\text{if }(1-\tau)e^{\tau-1}> 2-\frac{2}{\rho}\\
\frac{e^{ \tau}\left(1-\rho\frac{e-1}{e}\right)}{2-(1- \tau)e^{ \tau}}\qquad&\text{if }(1-\tau)e^{\tau-1}\le 2-\frac{2}{\rho}.
\end{cases}
\end{align}
Note that the denominators in both expressions are strictly positive, since $\tau<1$ in the first case and $e^\tau(1-\tau)\le 1$ for all $\tau\in\R$.

For $\tau>1$, we define $\mu_\tau(\rho)=\min\{\mu\mid (\mu,T)\in M_\tau(\rho)\text{ for some }T\}$, where $M_\tau(\rho)$ is the set of all pairs $(\mu,T)\in\R^2$ satisfying the following constraints:
\begin{align}
& \rho\tau+ (T-\tau)(\mu\tau+\rho-\mu-1)e^T\le \rho\label{eq:Mtau1}\\
&\rho-2\mu\le (\mu \tau+\rho-\mu-1)e^T\le \rho-\mu\label{eq:Mtau2}
\end{align}
Note that $(1/\tau,0)\in M_\tau(\rho)$, and therefore $\mu_\tau(\rho)\le 1/\tau$.

\begin{lemma}\label{lem:mutaumu}
	For each $\rho\in\left[1,\frac{e}{e-1}\right]$ and $\tau\ge 0$, we have $\mu_\tau(\rho)\le \mu(\rho)$.
\end{lemma}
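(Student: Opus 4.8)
\emph{Overall strategy and a key auxiliary function.} I would split along $\tau\le 1$, where $\mu_\tau(\rho)$ is given explicitly by \eqref{eq:mutau}, and $\tau>1$, where it is only implicit via $M_\tau(\rho)$. Both regimes are governed by the single function $F(w):=\rho e^{-w}-\frac{\rho-1}{w}$ on $w>0$. Its derivative $F'(w)=\frac{\rho-1}{w^2}-\rho e^{-w}$ vanishes precisely when $w^2e^{-w}=1-\frac1\rho$; since $w\mapsto w^2e^{-w}$ is increasing on $(0,2)$ and $1-\frac1\rho\in[0,\frac1e]$ for $\rho\in[1,\frac{e}{e-1}]$, the unique critical point in $(0,1]$ is exactly the constant $T$ appearing in the definition of $\mu(\rho)$. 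Using $F(w)\to-\infty$ as $w\to0^+$ and $F(w)\to 0^-$ as $w\to\infty$, a short sign analysis shows this point is the \emph{global} maximiser, and $\rho e^{-T}T^2=\rho-1$ turns its value into $F(T)=\rho(1-T)e^{-T}$, i.e.\ the second term of $\mu(\rho)$. Hence
\[ F(w)\le \rho(1-T)e^{-T}\le \mu(\rho)\qquad\text{for all }w>0, \]
a fact I will use in both regimes.

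\emph{The case $\tau\le1$.} Here I bound the two branches of \eqref{eq:mutau}. The first branch equals $F(1-\tau)$ with $1-\tau\in(0,1)$, so it is $\le\mu(\rho)$ immediately by the displayed fact. For the second branch, the nonnegative factor $1-\rho\frac{e-1}{e}$ cancels and the desired inequality $\mu_\tau(\rho)\le\frac{1-\rho(e-1)/e}{\ln 2}$ reduces to the $\rho$-free statement $e^{\tau}(1+\ln2-\tau)\le 2$. The left-hand side has derivative $e^{\tau}(\ln2-\tau)$, hence is maximised at $\tau=\ln 2$ with value exactly $2$; this proves the bound and explains the $\ln2$ in the first term of $\mu(\rho)$.

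\emph{The case $\tau>1$: reduction.} Writing $A:=\mu(\tau-1)+\rho-1>0$, I would restate \eqref{eq:Mtau1} as $(\tau-T)Ae^T\ge\rho(\tau-1)$ (forcing $T<\tau$) and \eqref{eq:Mtau2} as $Ae^T\in[\rho-2\mu,\rho-\mu]$. Solving each for $\mu$, for a fixed $T$ feasibility requires $\mu\ge\mu_{(*)}(T):=\frac{\rho}{(\tau-T)e^T}-\frac{\rho-1}{\tau-1}$ (competitiveness) and $\mu\ge\mu_{\mathrm{low}}(T):=\frac{\rho-(\rho-1)e^T}{(\tau-1)e^T+2}$ (lower half of \eqref{eq:Mtau2}), so that $\mu_\tau(\rho)=\min_T\max\{\mu_{(*)}(T),\mu_{\mathrm{low}}(T),0\}$ over those $T$ also respecting the upper half of \eqref{eq:Mtau2}. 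The decisive elementary fact is that $(\tau-T)e^T$ is maximised at $T=\tau-1$ (its $T$-derivative is $e^T(\tau-1-T)$), so $\mu_{(*)}$ attains its minimum there with $\mu_{(*)}(\tau-1)=F(\tau-1)$. Consequently, whenever $T=\tau-1$ is admissible and dominant—i.e.\ $\mu_{\mathrm{low}}(\tau-1)\le F(\tau-1)$—the minimum is attained at $T=\tau-1$ and $\mu_\tau(\rho)=F(\tau-1)\le\mu(\rho)$ by the displayed fact.

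\emph{The case $\tau>1$: the hard regime and the obstacle.} What remains is the regime where \eqref{eq:Mtau2} is active and forces the optimal $T$ strictly away from $\tau-1$ (there $T$ solves $\mu_{(*)}(T)=\mu_{\mathrm{low}}(T)$, with no closed form). Following the sketch, the plan is to prove the structural statement that $\tau\mapsto\mu_\tau(\rho)$ can have a local maximum only at $\tau$ for which the optimal $T$ equals $\tau-1$; equivalently, on every maximal $\tau$-interval on which the optimum lies on the boundary of \eqref{eq:Mtau2}, $\mu_\tau(\rho)$ is monotone. Granting this, every local maximum has value $F(\tau-1)\le\mu(\rho)$, while the only remaining candidate for the supremum is $\tau\to1^+$, which by continuity of the construction equals the ($\tau\le1$, already bounded) value of the second branch of \eqref{eq:mutau} at $\tau=1$; therefore $\sup_{\tau>1}\mu_\tau(\rho)\le\mu(\rho)$. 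I expect the monotonicity claim to be the main obstacle: since $\mu_\tau(\rho)$ is only implicit here, one must differentiate through the defining relation $\mu_{(*)}(T)=\mu_{\mathrm{low}}(T)$—an envelope/implicit-function computation—and show $\frac{d}{d\tau}\mu_\tau(\rho)$ never passes from negative to positive except at a point with $T=\tau-1$. The side conditions I have suppressed ($\mu\ge0$, the upper bound in \eqref{eq:Mtau2}, and $\mu_\tau(\rho)\le1/\tau$ coming from $(1/\tau,0)\in M_\tau(\rho)$) only remove parts of the feasible range and cannot manufacture new local maxima, so they do not affect the argument.
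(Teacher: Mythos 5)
Your treatment of $\tau\le 1$ is correct and coincides with the paper's own argument: the first branch of \eqref{eq:mutau} is handled by substituting $T=1-\tau$ and maximizing $\rho e^{-T}-\frac{\rho-1}{T}$, the second by reducing to $e^{\tau}(1+\ln 2-\tau)\le 2$ with tightness at $\tau=\ln 2$. Your auxiliary function $F(w)=\rho e^{-w}-\frac{\rho-1}{w}$, whose maximum $\rho(1-T)e^{-T}$ is attained at the $T$ from \eqref{eq:mu}, is exactly how the paper concludes Lemma~\ref{lem:muBoundCase3}; likewise your $\mu_{(*)}(T)=\frac{\rho}{(\tau-T)e^T}-\frac{\rho-1}{\tau-1}$ is the paper's \eqref{eq:muFge1}, and the structural statement you aim for (local maxima of $\tau\mapsto\mu_\tau(\rho)$ only where the optimal $T$ equals $\tau-1$) is precisely the paper's Lemma~\ref{lem:localMax}. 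So the plan is the paper's plan — but two parts of it are not actually carried out, and one of them is dismissed with an argument that is wrong.

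First, the structural statement is the entire difficulty of the case $\tau>1$, and you leave it as an expectation rather than a proof. In the paper it requires Lemma~\ref{lem:rhoTight} (constraint \eqref{eq:Mtau1} is tight at $\mu=\mu_\tau$), Lemma~\ref{lem:costDecr} (monotonicity of $\cost(\tau,\mu)$ in $\mu$), and a sign analysis of $\frac{d}{d\tau}\cost(\tau,\mu^*)$ along a tight constraint; none of this is reproduced. Second, and more concretely, your claim that the upper half of \eqref{eq:Mtau2} ``only removes parts of the feasible range and cannot manufacture new local maxima'' is backwards: $\mu_\tau(\rho)$ is a \emph{minimum}, so discarding feasible pairs can only increase it, and this constraint genuinely binds for large $\tau$. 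Writing $s=\tau-1$ and solving the upper half of \eqref{eq:Mtau2} for $\mu$, namely $\mu_{\mathrm{up}}(T)=\frac{\rho-(\rho-1)e^T}{(\tau-1)e^T+1}$, one computes
\begin{equation*}
F(s)-\mu_{\mathrm{up}}(s)=\frac{\rho\, s\bigl(s+e^{-s}-1\bigr)-(\rho-1)}{s\,(se^{s}+1)},
\end{equation*}
which is positive whenever $s(s+e^{-s}-1)>1-\frac1\rho$; since the left side exceeds $1/e\ge 1-\frac1\rho$ for all $s\ge 1$, the pair $(F(\tau-1),\tau-1)$ is infeasible for every $\tau>2$. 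Hence your ``admissible and dominant'' case never applies there, the optimum instead lies on the curve $\mu_{(*)}=\mu_{\mathrm{up}}$ (the paper's $P_\infty=0$ regime, the rightmost region in Figure~\ref{fig:1localMax}), and your envelope computation, which you restrict to the curve $\mu_{(*)}=\mu_{\mathrm{low}}$, never sees it. The analysis must be carried out on \emph{both} boundary curves; the paper does exactly this in Lemma~\ref{lem:localMax}, where the index $i\in\{1,2\}$ treats the tight-\eqref{eq:T_UB} and tight-\eqref{eq:T_LB} cases uniformly. A smaller gap of the same kind: identifying $\lim_{\tau\to 1^+}\mu_\tau(\rho)$ with the $\tau=1$ value of the second branch of \eqref{eq:mutau} is asserted ``by continuity of the construction,'' but the two definitions have different forms, and the paper has to prove this limit by showing that \eqref{eq:T_LB} is tight and $T\to 1$ as $\tau\to 1^+$.
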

\begin{proof}[Proof sketch]
	The two cases for $\tau\le 1$ are simple: If $(1-\tau)e^{\tau-1}> 2-\frac{2}{\rho}$, the upper bound can be found by substituting $T=1-\tau$ and computing the derivative with respect to $T$. If $(1-\tau)e^{\tau-1}\le 2-\frac{2}{\rho}$, the upper bound can be found by taking derivatives with respect to $\tau$. This case is tight if $\tau=\ln 2$. The proof for the case $\tau>1$ is more involved and given in Section~\ref{sec:boundingMu3}.
\end{proof}

For $\rho\in\left[1,\frac{e}{e-1}\right]$, $\tau\ge 0$ and $\mu\in[\mu_\tau(\rho),1]$, we denote by $\alg_{\rho,\mu,\tau}$ the algorithm described in Section~\ref{sec:algDescription} for ski rental instances with prediction $\tau$, except we no longer require $\mu=\mu(\rho)$. Our goal is to prove the following theorem, which combined with Lemma~\ref{lem:mutaumu} yields Theorem~\ref{thm:ski-UB}.
\begin{theorem}
For any $\rho\in\left[1,\frac{e}{e-1}\right]$, $\tau\ge 0$ and $\mu\in[\mu_\tau(\rho),1]$, algorithm $\alg_{\rho,\mu,\tau}$ is well-defined and $(\rho,\mu)$-competitive for ski rental instances with prediction $\tau$.
\end{theorem}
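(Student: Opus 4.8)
The plan is to reduce, via Lemma~\ref{lem:unit-costs}, to the unit-cost case $\alpha=\beta=1$, and then to verify the pointwise inequality \eqref{eq:compIntegralSketch} for every $x\ge 0$, which is exactly equivalent to $(\rho,\mu)$-competitiveness. The key structural observation is that both sides of \eqref{eq:compIntegralSketch} are continuous and piecewise linear in $x$: the right-hand side has breakpoints at $x=\min\{\tau,1\}$ and $x=\max\{\tau,1\}$ with slopes drawn from $\{\rho-\mu,\rho+\mu,-\mu,\mu\}$, while $\cost(x)$ has breakpoints at the cutoffs $a,b,1$ (Cases~1--2) or $T$ (Case~3), its slope at a smooth point being $p_x+\int_x^\infty p_t\,dt+P_\infty$. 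Since $\cost$ is continuous (it is the integral of a bounded density plus the atoms at $0$ and $\infty$, the former entering $\cost(x)$ as the constant $P_0$ and the latter as $xP_\infty$), there are no downward jumps to exploit; I would establish \eqref{eq:compIntegralSketch} by fixing one anchor point where it holds with equality, checking on each linear piece that the slope of $\cost$ does not exceed the slope of the right-hand side, and confirming the inequality at each breakpoint.

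First I would dispatch well-definedness. In every case one checks $P_0,P_\infty\ge 0$, $p_t\ge 0$, that the masses sum to $1$, and that the auxiliary cutoffs lie in the prescribed ranges. For $\tau\le 1$ (Cases~1--2) the sign of $c:=\mu\tau+\rho-\mu-1$ decides the case, and the hypothesis $\mu\ge\mu_\tau(\rho)$, via the explicit formula \eqref{eq:mutau}, is precisely what guarantees existence of a valid $b\in[\tau,1]$ (and $a\in[0,\tau]$), with larger $\mu$ only making this easier; the remaining positivity and normalization checks are routine. For $\tau>1$ (Case~3), if $\mu\tau\ge 1$ the algorithm buys at time $0$, which is trivially well-defined and for which \eqref{eq:compIntegralSketch} reduces to $1\le\rho\min\{x,1\}+\mu|\tau-x|$, easily checked using $\mu\le 1\le\rho$; otherwise ($\mu\tau<1$) I would show the algorithm's rule --- take $T$ closest to $\tau-1$ subject to \eqref{eq:T_UB}--\eqref{eq:T_LB}, equivalently $0\le P_\infty\le\mu$, which is exactly \eqref{eq:Mtau2} --- produces a value $T\in[0,1]$. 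Indeed $\mu\tau<1$ forces $c<\rho-\mu$, so the upper end of the feasible band is positive (hence $T\ge 0$), while $T\le 1$ follows by combining \eqref{eq:Mtau1} with the right inequality of \eqref{eq:Mtau2}: these yield $(\tau-T)(\rho-\mu)\ge(\tau-T)ce^T\ge\rho(\tau-1)$, which is incompatible with $T\ge 1$ unless $\mu\le 0$, so $T<1$ for $\mu>0$ and $T=1$ only in the degenerate case $\mu=0,\ \rho=\tfrac{e}{e-1}$.

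For competitiveness, Cases~1 and~2 follow the derivative computation already sketched: \eqref{eq:compIntegralSketch} is tight at the anchor ($x=0$ in Case~2, $x=\tau$ in Case~1), the slope of $\cost$ equals $\rho-\mu$ on the first active piece, drops on the dead interval $(a,b)$, equals $\rho+P_\infty\le\rho+\mu$ on $(b,1)$, and equals $P_\infty\le\mu$ beyond $1$, each bounded by the corresponding right-hand slope. The crux is Case~3. Using $P_\infty=\rho-\mu-ce^T$ I would compute
\[
\cost(x)=\mu\tau+(\rho-\mu)x\quad\text{on }[0,T],\qquad
\cost(x)=\mu\tau+(\rho-\mu)T+(x-T)P_\infty\quad\text{on }[T,\infty),
\]
so $\cost$ coincides with the right-hand side on $[0,T]\subseteq[0,1]$ (both equal $\mu\tau+(\rho-\mu)x$), has slope $P_\infty\le\rho-\mu$ on $[T,1]$ (since $ce^T\ge 0$), and has slope $P_\infty\le\mu$ on $[\tau,\infty)$ (the left inequality of \eqref{eq:Mtau2}). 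The only delicate stretch is $[1,\tau]$, where $\cost$ increases with slope $P_\infty\ge 0$ while the right-hand side decreases with slope $-\mu$; by linearity it suffices to verify the inequality at the endpoints $x=1$ (already covered) and $x=\tau$, and a direct expansion shows that $\cost(\tau)\le\rho$ is precisely constraint \eqref{eq:Mtau1}.

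The main obstacle is exactly this last point: because the bound in \eqref{eq:rhomu} is neither monotone nor concave (it decreases on $[1,\tau]$), the greedy ``keep \eqref{eq:rhomu} tight'' reasoning driving the classical analysis breaks down and one cannot match derivatives everywhere. The argument instead rests on piecewise linearity, reducing competitiveness to the single nontrivial endpoint inequality $\cost(\tau)\le\rho$, i.e.\ \eqref{eq:Mtau1}, together with the slope bound \eqref{eq:Mtau2}. The genuinely technical step is to show that the algorithm's \emph{specific} choice of $T$ (closest to $\tau-1$ within the band \eqref{eq:Mtau2}) satisfies \eqref{eq:Mtau1} for every $\mu\ge\mu_\tau(\rho)$, not merely for some feasible $T$. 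I would handle this by noting that $h(T):=\rho\tau+(T-\tau)ce^T$ has $h'(T)=ce^T(1+T-\tau)$, so it is minimized at $T=\tau-1$; hence \eqref{eq:Mtau1} is slackest at the unclipped value, and when clipping to a band endpoint is forced I would track how $h$ and the band move with $\mu$ and invoke the definition of $\mu_\tau(\rho)$ as the minimal feasible $\mu$ to conclude that feasibility, once attained at $\mu_\tau(\rho)$, persists throughout $[\mu_\tau(\rho),1]$ (with the transition to the buy-at-$0$ branch at $\mu=1/\tau$). Combining this with Lemma~\ref{lem:mutaumu} then yields Theorem~\ref{thm:ski-UB}.
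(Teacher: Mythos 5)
Your proposal follows essentially the same route as the paper: the same case analysis, the same piecewise-linearity/derivative verification of \eqref{eq:compIntegral} in Cases 1 and 2 with anchors at $x=\tau$ and $x=0$ respectively, the same reduction of Case 3 to the single inequality $\cost(\tau)\le\rho$, i.e.\ constraint \eqref{eq:Mtau1} evaluated at the algorithm's $T$ (the paper's Lemma~\ref{lem:case3Equiv}), and the same unimodality argument --- $h$ decreasing left of $\tau-1$ and increasing right of it --- showing that at $\mu=\mu_\tau(\rho)$ the clipped choice $T(\tau,\mu)$ inherits \eqref{eq:Mtau1} from whatever feasible pair witnesses the definition of $\mu_\tau(\rho)$ (the paper's Lemma~\ref{lem:Tmuletau}). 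Your algebraic derivation of $T\le 1$ from \eqref{eq:Mtau1} and the right half of \eqref{eq:Mtau2} is a harmless variant of the paper's argument via $\cost(\tau)>\cost(1)\ge\rho$.

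However, there is a genuine gap at the step you yourself single out as the crux: extending \eqref{eq:Mtau1} from $\mu=\mu_\tau(\rho)$ to every $\mu\in\bigl(\mu_\tau(\rho),1/\tau\bigr)$. You propose to ``track how $h$ and the band move with $\mu$ and invoke the definition of $\mu_\tau(\rho)$ as the minimal feasible $\mu$''. The invocation does no work: $\mu_\tau(\rho)$ is defined as the least $\mu$ for which \emph{some} $T$ satisfies \eqref{eq:Mtau1}--\eqref{eq:Mtau2}, and nothing in that definition makes the feasible set upward closed in $\mu$ --- upward closedness (and, stronger, feasibility of the algorithm's specific $T(\tau,\mu)$) is exactly what must be proved. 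The paper devotes a dedicated lemma to this (Lemma~\ref{lem:costDecr}): $\cost(\tau,\mu)$, which by \eqref{eq:costtaumu} is your $h$ evaluated at $T(\tau,\mu)$, is a \emph{decreasing} function of $\mu$, so $\cost(\tau,\mu_\tau(\rho))=\rho$ (Lemma~\ref{lem:rhoTight}) propagates to $\cost(\tau,\mu)\le\rho$ for all larger $\mu$. Establishing this monotonicity requires three separate regimes --- $T(\tau,\mu)=\tau-1$ unclipped, where \eqref{eq:costtaumu} and the fact that $\mu\tau+\rho-\mu-1$ increases in $\mu$ (using $\tau>1$) give the claim, and the two clipped regimes where \eqref{eq:T_UB} resp.\ \eqref{eq:T_LB} is tight, where $T(\tau,\mu)$ itself moves with $\mu$ and one must compare buying against rental cost or differentiate explicitly --- and none of these follows from unimodality of $h$ alone. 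Your plan can be completed along these lines, but as written the load-bearing step for all $\mu$ strictly above $\mu_\tau(\rho)$ is asserted, not proved. A smaller gloss of the same kind: in Cases 1--2 the algorithm's case split ($\mu\tau<\mu-\rho+1$ or not) does not coincide with the case split in the definition \eqref{eq:mutau} of $\mu_\tau(\rho)$ (which compares $(1-\tau)e^{\tau-1}$ with $2-\frac{2}{\rho}$), so the claim that ``$\mu\ge\mu_\tau(\rho)$ via the explicit formula guarantees a valid $b\in[\tau,1]$'' needs the reconciliation carried out in the paper's Lemma~\ref{lem:conditionCases}; it is not routine bookkeeping.
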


The following lemma shows that for $\tau\le 1$, we can essentially replace the two conditions in the definition of $\mu_\tau(\rho)$ by the conditions of Case 1 and Case 2 from the definition of our algorithm.
\begin{lemma}\label{lem:conditionCases}
	If $\tau\le 1$ and $\mu\ge\mu_\tau(\rho)$, then
	\begin{align}\label{eq:muIneq}
		\mu\ge \begin{cases}
		-\frac{\rho-1}{1- \tau}+\rho e^{ \tau-1}&\text{if }\mu\tau<\mu-\rho+1\text{ (Case 1)}\\
		\frac{e^{ \tau}\left(1-\rho\frac{e-1}{e}\right)}{2-e^{ \tau}(1- \tau)}\qquad&\text{if }\mu-\rho+1\le \mu\tau\text{ (Case 2)}.
		\end{cases}
	\end{align}
\end{lemma}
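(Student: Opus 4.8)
The plan is to reduce both inequalities of \eqref{eq:muIneq} to a single algebraic comparison, using the fact that the branch condition defining $\mu_\tau(\rho)$ in \eqref{eq:mutau} is \emph{exactly} the condition separating Case~1 from Case~2. Throughout, write $g_1:=\rho e^{\tau-1}-\frac{\rho-1}{1-\tau}$ and $g_2:=\frac{e^{\tau}(1-\rho\frac{e-1}{e})}{2-e^{\tau}(1-\tau)}$ for the two branches in \eqref{eq:mutau}; these are precisely the two right-hand sides appearing in \eqref{eq:muIneq}. Set $K:=\frac{\rho-1}{1-\tau}$ for $\tau<1$. First I would dispose of the endpoint $\tau=1$: there the algorithm's Case~1 condition $\mu\tau<\mu-\rho+1$ reads $0<1-\rho$, which is impossible for $\rho\ge 1$, so only Case~2 occurs; moreover $(1-\tau)e^{\tau-1}=0\le 2-\frac2\rho$, so $\mu_\tau(\rho)=g_2$ and the hypothesis $\mu\ge\mu_\tau(\rho)$ is exactly the required bound. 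Hence I assume $\tau<1$ from now on.

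The next step is to rewrite the algorithmic case conditions in terms of $K$. The inequality $\mu\tau<\mu-\rho+1$ is equivalent to $\mu(\tau-1)<1-\rho$, and dividing by the negative quantity $\tau-1$ flips the inequality to $\mu>K$; thus Case~1 is $\mu>K$ and Case~2 is $\mu\le K$ (the boundary $\mu=K$ indeed satisfies $\mu\tau=\mu-\rho+1$, placing it in Case~2, consistent with the ``$\le$'' in the definition).

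The crux is the following equivalence, which I expect to be the only non-routine ingredient: the first branch condition of \eqref{eq:mutau} coincides with comparing $g_1$ to $K$. Clearing the positive denominator $1-\tau$ gives
\[
g_1>K \iff \rho e^{\tau-1}>\tfrac{2(\rho-1)}{1-\tau} \iff (1-\tau)e^{\tau-1}>2-\tfrac2\rho,
\]
where the last step multiplies by $\frac{1-\tau}{\rho}>0$. Consequently $\mu_\tau(\rho)=g_1$ iff $g_1>K$, and $\mu_\tau(\rho)=g_2$ iff $g_1\le K$.

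With this in hand the conclusion is pure case-checking. In Case~1 ($\mu>K$): if $g_1>K$ then $\mu_\tau(\rho)=g_1$ and $\mu\ge\mu_\tau(\rho)=g_1$ directly; if instead $g_1\le K$ then $\mu>K\ge g_1$, so again $\mu\ge g_1$. In Case~2 ($\mu\le K$): if $g_1\le K$ then $\mu_\tau(\rho)=g_2$ and $\mu\ge\mu_\tau(\rho)=g_2$ as required; the remaining subcase $g_1>K$ cannot co-occur with Case~2, since it would force $\mu_\tau(\rho)=g_1>K\ge\mu$, contradicting $\mu\ge\mu_\tau(\rho)$. The only subtleties to watch are the sign reversals when dividing by $1-\tau$ and $\tau-1$, and the degenerate point $\tau=1$ handled at the outset; everything else is elementary.
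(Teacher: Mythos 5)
Your proof is correct and rests on the same algebraic core as the paper's: the observation that, after multiplying through by the positive quantity $(1-\tau)/\rho$, both the Case~1/Case~2 split and the branch condition in \eqref{eq:mutau} become comparisons against $K=\frac{\rho-1}{1-\tau}$. The paper packages this as a short proof by contradiction in each case rather than your explicit equivalence plus direct $2\times 2$ case check (and it leaves the endpoint $\tau=1$, where Case~1 is vacuous, implicit), but the mathematical content is identical.
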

\begin{proof}
	
	\textit{Case 1} ($\mu\tau<\mu-\rho+1$): Suppose that \eqref{eq:muIneq} fails. Then
	\begin{align*}
		\rho-1< \mu(1-\tau)\le -\rho+1+\rho(1-\tau) e^{\tau-1},
	\end{align*}
	where the first inequality uses the condition of Case 1 and the second inequality uses $\tau\le 1$ and the assumption that \eqref{eq:muIneq} fails. Rearranging, we get
	\begin{align*}
		2-\frac{2}{\rho}<(1-\tau) e^{\tau-1},
	\end{align*}
	but this is precisely the first condition in \eqref{eq:mutau}. Thus, the premise $\mu\ge \mu_\tau(\rho)$ implies inequality \eqref{eq:muIneq}.
	
	\textit{Case 2} ($\mu-\rho+1\le \mu\tau$): If $(1-\tau)e^{\tau-1}\le 2-\frac{2}{\rho}$, then \eqref{eq:muIneq} follows immediately from $\mu\ge \mu_\tau(\rho)$. So suppose $(1-\tau)e^{\tau-1}> 2-\frac{2}{\rho}$. Our goal is to derive a contradiction. We have
	\begin{align*}
	\rho-1\ge \mu(1-\tau)\ge \mu_\tau(\rho)(1-\tau)= -\rho+1+\rho(1-\tau) e^{ \tau-1},
	\end{align*}
	where the first inequality uses the condition of Case 2, the second inequality uses $\tau\le 1$ and $\mu\ge \mu_\tau(\rho)$, and the equation uses our assumption $(1-\tau)e^{\tau-1}> 2-\frac{2}{\rho}$ and the definition of $\mu_\tau(\rho)$. Rearranging, we get
	\begin{align*}
	2-\frac{2}{\rho}\ge (1-\tau) e^{ \tau-1},
	\end{align*}
	yielding the desired contradiction.
\end{proof}

\subsubsection{Condition for competitiveness}
For given $\tau$, the algorithm achieves the desired $(\rho, \mu)$-competitiveness if and only if for all $x\geq 0$ we have
\begin{equation}
\label{eq:compIntegral}
\cost(x):=P_0+\int_0^x(1+t)\,p_t dt+\int_x^\infty x\,p_t dt + xP_\infty\le
\rho\min\{x,1\}+\mu|\tau-x|,
\end{equation}
where $\cost(x)$ denotes the expected cost of the algorithm in the case when
$\ell = x$: 
If we intend to buy at some time $t$ and $t<x$, we pay $1+t$,
otherwise we pay $x$. On the right hand side, $\min\{x,1\}$ is the optimal
cost and $|\tau-x|$ is the prediction error,
assuming $\alpha = \beta = 1$. We denote by $\rhs(x):=\rho\min\{x,1\}+\mu|\tau-x|$ the right hand side of \eqref{eq:compIntegral}.

The following identity will be useful. It holds for almost all $x>0$ (excluding only those $x$ where $p_x$ is discontinuous):
\begin{align}
\frac{d}{dx}\cost(x)&=(1+x)p_x + \int_x^\infty p_t d_t + P_\infty - xp_x\nonumber\\
&= p_x + \int_x^\infty p_t d_t + P_\infty\label{eq:cost'End}\\
&= p_x+1 -P_0-\int_0^x p_t d_t, \label{eq:cost'Start}
\end{align}
where the last equation uses the fact that probabilities add up to $1$.

\subsubsection{Case 1 \texorpdfstring{($\boldsymbol{\mu\tau< \mu-\rho+1}$)}{}}
Note that $\tau<1$ in this case.

\paragraph{Well-definedness.} We need to argue that $P_0\le 1$ and that $b\in[\tau,1]$ as stated in the algorithm description exists. By the condition of Case 1, we have $\rho-1\le \mu(1-\tau)$ and therefore $P_0=\frac{\tau(\rho-1)}{1-\tau}\le \mu\tau\le  1$, where the last inequality uses $\mu\le 1$ and $\tau<1$.

Regarding existence of $b$, if $P_\infty=1-P_0$ then clearly $b=1$. Otherwise, we choose
\begin{align*}
b=1+\ln\frac{\mu+\frac{\rho-1}{1-\tau}}{\rho}.
\end{align*}
Then
\begin{align*}
P_0+P_\infty+\int_b^1\rho e^{t-1} dt&= \frac{\tau(\rho-1)}{1-\tau}+\mu+\rho-\rho e^{b-1}\\
&= \frac{\tau(\rho-1)}{1-\tau}+\rho-\frac{\rho-1}{1-\tau}\\
&= 1,
\end{align*}
as required by the definition of $b$. Note that $b\le 1$ since otherwise the integral would be negative, and hence the left hand side would be less than $1$ since $P_\infty\le 1-P_0$. Moreover, by Lemma~\ref{lem:conditionCases} we have $b\ge \tau$. So indeed $b\in[\tau,1]$, i.e., the algorithm is well-defined in Case 1.

\paragraph{Competitiveness.} We need to show that \eqref{eq:compIntegral} is satisfied. First note that  
\begin{align*}
 \cost(\tau) = P_0 + \int_\tau^\infty \tau p_t dt + \tau P_\infty = P_0 + \tau(1-P_0)
= (1-\tau) \frac{\tau(\rho-1)}{1-\tau} + \tau = \rho \tau,
\end{align*}
making \eqref{eq:compIntegral} tight for $x=\tau$. To conclude \eqref{eq:compIntegral} for all $x$, it suffices to show the following equivalence for all $x\in(0,\tau)\cup(\tau,b)\cup(b,1)\cup(1,\infty)$:
\begin{align}
\frac{d}{dx}\cost(x)\le \frac{d}{dx}\rhs(x)\iff x\ge \tau.\label{eq:equivDeriv}
\end{align}
For $x<b$, from \eqref{eq:cost'Start} we get
\begin{align*}
\frac{d}{dx}\cost(x)&= 1-P_0=1- \frac{\tau(\rho-1)}{1-\tau}=\frac{1-\tau\rho}{1-\tau}.
\end{align*}
If $x<\tau$, using the condition of Case 1 we conclude $\frac{d}{dx}\cost(x)>\rho-\mu=\frac{d}{dx}\rhs(x)$.  For $x\in(\tau,b)$, \eqref{eq:equivDeriv} follows from
\begin{align*}
\frac{d}{dx}\cost(x)&= \frac{1-\tau\rho}{1-\tau}\le \rho+\mu=\frac{d}{dx}\rhs(x).
\end{align*}
For $x\in(b,1)$, \eqref{eq:cost'End} shows
\begin{align*}
\frac{d}{dx} \cost(x) = p_x + \int_x^\infty p_t dt + P_\infty
= p_x + (p_1 - p_x) + P_\infty
= \rho + P_\infty
\le \rho + \mu\le \frac{d}{dx} \rhs(x).
\end{align*}
Finally, for $x>1$ we have $\frac{d}{dx} \cost(x) =P_\infty\le \mu=\frac{d}{dx} \rhs(x)$.

We note the following observation, which will be useful later when we prove the lower bound.
\begin{observation}\label{obs:tight1}
	In Case 1, if $b=\tau$ and $P_\infty=\mu$, then \eqref{eq:compIntegral} is tight for all $x\ge \tau$.
\end{observation}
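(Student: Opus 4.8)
The plan is to revisit the derivative computations from the competitiveness analysis of Case~1 and observe that, under the two stated hypotheses $b=\tau$ and $P_\infty=\mu$, every inequality used there collapses to an equality. Recall that competitiveness was established by first checking $\cost(\tau)=\rho\tau=\rhs(\tau)$, so that \eqref{eq:compIntegral} is tight at the single point $x=\tau$, and then verifying $\frac{d}{dx}\cost(x)\le\frac{d}{dx}\rhs(x)$ for every $x>\tau$ where the derivatives exist. Since both $\cost$ and $\rhs$ are continuous, tightness for all $x\ge\tau$ is equivalent to these derivative inequalities all holding with equality.

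First I would dispose of the interval $(\tau,b)$: when $b=\tau$ this interval is empty, so the only estimate that was genuinely lossy in the general argument (namely $\frac{1-\tau\rho}{1-\tau}\le\rho+\mu$) never enters. On the remaining interval $(b,1)=(\tau,1)$, the computation via \eqref{eq:cost'End} gives $\frac{d}{dx}\cost(x)=\rho+P_\infty$, while $x>\tau$ yields $\frac{d}{dx}\rhs(x)=\rho+\mu$; substituting the hypothesis $P_\infty=\mu$ turns $\rho+P_\infty\le\rho+\mu$ into an equality. Likewise, for $x>1$ we have $\frac{d}{dx}\cost(x)=P_\infty=\mu=\frac{d}{dx}\rhs(x)$, again using $P_\infty=\mu$.

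Having matched derivatives on all of $(\tau,1)\cup(1,\infty)$ and matched values at $x=\tau$, I would conclude by integrating: since $\cost-\rhs$ is continuous on $[\tau,\infty)$, vanishes at $\tau$, and has zero derivative wherever the derivative exists (the only exceptional point being $x=1$, where the kink of $\rhs$ is harmless because both functions are continuous there), it is identically zero on $[\tau,\infty)$. Because the whole argument is a direct specialization of calculations already carried out, there is no real obstacle; the only point requiring a moment of care is continuity at the kink $x=1$, which is immediate from the definitions of $\cost$ and $\rhs$.
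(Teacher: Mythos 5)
Your proposal is correct and is precisely the argument the paper intends: the observation is stated without explicit proof because it follows by specializing the Case~1 derivative computations, exactly as you do — with $b=\tau$ the lossy interval $(\tau,b)$ vanishes, and $P_\infty=\mu$ turns the bounds $\rho+P_\infty\le\rho+\mu$ on $(\tau,1)$ and $P_\infty\le\mu$ on $(1,\infty)$ into equalities, so matching values at $x=\tau$ and matching derivatives elsewhere give tightness on all of $[\tau,\infty)$. Your extra care about continuity at the kink $x=1$ is sound and completes the integration step cleanly.
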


\subsubsection{Case 2 \texorpdfstring{($\boldsymbol{\mu-\rho+1\le \mu\tau}$ and $\boldsymbol{\tau\le 1}$)}{}}
\paragraph{Well-definedness.} The only non-trivial part is to argue about the existence of $b$. If $a<\tau$ or $P_\infty=1-P_0$, then $b=1$. Otherwise, $a=\tau$ and $P_\infty=\mu$ and we choose
\begin{align*}
b=1+\ln\frac{\mu\left(2-(1-\tau)e^\tau\right)+e^\tau(\rho-1)}{\rho}.
\end{align*}
Then
\begin{align*}
	P_0 + &P_\infty + \int_{0}^a (\mu\tau+\rho-\mu-1)e^t dt
	+ \int_{b}^1 \rho e^{t-1} dt \\
	&= \mu\tau+\mu+(\mu\tau+\rho-\mu-1)(e^\tau-1)+\rho-\rho e^{b-1}\\
	&= 2\mu+1+(\mu\tau+\rho-\mu-1)e^\tau-\mu\left(2-(1-\tau)e^\tau\right)-e^\tau(\rho-1)\\
	&= 1
\end{align*}
as required by the definition of $b$. We have $b\le 1$ since otherwise the left hand side would be less than $1$ (by definition of $a$). Moreover, by Lemma~\ref{lem:conditionCases},
\begin{align*}
b\ge 1+\ln\frac{e^{ \tau}\left(1-\rho\frac{e-1}{e}\right)+e^\tau(\rho-1)}{\rho}= \tau.
\end{align*}
So $b\in[\tau,1]$, as required.

\paragraph{Competitiveness.} Note that \eqref{eq:compIntegral} is tight for $x = 0$, with both sides equal to
$\mu \tau$. To obtain  \eqref{eq:compIntegral} for all $x>0$, it suffices to show that $\frac{d}{dx}\cost(x)\le \frac{d}{dx}\rhs(x)$ whenever both derivatives exist.

For $x \in(0,a)$, from \eqref{eq:cost'Start} we get
\[ \frac{d}{dx} \cost(x)
= p_x + 1 - P_0 - \int_0^xp_t dt
= 1 - \mu\tau + (\mu\tau + \rho - \mu - 1) e^0 = \rho - \mu=\frac{d}{dx}\rhs(x).
\]
For $x\in(a,b)$, $\frac{d}{dx} \cost(x)$ is even smaller because $p_x$ is $0$, and $\frac{d}{dx}\rhs(x)\in\{\rho-\mu,\rho+\mu\}$. For $x\in (b, 1)\cup(1,\infty)$, the proof of $\frac{d}{dx}\cost(x)\le \frac{d}{dx}\rhs(x)$ is identical to Case 1.

\begin{observation}\label{obs:tight2}
	In Case 2, if $a=b=\tau$ and $P_\infty=\mu$, then \eqref{eq:compIntegral} is tight for all $x\ge 0$.
\end{observation}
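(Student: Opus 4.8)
The plan is to upgrade the inequality chain already established in the competitiveness analysis of Case~2 into a chain of \emph{equalities}, using the two extra hypotheses $a=b=\tau$ and $P_\infty=\mu$. Recall that in Case~2 we verified \eqref{eq:compIntegral} by showing it is tight at $x=0$ (both sides equal $\mu\tau$) and that $\frac{d}{dx}\cost(x)\le\frac{d}{dx}\rhs(x)$ on every interval where both derivatives exist. Since $\cost$ and $\rhs$ are continuous, to prove tightness for all $x\ge 0$ it suffices to check that each of these derivative inequalities is in fact an equality; the tightness at $x=0$ then propagates to all $x$ by integration, because $\cost(x)-\rhs(x)=\int_0^x\big(\frac{d}{dt}\cost(t)-\frac{d}{dt}\rhs(t)\big)\,dt$.

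First I would note that the hypothesis $a=b=\tau$ collapses the flat middle interval $(a,b)$ of Case~2 to the single point $\tau$, so that $p_t=(\mu\tau+\rho-\mu-1)e^t$ on $(0,\tau)$, $p_t=\rho e^{t-1}$ on $(\tau,1)$, and $p_t=0$ for $t>1$, with $P_0=\mu\tau$ and $P_\infty=\mu$. The only non-differentiable points are $x=\tau$ and $x=1$, which are measure-zero and absorbed by continuity. On $(0,\tau)$ the Case~2 computation already gives $\frac{d}{dx}\cost(x)=\rho-\mu$, and since $x<\tau\le 1$ we have $\rhs(x)=\rho x+\mu(\tau-x)$, so $\frac{d}{dx}\rhs(x)=\rho-\mu$ as well. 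On $(1,\infty)$ we have $p_x=0$, hence $\frac{d}{dx}\cost(x)=P_\infty=\mu$, while $x>1\ge\tau$ gives $\rhs(x)=\rho+\mu(x-\tau)$ and thus $\frac{d}{dx}\rhs(x)=\mu$.

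The step requiring the most care is the interval $(\tau,1)$ (formerly $(b,1)$), which is exactly where the original Case~2 argument carried strict slack via $\frac{d}{dx}\cost(x)=\rho+P_\infty\le\rho+\mu$. Here I would use identity \eqref{eq:cost'End} together with $\int_x^1\rho e^{t-1}\,dt=\rho-\rho e^{x-1}=\rho-p_x$ to obtain $\frac{d}{dx}\cost(x)=p_x+(\rho-p_x)+P_\infty=\rho+P_\infty$, which now equals $\rho+\mu$ \emph{precisely because} $P_\infty=\mu$; on the other side, $x>\tau$ gives $\rhs(x)=\rho x+\mu(x-\tau)$, so $\frac{d}{dx}\rhs(x)=\rho+\mu$. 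This is the heart of the observation: the assumption $P_\infty=\mu$ converts the former inequality into an equality, while $a=b=\tau$ removes the interval on which $\cost'$ was strictly below $\rhs'$. With matching derivatives on $(0,\tau)$, $(\tau,1)$, and $(1,\infty)$, and equality at $x=0$, we conclude $\cost(x)=\rhs(x)$ for all $x\ge 0$. There is no substantial obstacle beyond confirming that under $a=b=\tau$ no interval of strict inequality survives and that the breakpoints $\tau$ and $1$ are handled by continuity.
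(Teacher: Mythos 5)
Your proof is correct and follows essentially the same route as the paper: the paper states this observation without an explicit proof precisely because it follows from the preceding Case~2 analysis, where tightness at $x=0$ plus the derivative comparisons give \eqref{eq:compIntegral}, and your proposal simply makes explicit that the hypotheses $a=b=\tau$ and $P_\infty=\mu$ eliminate the only two sources of slack (the flat interval $(a,b)$ and the inequalities $\rho+P_\infty\le\rho+\mu$ and $P_\infty\le\mu$), so all derivatives match and the equality at $x=0$ propagates by integration.
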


\subsubsection{Case 3 \texorpdfstring{($\boldsymbol{\tau>1}$)}{}}\label{sec:Case3Comp}
The case $\mu\tau\ge 1$ is trivial, so we assume throughout this section that $\mu<1/\tau$.
\paragraph{Well-definedness.} The existence of $T$ follows from $\mu<1/\tau<1$. We only need to check that probabilities add up to $1$:
\begin{align*}
P_0+P_\infty + \int_0^\infty p_t d_t=\mu \tau + \rho-\mu - (\mu \tau+\rho-\mu-1)e^T + (\mu \tau+\rho-\mu-1)(e^T-1)=1,
\end{align*}
where the first equation uses positivity of $T$:
\begin{fact}\label{fact:T0}
	$T> 0$.
\end{fact}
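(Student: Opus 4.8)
The plan is to read off $T$ from its defining description as the projection of the target value $\tau-1$ onto the feasible interval cut out by \eqref{eq:T_UB} and \eqref{eq:T_LB}, and then to show that this projection is strictly positive. Write $D:=\mu\tau+\rho-\mu-1=\mu(\tau-1)+(\rho-1)$. Since $\tau>1$, $\mu\ge 0$ and $\rho\ge 1$, we have $D\ge 0$, and in fact $D>0$: the only way to get $D=0$ is $\mu=0$ together with $\rho=1$, but substituting $\mu=0,\rho=1$ into the feasibility constraint \eqref{eq:Mtau1} (using $\tau>1$) yields the false inequality $\tau\le 1$; hence $(0,T)\notin M_\tau(\rho)$ for all $T$, so $\mu\ge\mu_\tau(\rho)>0$ in that regime and $D>0$ throughout. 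With $D>0$, conditions \eqref{eq:T_UB} and \eqref{eq:T_LB} say exactly that $e^T\in[\tfrac{\rho-2\mu}{D},\tfrac{\rho-\mu}{D}]$, so the admissible $T$ form an interval with right endpoint $R:=\ln\tfrac{\rho-\mu}{D}$ and left endpoint $L:=\ln\tfrac{\rho-2\mu}{D}$ (with $L=-\infty$ if $\rho-2\mu\le 0$), and $T$ is the point of this interval closest to $\tau-1$.

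Two observations then finish the argument. First, $\tau-1>0$ because $\tau>1$. Second, and this is the crux, the right endpoint satisfies $R>0$: since $D>0$ we may clear denominators to get $R>0\iff \rho-\mu>D\iff \mu\tau<1$, and $\mu<1/\tau$ is precisely the standing assumption of this subsection. I would then split according to the two possibilities in the definition of $T$. If $\tau-1\le R$, projecting onto the interval gives $T=\max\{L,\tau-1\}\ge \tau-1>0$. If instead $\tau-1>R$, then the target lies above the interval and $T=R>0$. In either case $T>0$, as claimed. Note that the left endpoint $L$ is never an obstruction: it can be attained only when $\tau-1<L$, which already forces $T=L>\tau-1>0$.

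The only genuinely delicate point is the verification that $D>0$, which is needed both so that the logarithms defining $L$ and $R$ are meaningful and so that multiplying through by $D$ in the equivalence $R>0\iff\mu\tau<1$ preserves the direction of the inequality; everything after that is a one-line projection estimate. I therefore expect the degenerate check $\mu=0,\rho=1$ to be the main (albeit minor) obstacle, and it is dispatched by plugging these values into \eqref{eq:Mtau1} and using $\tau>1$ to conclude $\mu_\tau(\rho)>0$ there.
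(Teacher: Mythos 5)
Your proof is correct and follows essentially the same route as the paper's: the paper likewise observes that either $T\ge\tau-1>0$ or the upper bound \eqref{eq:T_UB} is tight, in which case $T=\ln\frac{\rho-\mu}{\mu\tau+\rho-\mu-1}>\ln 1=0$ precisely because $\mu\tau<1$. Your additional verification that the denominator $\mu\tau+\rho-\mu-1$ is strictly positive (ruling out $\mu=0,\rho=1$ via \eqref{eq:Mtau1}) makes explicit a point the paper leaves implicit, but it is a refinement of the same argument rather than a different one.
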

\begin{proof}
	We either have $T\ge \tau-1>0$ or the upper bound \eqref{eq:T_UB} is tight. In the latter case, $T=\ln\frac{\rho-\mu}{\mu \tau+\rho-\mu-1}> \ln 1=0$ since $\mu \tau< 1$.
\end{proof}

\paragraph{Competitiveness.}

For $\tau>1$ and $\mu\in[\mu_\tau(\rho),1/\tau)$, denote by $T(\tau,\mu)$ the corresponding value of $T$ chosen by the algorithm. We may drop $\tau$ in the notation when it is clear from the context.
\begin{lemma}\label{lem:TmuMon}
$T(\tau,\mu)$ is a non-increasing function of $\mu$. 
\end{lemma}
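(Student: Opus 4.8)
The idea is to recognize $T(\tau,\mu)$ as the projection of the fixed value $\tau-1$ onto a $\mu$-dependent interval, and to show that this interval moves monotonically downward as $\mu$ grows. Write $c(\mu):=\mu\tau+\rho-\mu-1=\mu(\tau-1)+(\rho-1)$. Since $\tau>1$, $\rho\ge 1$ and $\mu\ge 0$, we have $c(\mu)\ge\rho-1\ge 0$, with equality only in the degenerate case $\rho=1,\mu=0$, where $p_t\equiv 0$ so the choice of $T$ is immaterial and the statement is vacuous; hence assume $c(\mu)>0$. Constraints \eqref{eq:T_UB} and \eqref{eq:T_LB} say precisely that $e^{T}$ must lie in $\big[\frac{\rho-2\mu}{c(\mu)},\,\frac{\rho-\mu}{c(\mu)}\big]$, a nonempty interval because $\mu\ge 0$; moreover, since $\mu<1/\tau<1\le\rho$ the upper endpoint is strictly positive. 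As $T$ is chosen closest to $\tau-1$ inside this interval and $t\mapsto e^t$ is increasing,
\[
  e^{T(\tau,\mu)}=\max\!\left\{\frac{\rho-2\mu}{c(\mu)},\ \min\!\left\{e^{\tau-1},\ \frac{\rho-\mu}{c(\mu)}\right\}\right\}.
\]
It therefore suffices to prove that this clamped quantity is non-increasing in $\mu$.

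The computational heart is to check that both endpoints are non-increasing in $\mu$. Differentiating with the quotient rule and using $\frac{d}{d\mu}c(\mu)=\tau-1$, the numerator of $\frac{d}{d\mu}\frac{\rho-\mu}{c(\mu)}$ equals $1-\rho\tau$, and the numerator of $\frac{d}{d\mu}\frac{\rho-2\mu}{c(\mu)}$ equals $2-\rho-\rho\tau$. Both are strictly negative: $\rho\tau>1$ gives $1-\rho\tau<0$, and combining $\rho\tau>1$ with $\rho\ge 1$ gives $\rho+\rho\tau>2$. Hence $U(\mu):=\frac{\rho-\mu}{c(\mu)}$ and $L(\mu):=\frac{\rho-2\mu}{c(\mu)}$ are strictly decreasing wherever $c(\mu)>0$.

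Finally I would invoke monotonicity of the clamp operation: for fixed $x$, the map $(L,U)\mapsto\max\{L,\min\{x,U\}\}$ is non-decreasing in each of $L$ and $U$. Since $L(\mu)$ and $U(\mu)$ both decrease with $\mu$ while $e^{\tau-1}$ stays fixed, the clamped value $e^{T(\tau,\mu)}$ is non-increasing, and therefore so is $T(\tau,\mu)$, as claimed. I expect the only real subtlety to be bookkeeping rather than genuine difficulty: the lower-bound constraint \eqref{eq:T_LB} becomes vacuous once $\rho-2\mu\le 0$, but this is harmless, because then $L(\mu)\le 0<\min\{e^{\tau-1},U(\mu)\}$ so the outer $\max$ is inactive, while the closed-form endpoint and its negative derivative remain valid on the whole range $c(\mu)>0$; thus the monotonicity conclusion is unaffected.
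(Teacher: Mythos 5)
Your proof is correct and takes essentially the same route as the paper's: both identify $e^{T(\tau,\mu)}$ as the projection of $e^{\tau-1}$ onto the interval $\left[\frac{\rho-2\mu}{\mu\tau+\rho-\mu-1},\,\frac{\rho-\mu}{\mu\tau+\rho-\mu-1}\right]$ and conclude monotonicity from the fact that both endpoints are decreasing in $\mu$ (which the paper asserts directly from $\tau>1$). Your extra bookkeeping --- the explicit clamp formula, the derivative numerators $1-\rho\tau$ and $2-\rho-\rho\tau$, and the treatment of the degenerate case $c(\mu)=0$ and of a vacuous lower constraint --- is sound and simply makes explicit what the paper leaves to the reader.
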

\begin{proof}
	It suffices to show that $e^{T(\mu)}$ is non-increasing in $\mu$. By definition, $e^{T(\mu)}$ is the projection of $e^{\tau-1}$ onto the interval
	\begin{align*}
	\left[\frac{\rho-2\mu}{\mu \tau+\rho-\mu-1},\frac{\rho-\mu}{\mu \tau+\rho-\mu-1}\right].
	\end{align*}
	Since both endpoints of the interval are decreasing functions of $\mu$ (since $\tau>1$), it follows that $T(\mu)$ is non-increasing in $\mu$.
\end{proof}

\begin{lemma}\label{lem:Tmuletau}
	$T(\tau,\mu)< \tau$ for every $\mu\in[\mu_\tau(\rho),1/\tau)$. Moreover, if $\mu=\mu_\tau(\rho)$, then $(\mu,T(\tau,\mu))\in M_\tau(\rho)$.
\end{lemma}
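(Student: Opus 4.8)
The plan is to exploit the fact that the algorithm's choice of $T$ is, among all $T$ in the feasible interval, the one maximizing a certain unimodal function, and that this is exactly the quantity controlling constraint~\eqref{eq:Mtau1}. Write $D:=\mu\tau+\rho-\mu-1$ and recall that in the nontrivial regime $\mu\in[\mu_\tau(\rho),1/\tau)$ we have $D=\mu(\tau-1)+(\rho-1)>0$. My first step would be to observe that the two inequalities in \eqref{eq:Mtau2} are literally the conditions \eqref{eq:T_LB} and \eqref{eq:T_UB}: they state $e^T\in\left[\frac{\rho-2\mu}{D},\frac{\rho-\mu}{D}\right]$, i.e. that $T$ lies in the interval $[T_{\mathrm{low}},T_{\mathrm{high}}]$ onto which the algorithm projects $\tau-1$. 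Hence $T(\tau,\mu)$ satisfies \eqref{eq:Mtau2} by construction, and the only genuine content of membership in $M_\tau(\rho)$ is \eqref{eq:Mtau1}. Rearranging \eqref{eq:Mtau1} and using $\tau>1$, it is equivalent to $g(T)\ge\rho(\tau-1)$, where $g(T):=(\tau-T)De^T$.

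Next I would analyze $g$. Since $g'(T)=De^T(\tau-1-T)$ with $D>0$, the function $g$ is strictly increasing on $(-\infty,\tau-1)$ and strictly decreasing on $(\tau-1,\infty)$, with a unique maximum at $T=\tau-1$. Therefore the maximizer of $g$ over $[T_{\mathrm{low}},T_{\mathrm{high}}]$ is precisely the point of that interval closest to $\tau-1$, which is exactly the $T(\tau,\mu)$ chosen by the algorithm (the projection of $\tau-1$ onto the interval). Consequently $M_\tau(\rho)$ contains a pair with first coordinate $\mu$ for some $T$ if and only if $g(T(\tau,\mu))\ge\rho(\tau-1)$, i.e. if and only if $(\mu,T(\tau,\mu))\in M_\tau(\rho)$: one direction uses that $T(\tau,\mu)$ maximizes $g$ over the feasible interval, the other uses that $T(\tau,\mu)$ itself satisfies \eqref{eq:Mtau2}.

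This immediately yields the ``moreover'' claim. For $\mu=\mu_\tau(\rho)$, the definition of $\mu_\tau(\rho)$ as a minimum guarantees that $M_\tau(\rho)$ contains some pair $(\mu,T^\ast)$, so by maximality $g(T(\tau,\mu))\ge g(T^\ast)\ge\rho(\tau-1)$, and combined with the first step this gives $(\mu,T(\tau,\mu))\in M_\tau(\rho)$. To prove the first claim, I would read $T<\tau$ directly off \eqref{eq:Mtau1}: since $g(T(\tau,\mu_\tau(\rho)))\ge\rho(\tau-1)>0$ and $De^{T}>0$, the factor $\tau-T(\tau,\mu_\tau(\rho))$ must be strictly positive, so $T(\tau,\mu_\tau(\rho))<\tau$. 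For general $\mu\in[\mu_\tau(\rho),1/\tau)$, Lemma~\ref{lem:TmuMon} gives $T(\tau,\mu)\le T(\tau,\mu_\tau(\rho))<\tau$.

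The only point needing care is the strict positivity of $D$, which guarantees $g>0$ and well-defined logarithms for $[T_{\mathrm{low}},T_{\mathrm{high}}]$; it follows from $\tau>1$, $\mu\ge 0$, $\rho\ge 1$, using additionally $\mu\ge\mu_\tau(\rho)>0$ in the boundary case $\rho=1$. The main conceptual step is the unimodality observation identifying the algorithm's $T$ with the $g$-maximizer over the feasible interval, which simultaneously decouples \eqref{eq:Mtau1} from \eqref{eq:Mtau2} and links membership in $M_\tau(\rho)$ to the algorithm's actual choice; everything else is routine bookkeeping and an appeal to the already-established monotonicity of $T(\tau,\mu)$ in $\mu$.
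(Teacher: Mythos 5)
Your proof is correct and follows essentially the same route as the paper: you rewrite constraint~\eqref{eq:Mtau1} as a unimodality statement (your $g(T)=(\tau-T)De^T$ with maximum at $T=\tau-1$ is just the paper's observation that the left-hand side of \eqref{eq:Mtau1} has its global minimum at $T=\tau-1$), note that the algorithm's $T(\tau,\mu)$ is the projection of $\tau-1$ onto the interval defined by \eqref{eq:Mtau2}, invoke nonemptiness of $M_\tau(\rho)$ at $\mu=\mu_\tau(\rho)$ to conclude $(\mu,T(\tau,\mu))\in M_\tau(\rho)$, and then read $T<\tau$ off \eqref{eq:Mtau1} and extend to all $\mu$ via Lemma~\ref{lem:TmuMon}, exactly as the paper does. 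Your extra remark verifying $D>0$ (including the boundary case $\rho=1$) is a harmless refinement of a point the paper leaves implicit.
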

\begin{proof}
	 If the second statement holds, then since $\tau>1$, constraint~\eqref{eq:Mtau1} implies that $T(\mu_\tau(\rho))<\tau$. By Lemma~\ref{lem:TmuMon}, this implies the first statement. It remains to show the second statement.
	 
	 Let $\mu=\mu_\tau(\rho)$ and observe that there exists $T$ such that $(\mu,T)\in M_\tau(\rho)$.

By taking derivatives, one can see that the left-hand side of \eqref{eq:Mtau1} is concave in $T$ for $T<\tau-2$ and convex in $T$ for $T>\tau-2$, with a local minimum at $T=\tau-1$. Since constraint~\eqref{eq:Mtau1} is violated for $T\to-\infty$, but there exists $T$ such that $(\mu,T)\in M_\tau(\rho)$, we conclude that $T=\tau-1$ is a global minimum of the left-hand side of \eqref{eq:Mtau1} and $(\mu,T(\mu))\in M_\tau(\rho)$ by definition of $T(\mu)$.
\end{proof}

\begin{lemma}\label{lem:case3Equiv}
	For every fixed $\tau>1$ and $\mu\in[\mu_\tau(\rho),1/\tau)$, the following statements are equivalent:
	\begin{enumerate}
		\item The algorithm is $(\rho,\mu)$-competitive for instances with prediction $\tau$.
		\item $\cost(\tau)\le \rho$.
		\item Constraint \eqref{eq:Mtau1} is satisfied for $T=T(\tau,\mu)$.
	\end{enumerate}
	In the positive case, $T\le 1$.
\end{lemma}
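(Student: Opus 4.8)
```latex
The plan is to prove the three-way equivalence by establishing the cycle
$(1)\Rightarrow(2)\Rightarrow(3)\Rightarrow(1)$, together with the final
claim that $T\le 1$ in the positive case. Throughout I fix $\tau>1$ and
$\mu\in[\mu_\tau(\rho),1/\tau)$, and write $T=T(\tau,\mu)$ for the value
chosen by the algorithm, so that by Lemma~\ref{lem:Tmuletau} we already
know $T<\tau$.

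\textbf{$(1)\Rightarrow(2)$.} Competitiveness means
$\cost(x)\le\rhs(x)$ for all $x\ge0$. Evaluating at $x=\tau$ and noting
that $\rhs(\tau)=\rho\min\{\tau,1\}+\mu|\tau-\tau|=\rho$ (using $\tau>1$)
gives $\cost(\tau)\le\rho$ immediately.

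\textbf{$(2)\Rightarrow(3)$.} This is the computational heart of the
proof. I would write out $\cost(\tau)$ explicitly using the Case~3
definition of $P_0,p_t,P_\infty$: since $\tau>T$ the algorithm has
finished placing all its density by time $T<\tau$, so
\[
\cost(\tau)=P_0+\int_0^T(1+t)p_t\,dt+\tau P_\infty,
\]
with $p_t=(\mu\tau+\rho-\mu-1)e^t$ on $[0,T]$ and
$P_\infty=\rho-\mu-(\mu\tau+\rho-\mu-1)e^T$. Carrying out the
integration (integrating $t\,e^t$ by parts) and simplifying, the
inequality $\cost(\tau)\le\rho$ should reduce, after collecting terms,
to exactly constraint~\eqref{eq:Mtau1}, namely
$\rho\tau+(T-\tau)(\mu\tau+\rho-\mu-1)e^T\le\rho$. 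This algebraic
identity---that $\cost(\tau)$ equals $\rho$ minus (a positive multiple
of) the slack in \eqref{eq:Mtau1}---is the main obstacle, since it
requires the bookkeeping of several exponential terms to cancel cleanly,
but it is a routine (if delicate) calculation once the integral is set up.

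\textbf{$(3)\Rightarrow(1)$.} Assuming \eqref{eq:Mtau1} holds at
$T=T(\tau,\mu)$, I establish $\cost(x)\le\rhs(x)$ for all $x$ by the same
derivative-comparison technique as in Cases~1 and~2. First, the reverse
of the $(2)\Rightarrow(3)$ computation shows \eqref{eq:Mtau1} is
equivalent to $\cost(\tau)\le\rho=\rhs(\tau)$, anchoring the inequality
at $x=\tau$. Then using identity~\eqref{eq:cost'End},
$\frac{d}{dx}\cost(x)=p_x+\int_x^\infty p_t\,dt+P_\infty$, I compare
slopes on each subinterval: for $x\in(T,\tau)$ one has $p_x=0$ so
$\frac{d}{dx}\cost(x)=P_\infty\le\mu$ by the upper constraint
in~\eqref{eq:Mtau2}, which matches $\frac{d}{dx}\rhs(x)=\rho-\mu$ or
exceeds the $-\mu$ slope appropriately (for $x<\tau$, $\rhs$ has slope
$\rho-\mu$, comfortably above $P_\infty$); for $x\in(0,T)$ one checks
$\frac{d}{dx}\cost(x)=\rho-\mu=\frac{d}{dx}\rhs(x)$ as in Case~2; and for
$x>\tau$ the cost derivative is $P_\infty\le\mu=\frac{d}{dx}\rhs(x)$,
again using~\eqref{eq:Mtau2}. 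Integrating these slope bounds outward
from the tight point $x=\tau$ yields \eqref{eq:compIntegral} everywhere.

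\textbf{The final claim $T\le1$.} In the positive case
\eqref{eq:Mtau1} holds, so $\rho\tau-\rho\le(\tau-T)(\mu\tau+\rho-\mu-1)e^T$.
Since $\mu<1/\tau$, the coefficient $\mu\tau+\rho-\mu-1$ is positive and
$T<\tau$, so the right side is positive and finite. I would argue $T\le1$
by contradiction: the left-hand side of \eqref{eq:Mtau1} as a function of
$T$ has its only local minimum at $T=\tau-1$ (shown in the proof of
Lemma~\ref{lem:Tmuletau}) and grows without bound as $T\to\infty$;
combined with the constraint value $\rho$ and the fact that
$\mu\tau+\rho-\mu-1<\rho-\mu$ forces $e^T$ to stay within the band
of~\eqref{eq:Mtau2}, the feasible $T$ satisfying both \eqref{eq:Mtau1}
and the lower bound of \eqref{eq:Mtau2} cannot exceed $1$. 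The cleanest
route is to observe that at $T=1$ the left side of \eqref{eq:Mtau1}
already equals or exceeds $\rho$, so any larger $T$ on the increasing
branch violates the constraint.
```
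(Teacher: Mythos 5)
Your steps $(1)\Rightarrow(2)$ and $(2)\Leftrightarrow(3)$ are sound: the integral indeed collapses, since $\int_0^T(1+t)e^t\,dt=Te^T$, giving $\cost(\tau)=\rho\tau+(T-\tau)(\mu\tau+\rho-\mu-1)e^T$, which is exactly the left side of \eqref{eq:Mtau1}; this is the same identity the paper derives via the piecewise linearity of $\cost$. The two genuine problems are in $(3)\Rightarrow(1)$ and in the final claim $T\le 1$. For $(3)\Rightarrow(1)$, ``integrating these slope bounds outward from the tight point $x=\tau$'' fails on the interval $(T,1)$: there $\frac{d}{dx}\rhs(x)=\rho-\mu$ while $\frac{d}{dx}\cost(x)=P_\infty\le\rho-\mu$, so $\rhs$ grows \emph{faster} than $\cost$, and that is the wrong direction for backward propagation from $\tau$ (knowing $\cost(\tau)\le\rhs(\tau)$ and $\cost'\le\rhs'$ to the left of $\tau$ says nothing about $\cost\le\rhs$ there). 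You need a second anchor: $\cost(0)=\mu\tau=\rhs(0)$, propagated to equality on all of $[0,T]$ by the matching slopes, and then either forward integration on $(T,1)$ combined with backward integration from $\tau$ on $(1,\tau)$ (where $\cost'=P_\infty\ge0\ge-\mu=\rhs'$, using \eqref{eq:T_UB}), or, as the paper does, the observation that $\cost$ is linear on $[T,\tau]$ while $\rhs$ is concave there, so the inequality at the two endpoints $T$ and $\tau$ implies it on all of $[T,\tau]$. Note also that this slope analysis presupposes $T\le1$ (otherwise $\rhs$ has slope $-\mu$, not $\rho-\mu$, on part of $(0,T)$), so $T\le1$ must be established \emph{before} this step, not after; and $P_\infty\le\mu$ is the \emph{lower} constraint \eqref{eq:T_LB} of \eqref{eq:Mtau2}, not the upper one.

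Your proof of $T\le1$ does not work. The premise that ``at $T=1$ the left side of \eqref{eq:Mtau1} already equals or exceeds $\rho$'' is false within the lemma's hypotheses: for $\rho=1.5$, $\tau=3$, $\mu=0.2$ (which satisfies $\mu_\tau(\rho)\le\mu(\rho)\approx0.075\le\mu<1/\tau$), the left side at $T=1$ equals $4.5-2\cdot 0.9\,e\approx-0.39<\rho$. Moreover, the left side of \eqref{eq:Mtau1} as a function of $T$ decreases up to $T=\tau-1$ and increases afterwards, so whenever $\tau>2$ the point $T=1$ lies on the \emph{decreasing} branch and ``any larger $T$'' need not violate the constraint. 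Finally, the lower bound of \eqref{eq:Mtau2} is the wrong constraint to invoke: \eqref{eq:Mtau1} together with only the lower bound does not force $T\le1$. What makes the claim true is the \emph{upper} bound \eqref{eq:T_UB}, i.e.\ $(\mu\tau+\rho-\mu-1)e^T\le\rho-\mu$: if $T>1$, then the left side of \eqref{eq:Mtau1} is at least $\rho\tau-(\tau-T)(\rho-\mu)=\rho T+\mu(\tau-T)>\rho$, contradicting \eqref{eq:Mtau1}. This is the paper's argument in cost form: if $T>1$ then $\cost(\tau)>\cost(1)=\mu\tau+\rho-\mu\ge\rho$.
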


\begin{proof}
	The implication from the first to the second statement is immediate.
	
	Using \eqref{eq:cost'Start} and \eqref{eq:cost'End}, we observe that the cost is a piecewise linear function of $x$:
	\begin{align}
	\frac{d}{dx}\cost(x)&=\left\lbrace
	\array{@{}ll@{\quad}l@{}}
	1-\mu \tau+(\mu \tau+\rho-\mu-1)\left[e^x-\int_0^x e^t dt\right]&= \rho-\mu\quad&\text{if }x< T\\
	&\phantom{{}=} P_\infty &\text{if }x> T
	\endarray \right.\label{eq:cost'}
	\end{align}
	Since $T<\tau$ by Lemma~\ref{lem:Tmuletau}, this yields
	\begin{align*}
	\cost(\tau)&=\cost(0)+(\rho-\mu)T+(\tau-T)P_\infty\\
	&=\mu \tau + (\rho-\mu)T + (\tau-T)\left[\rho-\mu - (\mu \tau+\rho-\mu-1)e^T\right]\\
	&=\rho\tau + (\tau-T)(\mu \tau+\rho-\mu-1)e^T
	\end{align*}
	
	Thus, $\cost(\tau)$ is precisely the left-hand side of~\eqref{eq:Mtau1}, which shows the equivalence of the second and third statement.
	
	It remains to show that the second and third statement imply  $T\le 1$ and that \eqref{eq:compIntegral} holds for \emph{all} $x$. 
	
	We must have $T\le 1$ since otherwise we get a contradiction to the second statement via
	\begin{align*}
	\cost(\tau)> \cost(1)=\mu \tau+\rho-\mu \ge\rho.
	\end{align*}
	For $x=0$, \eqref{eq:compIntegral} is satisfied with equality. For $x\in(0,T)$, by \eqref{eq:cost'} and since $T\le 1<\tau$ we have $\frac{d}{dx}\cost(x)=\frac{d}{dx}\rhs(x)$, so \eqref{eq:compIntegral} is tightly satisfied for all $x\in[0,T]$. By assumption we also satisfy it for $x=\tau$. Since $\cost(x)$ is linear in $x$ for $x>T$ and $\rhs(x)$ is concave for $x\in[T,\tau]$ (linearly increasing for $x\in[T,1]$ and linearly decreasing for $x\in[1,\tau]$), this shows that \eqref{eq:compIntegral} also holds for $x\in[T,\tau]$. For $x>\tau$, we have $\frac{d}{dx}\cost(x)=P_\infty\le\mu=\frac{d}{dx}\rhs(x)$ (where the inequality uses \eqref{eq:T_LB}), so \eqref{eq:compIntegral} holds there as well. See Figure~\ref{fig:cost_rhs} for an illustration of these functions.

\begin{figure}
\centering
{
\input{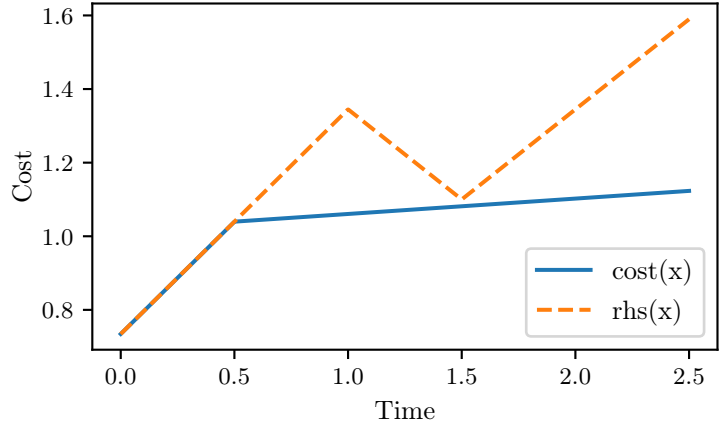}
}
\vspace{-2.5ex}
\caption{Evolution of $\cost$ and $\rhs$ in function of the time, for $\rho=1.1$, $\mu = \mu(\rho)$, $\tau=1.5$, so $T=0.5$.}
\label{fig:cost_rhs}
\end{figure}

\end{proof}

The following observation is immediate from the proof of Lemma~\ref{lem:case3Equiv}.
\begin{observation}\label{obs:tight3}
	Let $\tau>1$ and $\mu\in[\mu_\tau(\rho),1/\tau)$. Then \eqref{eq:compIntegral} is tight for all $x\in[0,T]$. If $P_\infty=\mu$ and \eqref{eq:compIntegral} is tight for $x=\tau$, then it is tight for all $x\ge\tau$.
\end{observation}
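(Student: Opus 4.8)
The plan is to read off both claims directly from the piecewise-linear structure of $\cost$ that was already established in the proof of Lemma~\ref{lem:case3Equiv}. Recall from that proof (equation~\eqref{eq:cost'}) that $\cost$ has slope $\rho-\mu$ on $(0,T)$ and slope $P_\infty$ on $(T,\infty)$, with initial value $\cost(0)=P_0=\mu\tau$, and recall also that the relevant regime satisfies $T\le 1<\tau$. The strategy is simply to compare this against the corresponding linear pieces of $\rhs(x)=\rho\min\{x,1\}+\mu|\tau-x|$, matching both an endpoint value and a slope on each interval.

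For the first claim, I would observe that on $[0,T]$ we have $x<1$ and $x<\tau$, so $\rhs(x)=\rho x+\mu(\tau-x)=\mu\tau+(\rho-\mu)x$ is linear with slope $\rho-\mu$ and value $\mu\tau$ at $x=0$. Since $\cost$ has the same value $\mu\tau$ at $x=0$ and, by \eqref{eq:cost'}, the same slope $\rho-\mu$ throughout $(0,T)$, the two functions coincide on all of $[0,T]$. This gives tightness of \eqref{eq:compIntegral} on $[0,T]$ unconditionally.

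For the second claim, I would note that on $(\tau,\infty)$ we have $x>1>\tau$ is not quite right to write, but since $\tau>1$ any $x>\tau$ satisfies $x>1$, so $\rhs(x)=\rho+\mu(x-\tau)$ has slope $\mu$, whereas $\cost$ has slope $P_\infty$ by \eqref{eq:cost'}. Under the hypothesis $P_\infty=\mu$ these slopes agree; combined with the assumed equality $\cost(\tau)=\rhs(\tau)$ at the left endpoint, integrating outward from $\tau$ yields $\cost(x)=\rhs(x)$ for every $x\ge\tau$, as claimed.

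Both parts thus reduce to matching a linear cost function against a single linear segment of $\rhs$ via a shared endpoint value and a shared slope, so I do not anticipate any genuine obstacle. The only points requiring care are confirming the slope identifications, which are already supplied by \eqref{eq:cost'}, and confirming that $T\le 1<\tau$ so that the interval $[0,T]$ lies entirely within the region where $\rhs$ has slope $\rho-\mu$; both of these facts were established within the proof of Lemma~\ref{lem:case3Equiv}, which is why the observation follows immediately.
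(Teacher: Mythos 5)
Your proof is correct and is essentially the paper's own argument: the paper treats this observation as immediate from the proof of Lemma~\ref{lem:case3Equiv}, which contains exactly the steps you use — equality $\cost(0)=\mu\tau=\rhs(0)$ at $x=0$, the slope identity $\frac{d}{dx}\cost(x)=\rho-\mu=\frac{d}{dx}\rhs(x)$ on $(0,T)$ (valid because $T\le 1<\tau$ in the competitive regime), and the slope $P_\infty$ versus $\mu$ comparison beyond $\tau$, which becomes an equality under the hypothesis $P_\infty=\mu$. No gaps; the approach coincides with the paper's.
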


For $\tau>1$ and $\mu\in[\mu_\tau(\rho),1/\tau)$, we may write $\cost(\tau,\mu)$ instead of $\cost(\tau)$ to emphasize the additional dependence on $\mu$. By (the proof of) Lemma~\ref{lem:case3Equiv}, $\cost(\tau,\mu)$ is equal to the left-hand side of \eqref{eq:Mtau1}, i.e.,
\begin{align}
\cost(\tau,\mu)=\rho\tau+ (T(\tau,\mu)-\tau)(\mu\tau+\rho-\mu-1)e^{T(\tau,\mu)}\label{eq:costtaumu}
\end{align}

It remains to argue that the equivalent statements of Lemma~\ref{lem:case3Equiv} are indeed true for all $\mu\in[\mu_\tau(\rho),1/\tau)$. For $\mu=\mu_\tau(\rho)$, this follows from $(\mu,T(\mu)\in M_\tau(\rho)$ as shown in Lemma~\ref{lem:Tmuletau}. The following lemma implies that this remains true for larger $\mu$.

\begin{lemma}\label{lem:costDecr}
For every $\tau>1$, the function $\mu\mapsto\cost(\tau,\mu)$ is decreasing.
\end{lemma}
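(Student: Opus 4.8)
The plan is to differentiate the closed form \eqref{eq:costtaumu},
\[
\cost(\tau,\mu)=\rho\tau+(T-\tau)\,(\mu\tau+\rho-\mu-1)\,e^{T},
\]
where $T=T(\tau,\mu)$ and I abbreviate $D:=\mu\tau+\rho-\mu-1$, and to show $\frac{d}{d\mu}\cost(\tau,\mu)<0$. Since $e^{T}$ is the projection of $e^{\tau-1}$ onto the interval $\big[\tfrac{\rho-2\mu}{D},\tfrac{\rho-\mu}{D}\big]$ (see \eqref{eq:T_UB}--\eqref{eq:T_LB} and the proof of Lemma~\ref{lem:TmuMon}), the map $\mu\mapsto T(\tau,\mu)$ is continuous and piecewise smooth, and hence so is $\cost(\tau,\mu)$. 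It therefore suffices to verify strict negativity of the derivative on each of the three regimes of the projection; continuity then upgrades ``negative derivative on each piece'' to ``strictly decreasing''.

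\emph{Regime 1 (interior, $T=\tau-1$).} Here $T$ is locally constant in $\mu$, so only the factor $D\,e^{\tau-1}$ varies, and a direct computation gives $\frac{d}{d\mu}\cost=(T-\tau)(\tau-1)e^{\tau-1}=-(\tau-1)e^{\tau-1}<0$, using $\tau>1$.
\emph{Regime 2 (upper endpoint, $e^{T}=\frac{\rho-\mu}{D}$, i.e.\ $P_\infty=0$).} Then $D e^{T}=\rho-\mu$, so $\cost=\rho\tau+(T-\tau)(\rho-\mu)$. Differentiating $T=\ln(\rho-\mu)-\ln D$ yields $\frac{dT}{d\mu}=-\frac1{\rho-\mu}-\frac{\tau-1}{D}$, and substituting and simplifying (using $\frac{\rho-\mu}{D}=e^{T}$) collapses the derivative to $(\tau-1)(1-e^{T})-T$. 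Since $T>0$ by Fact~\ref{fact:T0} we have $e^{T}>1$, so with $\tau>1$ both summands are negative.

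\emph{Regime 3 (lower endpoint, $e^{T}=\frac{\rho-2\mu}{D}$, i.e.\ $P_\infty=\mu$).} Now $D e^{T}=\rho-2\mu$ and $\frac{dT}{d\mu}=-\frac2{\rho-2\mu}-\frac{\tau-1}{D}$; the analogous simplification gives $\frac{d}{d\mu}\cost=(\tau-1)(2-e^{T})-2T$. I expect this to be the main obstacle, because the sign of $2-e^{T}$ is not fixed. The resolution I would use is a short case split: if $e^{T}\ge 2$ the first term is $\le0$ while $-2T<0$; if $e^{T}<2$, I use that in this regime $e^{T}=\frac{\rho-2\mu}{D}>e^{\tau-1}$ forces $T>\tau-1$, whence $(\tau-1)(2-e^{T})<T(2-e^{T})\le 2T$, again giving a negative value.

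Combining the three regimes shows $\frac{d}{d\mu}\cost(\tau,\mu)<0$ wherever the derivative exists; by continuity of $\cost(\tau,\mu)$ in $\mu$ this proves the function is strictly decreasing on $[\mu_\tau(\rho),1/\tau)$. The only routine care needed is positivity of the denominators appearing in the respective regimes: $\rho-\mu>0$ since $\mu<1/\tau<1\le\rho$, $D>0$ for $\tau>1$ (outside the degenerate case $\rho=1,\mu=0$), and $\rho-2\mu>0$ holds automatically in Regime~3 because $e^{T}>0$ forces $\frac{\rho-2\mu}{D}>0$.
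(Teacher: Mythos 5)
Your proof is correct, and it shares the paper's overall skeleton --- a case split according to which of the two regimes of the projection holds (interior $T=\tau-1$, upper bound \eqref{eq:T_UB} tight, lower bound \eqref{eq:T_LB} tight) --- with an identical treatment of the interior regime. Where you genuinely diverge is in the two endpoint regimes: the paper disposes of them with a qualitative argument (when a bound is tight, it claims the probability of buying before $\tau$ is unchanged while a larger $\mu$ makes the algorithm buy earlier, decreasing the rent cost), whereas you differentiate the closed form \eqref{eq:costtaumu} through $T(\mu)$ and do an explicit sign analysis. Your formulas $(\tau-1)(1-e^{T})-T$ and $(\tau-1)(2-e^{T})-2T$ are correct, and the Regime 3 resolution is the real content: the split on $e^{T}\gtrless 2$, together with the observation that a lower-endpoint projection forces $T\ge\tau-1$, is sound (note only that the projection gives $T\ge\tau-1$, not strict inequality as you wrote; with equality the bound still collapses to $-Te^{T}<0$ by Fact~\ref{fact:T0}, so nothing breaks). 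In fact your computation buys more than brevity would: in the $P_\infty=\mu$ regime the paper's two claims are individually inaccurate --- the probability of ever buying there is $1-\mu$, which is not constant in $\mu$, and the expected rent cost alone can actually \emph{increase} with $\mu$ (e.g.\ $\rho=1.1$, $\tau$ slightly above $1$, $\mu\approx0.47$, where the rent-cost derivative is $(\tau-1)(2-e^T)-2T+1>0$) --- it is only the sum of the decreasing buying cost and the rent cost that is decreasing, which is exactly what your derivative $(\tau-1)(2-e^{T})-2T<0$ certifies. So your route is more computational but also more airtight; the paper's is shorter and more conceptual, at the price of being loose in precisely the regime your case split handles carefully.
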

\begin{proof}	
	Consider first the effect of increasing $\mu$ within some interval $I$ such that one of the bounds \eqref{eq:T_UB} or \eqref{eq:T_LB} is tight for all $\mu\in I$. This means that the total probability of buying before time $\tau$ is the same for all $\mu\in I$, and thus the buying cost contribution to $\cost(\tau,\mu)$ is the same for all $\mu\in I$. However, increasing $\mu$ means that the algorithm buys more aggressively (i.e., earlier), which leads to a decrease in the expected rent cost. Thus, increasing $\mu$ in $I$ leads to a decrease of $\cost(\tau,\mu)$.
	
	Now, consider the effect of increasing $\mu$ within an interval $I$ where neither \eqref{eq:T_UB} nor \eqref{eq:T_LB} is tight. Then $T(\tau,\mu)=\tau-1$ is constant for all $\mu\in I$. In this case, due to \eqref{eq:costtaumu} and since $\tau>1$, $\cost(\tau,\mu)$ is again decreasing in $\mu$.
\end{proof}

\subsubsection{Bounding \texorpdfstring{$\boldsymbol{\mu_\tau(\rho)}$}{} in Case 3}\label{sec:boundingMu3}

The goal of this section is to prove the following Lemma:
\begin{lemma}\label{lem:muBoundCase3}
	For any $\rho\in\left[1,\frac{e}{e-1}\right]$ and $\tau>1$,
	\begin{align*}
	\mu_\tau(\rho)\le\max\left\{\frac{\rho+e-\rho e}{2},\rho(1-T)e^{-T}\right\},
	\end{align*}
	where $T\in[0,1]$ is the solution to $T^2e^{-T}=1-\frac{1}{\rho}$.
\end{lemma}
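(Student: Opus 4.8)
Throughout fix $\rho\in[1,\frac{e}{e-1}]$ and write $\mu_\tau:=\mu_\tau(\rho)$, $c:=\mu\tau+\rho-\mu-1$, and $T:=T(\tau,\mu)$. The plan rests on the fact that at $\mu=\mu_\tau$ the constraint \eqref{eq:Mtau1} is tight: by Lemma~\ref{lem:case3Equiv} competitiveness is equivalent to $\cost(\tau)\le\rho$, by Lemma~\ref{lem:costDecr} the map $\mu\mapsto\cost(\tau,\mu)$ is strictly decreasing, and $\mu_\tau$ is by definition the smallest feasible $\mu$; hence (for $\rho<\frac{e}{e-1}$, the case $\rho=\frac{e}{e-1}$ being trivial since there the claimed bound is $0=\mu_\tau$) we have $\cost(\tau,\mu_\tau)=\rho$ with $\mu_\tau>0$. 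From \eqref{eq:costtaumu} one checks that $\tau\mapsto\mu_\tau$ is continuous, that $\lim_{\tau\to1^+}\mu_\tau=\frac{\rho+e-\rho e}{2}$ (verified below), and that $\mu_\tau\le1/\tau\to0$ as $\tau\to\infty$. Thus $\mu_\tau$, extended continuously to $[1,\infty]$ with value $0$ at $\infty$, attains its maximum either at $\tau=1$ or at an interior point of $(1,\infty)$, which is then a local maximum; so it suffices to bound the value at any interior local maximum.

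The key step is to show that an interior local maximum can only occur when $T=\tau-1$, the unconstrained minimizer of the left-hand side of \eqref{eq:Mtau1}. Differentiating $\cost(\tau,\mu_\tau)=\rho$ gives $\partial_\tau\cost+\partial_\mu\cost\cdot\mu_\tau'=0$, and since $\partial_\mu\cost<0$, the sign of $\mu_\tau'$ equals that of $\partial_\tau\cost$; a local maximum forces $\partial_\tau\cost=0$. I split into the three cases defining $T$ as the projection of $\tau-1$ onto the interval prescribed by \eqref{eq:Mtau2}. If the upper bound of \eqref{eq:Mtau2} is tight ($P_\infty=0$), then $ce^T=\rho-\mu$ and $\cost=\rho\tau+(T-\tau)(\rho-\mu)$ with $T=\ln\frac{\rho-\mu}{c}$; using $\partial_\tau T=-\mu/c$ one gets $\partial_\tau\cost=\mu(1-e^T)<0$ since $T>0$ by Fact~\ref{fact:T0}, so $\mu_\tau$ is strictly decreasing and has no critical point. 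If the lower bound is tight ($P_\infty=\mu$), then $ce^T=\rho-2\mu$ and similarly $\partial_\tau\cost=\mu(2-e^T)$, which vanishes only at $e^T=2$; but at such a point $\mu_\tau'=0$, so the total derivative of $T$ equals its partial $\partial_\tau T=-\mu/c<0$, whence $\partial_\tau\cost$ is increasing through $0$ and the critical point is a local minimum. Therefore every interior local maximum lies in the regime $T=\tau-1$.

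In the regime $T=\tau-1$, tightness of \eqref{eq:Mtau1} reads $ce^{\tau-1}=\rho(\tau-1)$, whence $\mu_\tau=\rho e^{1-\tau}-\frac{\rho-1}{\tau-1}$. Differentiating, $\mu_\tau'=0$ gives $(\tau-1)^2e^{1-\tau}=1-\tfrac1\rho$, whose relevant root is $\tau-1=T\in[0,1]$, the quantity in the statement; substituting $\frac{\rho-1}{\tau-1}=\rho Te^{-T}$ back yields $\mu_\tau=\rho(1-T)e^{-T}$. For the boundary value, the $P_\infty=\mu$ equations $ce^T=\rho-2\mu$ and $(\tau-T)(\rho-2\mu)=\rho(\tau-1)$ force, as $\tau\to1^+$, $T=1$ and $c=\rho-1$, hence $2\mu=\rho-e(\rho-1)$, i.e. $\mu\to\frac{\rho+e-\rho e}{2}$. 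Combining, the maximum of $\mu_\tau$ over $\tau>1$ is attained either at $\tau\to1^+$ (value $\frac{\rho+e-\rho e}{2}$) or at an interior local maximum (value $\rho(1-T)e^{-T}$), so $\mu_\tau\le\max\{\frac{\rho+e-\rho e}{2},\,\rho(1-T)e^{-T}\}$, as claimed.

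The main obstacle is the lower-boundary regime $P_\infty=\mu$: ruling out a local maximum there—so that the supremum collapses to the two clean values—relies on the observation that at a critical point the indirect dependence of $T$ on $\mu$ drops out (because $\mu_\tau'=0$), leaving $\frac{dT}{d\tau}=\partial_\tau T=-\mu/c<0$, which pins the critical point as a minimum. Additional care is needed to verify that $\mu_\tau$ is genuinely continuous where the active constraint of \eqref{eq:Mtau2} switches, that $0<T\le1$ throughout (Fact~\ref{fact:T0} and Lemma~\ref{lem:case3Equiv}), and that $T^2e^{-T}=1-1/\rho$ has a unique root in $[0,1]$; each of these is elementary but must be checked to make the envelope/sign computations above rigorous.
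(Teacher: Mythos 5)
Your proposal follows the paper's own blueprint quite closely: (i) tightness $\cost(\tau,\mu_\tau)=\rho$ at the optimal $\mu$ (the paper's Lemma~\ref{lem:rhoTight}); (ii) an interior maximum of $\tau\mapsto\mu_\tau$ can only occur where $T=\tau-1$ (the paper's Lemma~\ref{lem:localMax}); (iii) the limit $\frac{\rho+e-\rho e}{2}$ as $\tau\to 1^+$; (iv) optimization of the explicit formula $\mu=\rho e^{-T}-\frac{\rho-1}{T}$, which is exactly \eqref{eq:muFge1} with $\tau=T+1$. Steps (i), (iii), (iv) match the paper in substance, and your regime-by-regime derivative formulas in step (ii) agree with the paper's (your $\mu(1-e^T)$ and $\mu(2-e^T)$ are algebraically identical to the paper's $\frac{i\mu}{c}(\mu\tau+\frac{i-1}{i}\rho-1)$ for $i=1,2$).

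The genuine gap is in how you execute step (ii) and carry its conclusion into step (iv). Your envelope argument ``a local maximum forces $\partial_\tau\cost=0$'' differentiates the identity $\cost(\tau,\mu_\tau)=\rho$, which requires $\tau\mapsto\mu_\tau$ to be differentiable at the maximizer. But $\mu_\tau$ is only piecewise smooth: it can have kinks where the active constraint of \eqref{eq:Mtau2} switches (between the regimes $P_\infty=\mu$, $0<P_\infty<\mu$, and $P_\infty=0$; cf.\ Figure~\ref{fig:1localMax}). At such a switching point no first-order condition holds, and each of your three cases presumes the maximizer is interior to its regime, so a maximum at a regime boundary is simply not covered. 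The same defect reappears at the end: you obtain the value $\rho(1-T)e^{-T}$ by solving $\mu_\tau'=0$ on the curve $\mu=\rho e^{1-\tau}-\frac{\rho-1}{\tau-1}$, again presuming an interior critical point. The paper is structured precisely to avoid both issues: Lemma~\ref{lem:localMax} never differentiates $\mu_\tau$; it fixes $\mu^*=\max_\tau\mu_\tau$, shows $\cost(\cdot,\mu^*)$ is V-shaped on an interval where one constraint of \eqref{eq:Mtau2} stays tight, hence some nearby $\tau$ has $\cost(\tau,\mu^*)>\rho=\cost(\tau,\mu_\tau)$, contradicting maximality via Lemma~\ref{lem:costDecr}; then it bounds $\mu_{\tau^*}=\rho e^{-T}-\frac{\rho-1}{T}$ by maximizing this expression over \emph{all} $T$, with no first-order condition on $\mu_\tau$ needed. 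Your gap is patchable (a maximizer on the boundary of the $T=\tau-1$ regime still lies on the explicit curve, whose global maximum over $\tau>1$ is $\rho(1-T)e^{-T}$, while a direct boundary between $P_\infty=0$ and $P_\infty=\mu$ forces $\mu=0$), but as written the proof does not close it. Two smaller unjustified assertions: that $\mu_\tau>0$ for $\rho<\frac{e}{e-1}$, which your strict sign arguments rely on (the paper instead handles $\mu_\tau=0$ directly inside Lemma~\ref{lem:rhoTight}), and that the regime near $\tau=1$ is $P_\infty=\mu$, which you assume but the paper derives from $\mu\ge 0$ together with \eqref{eq:muFge1}.
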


Since $2>e\ln(2)$, this will complete the proof of Lemma~\ref{lem:mutaumu}.

The difficulty in this case, compared to upper bounding $\mu_\tau(\rho)$ in Cases 1 and 2, arises from the fact that there exists no closed-form expression for $\mu_\tau(\rho)$.

The lemma is trivial for $\rho=1$ since $\mu_\tau(1)\le 1/\tau<1$ and due to the solution $T=0$. Therefore, assume throughout the remainder of this section that $\rho>1$. Since we view $\rho$ as fixed, we may write $\mu_\tau$ instead of $\mu_\tau(\rho)$.

By the next Lemma, the interval $[\mu_\tau,1/\tau)$ is indeed non-empty and constraint~\eqref{eq:Mtau1} is tight for $\mu=\mu_T$ and $T=T(\tau,\mu)$.

\begin{lemma}\label{lem:rhoTight}
	For all $\tau>1$, we have $\mu_\tau<1/\tau$ and $\cost(\tau,\mu_\tau)=\rho$.
\end{lemma}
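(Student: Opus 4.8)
The plan is to show that $\mu_\tau$ is exactly the smallest $\mu$ at which constraint~\eqref{eq:Mtau1} becomes tight, and that the extreme choice $\mu=1/\tau$ overshoots this threshold. The starting point is the identity~\eqref{eq:costtaumu}, which says that $\cost(\tau,\mu)$ equals the left-hand side of~\eqref{eq:Mtau1} evaluated at $T=T(\tau,\mu)$. First I would record two structural facts about this left-hand side as a function of $T$: its coefficient $\mu\tau+\rho-\mu-1=\mu(\tau-1)+(\rho-1)$ is strictly positive (as $\tau>1$, $\rho>1$, $\mu\ge0$), and differentiating shows it is unimodal with a unique minimum at $T=\tau-1$. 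Since $T(\tau,\mu)$ is by construction the projection of $\tau-1$ onto the interval of $T$ allowed by~\eqref{eq:Mtau2}, it follows that $\cost(\tau,\mu)$ is precisely the \emph{minimum} of the left-hand side of~\eqref{eq:Mtau1} over all $T$ feasible for~\eqref{eq:Mtau2}. Consequently, for $\mu<\rho$ (where this band is nonempty), $(\mu,T)\in M_\tau(\rho)$ for some $T$ if and only if $\cost(\tau,\mu)\le\rho$, so $\mu_\tau=\min\{\mu\ge0 : \cost(\tau,\mu)\le\rho\}$.

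Next I would establish the analytic regularity needed to exploit this reformulation. The coefficient $\mu\tau+\rho-\mu-1$ is linear in $\mu$, and the endpoints of the feasible $T$-interval, namely $\tfrac{\rho-2\mu}{\mu\tau+\rho-\mu-1}$ and $\tfrac{\rho-\mu}{\mu\tau+\rho-\mu-1}$ from~\eqref{eq:T_LB} and~\eqref{eq:T_UB}, vary continuously with $\mu$; hence $T(\tau,\mu)$ and therefore $\cost(\tau,\mu)$ are continuous in $\mu$. By Lemma~\ref{lem:costDecr} this function is also decreasing. I would then evaluate it at the endpoint $\mu=1/\tau$: there the coefficient equals $\rho-1/\tau$, the ratio $\tfrac{\rho-\mu}{\mu\tau+\rho-\mu-1}$ equals $1$, so $T(\tau,1/\tau)=0$, and substituting into~\eqref{eq:costtaumu} gives $\cost(\tau,1/\tau)=\rho\tau-\tau(\rho-1/\tau)=1$, which is strictly below $\rho$ because $\rho>1$.

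With these pieces the two claims follow quickly. For $\cost(\tau,\mu_\tau)=\rho$, suppose instead $\cost(\tau,\mu_\tau)<\rho$. By continuity the left-hand side of~\eqref{eq:Mtau1} at $T(\tau,\mu)$ stays below $\rho$ for $\mu$ slightly smaller than $\mu_\tau$; and since $\mu_\tau\le1/\tau<\rho$, the feasible $T$-interval for~\eqref{eq:Mtau2} remains nonempty there (its upper endpoint is positive). Thus such a $\mu$ yields $(\mu,T(\tau,\mu))\in M_\tau(\rho)$ with $\mu<\mu_\tau$, contradicting the minimality of $\mu_\tau$; combined with $\cost(\tau,\mu_\tau)\le\rho$, which holds because $(\mu_\tau,T(\tau,\mu_\tau))\in M_\tau(\rho)$ by Lemma~\ref{lem:Tmuletau}, this forces equality. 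Finally, since $\cost(\tau,\cdot)$ is decreasing with $\cost(\tau,\mu_\tau)=\rho>1=\cost(\tau,1/\tau)$, we must have $\mu_\tau<1/\tau$.

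I expect the main obstacle to be the first paragraph: rigorously identifying $\mu_\tau$ with $\min\{\mu:\cost(\tau,\mu)\le\rho\}$. This demands the unimodality-plus-projection argument to guarantee that $T(\tau,\mu)$ genuinely minimizes the left-hand side of~\eqref{eq:Mtau1} over the band carved out by~\eqref{eq:Mtau2}, together with the verification that this band stays nonempty on a whole left-neighbourhood of $\mu_\tau$ so that the minimality contradiction is legitimate. The remaining steps—continuity of $\cost(\tau,\cdot)$, the explicit value at $1/\tau$, and the monotonicity conclusion—are routine.
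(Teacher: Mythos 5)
There is a genuine gap, and it sits in the boundary case $\mu_\tau=0$, which is not exotic: within the range $\rho\in\left(1,\frac{e}{e-1}\right]$ assumed for this lemma, it occurs exactly at $\rho=\frac{e}{e-1}$. Your key step asserts that the feasible $T$-band of \eqref{eq:Mtau2} ``remains nonempty'' for $\mu$ slightly smaller than $\mu_\tau$ because ``its upper endpoint is positive''. That is not the right condition: for $\mu<0$ the two sides of \eqref{eq:Mtau2} cross, since $\rho-2\mu>\rho-\mu$, so the band is \emph{empty}, $T(\tau,\mu)$ is undefined, and no pair $(\mu,T)$ with $\mu<0$ can lie in $M_\tau(\rho)$, regardless of \eqref{eq:Mtau1}. (Your earlier claim that the band is nonempty ``for $\mu<\rho$'' has the same flaw; the correct statement is $0\le\mu<\rho$.) Consequently, when $\mu_\tau=0$ your contradiction argument collapses: from $\cost(\tau,0)<\rho$ you cannot produce a feasible pair with $\mu<\mu_\tau$, because there are none, so minimality is not violated. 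What is needed in this case --- and it is exactly what the paper does --- is a direct computation: at $\mu=0$ the constraints \eqref{eq:T_UB} and \eqref{eq:T_LB} pin $e^{T}=\frac{\rho}{\rho-1}$, whence $\cost(\tau,0)=\rho\ln\frac{\rho}{\rho-1}\ge\rho$, the last inequality holding precisely because $\rho\le\frac{e}{e-1}$. Note that your proof never invokes the hypothesis $\rho\le\frac{e}{e-1}$ anywhere; this is a red flag, since for $\rho>\frac{e}{e-1}$ one still has $\mu_\tau=0$ but $\cost(\tau,0)=\rho\ln\frac{\rho}{\rho-1}<\rho$, so the conclusion $\cost(\tau,\mu_\tau)=\rho$ is then false. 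Any correct proof must therefore treat $\mu_\tau=0$ separately and use the upper bound on $\rho$ there.

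Apart from this, your argument is sound and closely parallels the paper's: the unimodality/projection reformulation of $\cost(\tau,\mu)$ as the minimum of the left-hand side of \eqref{eq:Mtau1} over the band of \eqref{eq:Mtau2} is essentially the content of the proof of Lemma~\ref{lem:Tmuletau}; your perturbation step for $\mu_\tau>0$ is the paper's argument; and your derivation of $\mu_\tau<1/\tau$ from monotonicity (Lemma~\ref{lem:costDecr}) together with the computation $\cost(\tau,1/\tau)=1<\rho$ is a pleasant variant of the paper's direct observation that $(\mu,0)\in M_\tau(\rho)$ for $\mu$ slightly below $1/\tau$ --- though as written it inherits the gap, since it relies on $\cost(\tau,\mu_\tau)=\rho$ having been established for all admissible $\rho$.
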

\begin{proof}
	The inequality $\mu_\tau<1/\tau$ can be seen by noting that for $\mu<1/\tau$, we have $(\mu,0)\in M_\tau(\rho)$ provided that $\mu$ is close enough to $1/\tau$. Recall that we assumed $\rho>1$.
	
	For the other statement, the direction ``$\le$'' is immediate from the fact that $\cost(\tau,\mu)$ is the left-hand side of \eqref{eq:Mtau1}. For the other direction, consider first the case $\mu_\tau=0$. Then $(\mu\tau+\rho-\mu-1)e^{T(\tau,\mu)}=\rho$ by definition of $T(\tau,\rho)$, so $\cost(\tau,\mu)=\rho T(\tau,\rho)=\rho\ln\frac{\rho}{\rho-1}\ge\rho$, where the last inequality uses $\rho\le \frac{e}{e-1}$.
	
	Consider now $\mu_\tau>0$. If constraint~\eqref{eq:Mtau1} were strict for $T=T(\tau,\mu_\tau)$, then by continuity of the constraints~\eqref{eq:Mtau1} and \eqref{eq:Mtau2} one could find a pair $(\mu,T)\in M_\tau(\rho)$ with $\mu<\mu_\tau$, contradicting the minimality of $\mu_\tau$.
\end{proof}

As we will see in the next lemma, $\tau\mapsto\mu_\tau$ can have a maximum for $\tau>1$ only if $T(\tau,\mu_\tau)=\tau-1$. In fact, there can be at most one local maximum for $\tau>1$ (cf. Figure~\ref{fig:1localMax}).

\begin{lemma}\label{lem:localMax}
	Suppose $\tau\mapsto \mu_\tau$ has a maximum at $\tau^*>1$, and write $\mu^*:=\mu_{\tau^*}=\max_\tau\mu_\tau$. Then $T(\tau^*,\mu^*)=\tau^*-1$.
\end{lemma}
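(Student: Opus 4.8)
The plan is to freeze the maximal value $\mu^*:=\mu_{\tau^*}=\max_\tau\mu_\tau$ and study the \emph{single}-variable function $\phi(\tau):=\cost(\tau,\mu^*)$, showing that its local maximum at $\tau^*$ can only sit in the regime where $T(\tau,\mu^*)=\tau-1$. Throughout I assume $\mu^*>0$; the complementary case $\mu^*=0$ is degenerate (by Lemma~\ref{lem:rhoTight} it forces $\rho=\tfrac{e}{e-1}$ and $\mu_\tau\equiv 0$, so it does not contribute to the bound of Lemma~\ref{lem:muBoundCase3} that this result feeds into).

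The first key step is to transfer the maximality of $\mu_\tau$ to a maximality of $\phi$. Since $\tau^*$ is a local maximum of $\tau\mapsto\mu_\tau$, every $\tau$ in a neighbourhood satisfies $\mu_\tau\le\mu^*$. As $\cost(\tau,\cdot)$ is decreasing (Lemma~\ref{lem:costDecr}) and $\cost(\tau,\mu_\tau)=\rho$ (Lemma~\ref{lem:rhoTight}), this yields $\phi(\tau)=\cost(\tau,\mu^*)\le\cost(\tau,\mu_\tau)=\rho=\phi(\tau^*)$. Hence $\phi$ has a local maximum at $\tau^*$. (Note $\mu^*<1/\tau^*$, so $\mu^*<1/\tau$ for nearby $\tau$, and $\phi$ is well defined and given by \eqref{eq:costtaumu}.)

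The second step is to analyse $\phi$ regime by regime, using that $T(\tau,\mu^*)$ is the projection of $e^{\tau-1}$ onto $[a,b]$ with $a=\tfrac{\rho-2\mu^*}{c},\ b=\tfrac{\rho-\mu^*}{c}$ and $c=c(\tau)=\mu^*(\tau-1)+\rho-1$ (cf.\ Lemma~\ref{lem:TmuMon}). On the \emph{closed interior regime} $\{e^{\tau-1}\in[a,b]\}$ one has $T=\tau-1$, which is exactly the desired conclusion. On the open lower regime ($e^{\tau-1}<a$, so $c e^{T}=\rho-2\mu^*$) and the open upper regime ($e^{\tau-1}>b$, so $c e^{T}=\rho-\mu^*$), substituting $c e^T=\rho-k\mu^*$ and $T=\ln(\rho-k\mu^*)-\ln c(\tau)$ into \eqref{eq:costtaumu} (with $k=2$, resp.\ $k=1$) gives
\[
\phi(\tau)=k\mu^*\tau+(\rho-k\mu^*)\bigl(\ln(\rho-k\mu^*)-\ln c(\tau)\bigr).
\]
Here $c(\tau)$ is affine and increasing, so $-\ln c(\tau)$ is strictly convex whenever $\mu^*>0$ (indeed $(\ln c)''=-(\mu^*)^2/c^2<0$), while the coefficient $\rho-k\mu^*$ is strictly positive: for $k=2$ because $c e^T>0$ forces $\rho-2\mu^*>0$, and for $k=1$ because $\mu^*<1/\tau^*<1\le\rho$. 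Thus $\phi''=(\rho-k\mu^*)(\mu^*)^2/c^2>0$, i.e.\ $\phi$ is strictly convex on \emph{each} boundary regime.

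The final step is immediate: a strictly convex function has no interior local maximum, so $\tau^*$ can lie in neither the open lower nor the open upper regime; since these together with the closed interior regime partition the parameter range, $\tau^*$ must lie in the closed interior regime, where $T(\tau^*,\mu^*)=\tau^*-1$. The \textbf{main obstacle} is the lower regime: the naive derivative $\partial_\tau\phi=\mu^*(2-e^{T})$ is \emph{not} sign-definite (it vanishes at $e^T=2$), so a monotonicity argument fails there. Recasting $\phi$ as ``affine $+$ positive multiple of $-\ln c(\tau)$'' and invoking strict convexity dissolves this difficulty and treats both boundary regimes uniformly; the only other point requiring care is excluding $\mu^*=0$, where strict convexity and the regime partition degenerate.
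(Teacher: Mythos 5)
Your proof is correct and takes essentially the same route as the paper's: both arguments combine Lemma~\ref{lem:rhoTight} and Lemma~\ref{lem:costDecr} to turn maximality of $\mu_\tau$ into (local) maximality of $\tau\mapsto\cost(\tau,\mu^*)$, derive the identical closed form $k\mu^*\tau+(\rho-k\mu^*)T(\tau,\mu^*)$ for this cost in the regimes where \eqref{eq:T_UB} or \eqref{eq:T_LB} is tight, and then rule out a local maximum there by a derivative computation --- the paper via a first-derivative sign analysis (decreasing-then-increasing), you via strict convexity, which amount to the same calculation, with the paper running the argument by contradiction and you running it directly. Your explicit handling of the degenerate case $\mu^*=0$ (which forces $\rho=\frac{e}{e-1}$ and makes the cost constant in $\tau$, so that both derivative arguments would break down) is a careful addition that the paper's proof silently skips.
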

\begin{figure}
	\centering\includegraphics[width=.5\textwidth]{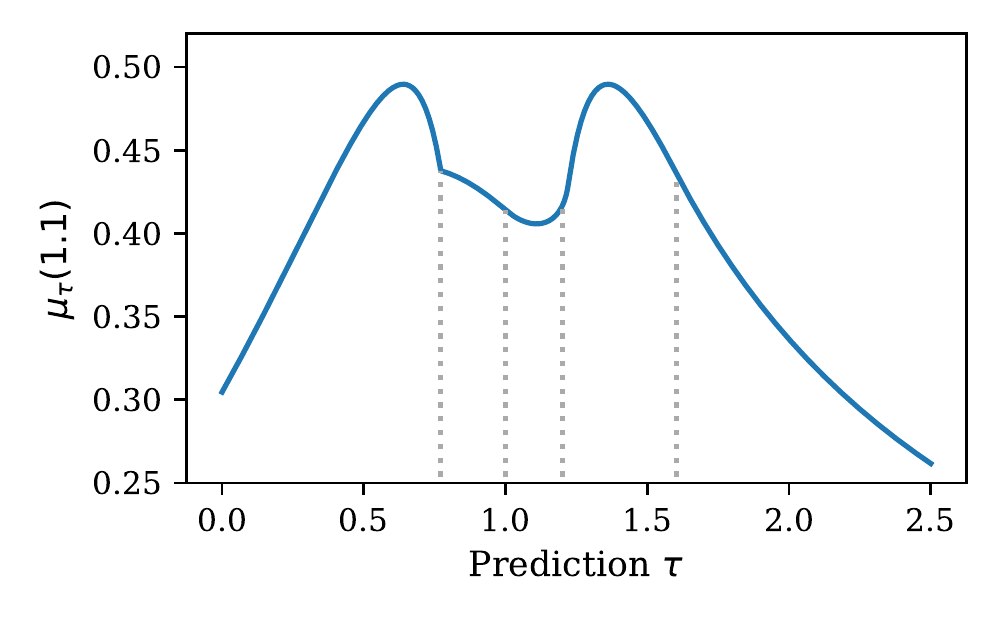}
	\caption{$\mu_\tau(\rho)$ as a function of $\tau$ for $\rho=1.1$. The vertical lines separate the five cases of interest. From left to right, we distinguish Case 1, Case 2, and three subcases of Case 3 ($\tau>1$) where: $P_\infty = \mu$; $0<P_\infty<\mu$; $P_\infty = 0$.}\label{fig:1localMax}
\end{figure}
\begin{proof}	
	Suppose towards a contradiction that $T(\tau^*,\mu^*)\ne \tau^*-1$. We will show that there exists $\tau\ne \tau^*$ such that $\cost(\tau,\mu^*)>\cost(\tau^*,\mu^*)$. Combined with the fact that $\cost(\tau^*,\mu^*)=\rho=\cost(\tau,\mu_\tau)$ (Lemma~\ref{lem:rhoTight}), we get $\cost(\tau,\mu^*)>\cost(\tau,\mu_\tau)$. But then Lemma~\ref{lem:costDecr} implies that $\mu^*<\mu_\tau$, contradicting the choice of $\mu^*$.
	
	Since $T(\tau^*,\mu^*)\ne \tau^*-1$ this means that one of the bounds \eqref{eq:T_UB} or \eqref{eq:T_LB} would be violated by the choice $T=\tau^*-1$. Since the bound is continuous in $\tau$, there is some open interval $I$ around $\tau^*$ such that for all $\tau\in I$, the same bound \eqref{eq:T_UB} or \eqref{eq:T_LB} would still be violated by the choice $T=\tau-1$. So for all $\tau\in I$, the according bound is tight, i.e.,
	\begin{align*}
	T(\tau,\mu^*)= \ln\frac{\rho-i\mu^*}{\mu^* \tau+\rho-\mu^*-1}\qquad \text{ for all $\tau\in I$},
	\end{align*}
	with $i=1$ or $i=2$ depending on which of the two bounds would have been violated.
	Using \eqref{eq:costtaumu} we get
	\begin{align*}
	\cost(\tau,\mu^*)&=\rho\tau+(T(\tau,\mu^*)-\tau)(\rho-i\mu^*)\\
	&=i\mu^* \tau+T(\tau,\mu^*)\cdot(\rho-i\mu^*)\qquad\text{ for all $\tau\in I$}.
	\end{align*}
	
	We next compute the derivatives with respect to $\tau$:
	\begin{align*}
	\frac{d}{d \tau}T(\tau,\mu^*)&=\frac{-\mu^*}{\mu^* \tau+\rho-\mu^*-1}\qquad \text{ for all $\tau\in I$}
	\intertext{and therefore}
	\frac{d}{d\tau}\cost(\tau,\mu^*)&=i\mu^*-\frac{\mu^*}{\mu^* \tau+\rho-\mu^*-1}\cdot(\rho-i\mu^*)\\
	&= \frac{i\mu^*}{\mu^* \tau+\rho-\mu^*-1}\left(\mu^* \tau+\frac{i-1}{i}\rho-1\right)\qquad \text{ for all $\tau\in I$}.
	\end{align*}
	
	Thus, inside $I$, we see that $\cost(\tau,\mu^*)$ is decreasing in $\tau$ for $\tau<\left(1-\frac{i-1}{i}\rho\right)/\mu^*$ and increasing in $\tau$ for $\tau>\left(1-\frac{i-1}{i}\rho\right)/\mu^*$. In particular, $\cost(\tau,\mu^*)$ is maximized when $\tau$ approaches one of the two boundaries of $I$. So we can find $\tau\in I$ with $\cost(\tau,\mu^*)>\cost(\tau^*,\mu^*)$, as desired.
\end{proof}

To obtain an upper bound on $\mu_\tau$, the previous lemma now allows us to eliminate $\tau$ from our equations by either replacing it by $T-1$ or letting it tend towards $1$.

\begin{proof}[Proof of Lemma~\ref{lem:muBoundCase3}]
	For $\mu=\mu_\tau$ and $T=T(\tau,\mu_\tau)$, we get from Lemma~\ref{lem:rhoTight} and equation \eqref{eq:costtaumu} that
	\begin{align*}
	\rho=\rho\tau- (\tau-T)(\mu\tau+\rho-\mu-1)e^{T}.
	\end{align*}
	Rearranging to isolate $\mu$, we get
	\begin{align}
	\mu&=\frac{\rho(\tau-1)-(\tau-T)(\rho-1)e^T}{(\tau-T)(\tau-1)e^T}\nonumber\\
	&=\frac{\rho}{(\tau-T)e^T}-\frac{\rho-1}{\tau-1}.\label{eq:muFge1}
	\end{align}
	Since $\mu\ge 0$, it must be the case that $T(\tau,\mu_\tau)\to 1$ as $\tau\to 1$, so the lower bound \eqref{eq:T_LB} is tight for $\tau$ close to $1$. Writing \eqref{eq:T_LB} with equality and plugging in the limit values $\tau=T=1$, this shows that
	\begin{align*}
	\frac{\rho-2\mu_\tau}{\rho-1}\to e\qquad\text{as }\tau\to1,
	\end{align*}
	so 
	\begin{align*}
	\mu_\tau\to \frac{\rho+e-\rho e}{2}\qquad\text{as }\tau\to1.
	\end{align*}
	If $\mu_\tau$ is \emph{not} maximized when $\tau\to 1$, then let $\tau>1$ be such that $\mu_\tau=\max_{\tau>1}\mu_\tau$. By Lemma~\ref{lem:localMax}, we then have $\tau=T+1$. Plugging this into \eqref{eq:muFge1}, we get
	\begin{align*}
	\mu_\tau&=\frac{\rho}{e^T}-\frac{\rho-1}{T}.
	\end{align*}
	Taking derivatives shows that the right hand side is maximized when $T\in(0,1]$ is the solution to $T^2e^{-T}=\frac{\rho-1}{\rho}$. Then the $\frac{\rho-1}{T}$ term is equal to $\rho Te^{-T}$, so that overall $\mu\le \rho(1-T)e^{-T}$.
\end{proof}

\section{Proof of Theorem~\ref{thm:ski-LB}: Lower bound for ski rental}
\label{sec:proofLB}

We will show for any fixed $\tau$ that one cannot be better than $(\rho,\mu_\tau(\rho))$-competitive (with $\mu_\tau$ as defined in Section~\ref{sec:defMuTau}). This will imply Theorem~\ref{thm:ski-LB} due to the following lemma:

\begin{lemma}
	\label{lem:worst-mu}
	For any $\rho\in\left[1,\frac{e}{e-1}\right]$ there exists $\tau\ge 0$ such that $\mu_\tau(\rho) = \mu(\rho)$.
\end{lemma}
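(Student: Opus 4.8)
The plan is to realize, for each $\rho$, whichever of the two terms in the definition \eqref{eq:mu} of $\mu(\rho)$ is the larger by exhibiting an explicit prediction $\tau$ that makes $\mu_\tau(\rho)$ equal to that term. Write $A(\rho):=\frac{1-\rho\frac{e-1}{e}}{\ln 2}$ for the first term and $B(\rho):=\rho(1-T^*)e^{-T^*}$ for the second, where $T^*\in[0,1]$ is the root of $T^2e^{-T}=1-\frac1\rho$, so that $\mu(\rho)=\max\{A(\rho),B(\rho)\}$, with $A$ dominating exactly for $\rho\ge\tilde\rho$ and $B$ for $\rho\le\tilde\rho$. I would treat these two regimes separately, using the closed form of $\mu_\tau(\rho)$ from \eqref{eq:mutau} for $\tau\le1$ in the first and the implicit description via $M_\tau(\rho)$ for $\tau>1$ in the second.

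In the first regime, $\rho\ge\tilde\rho$, I claim $\tau=\ln 2$ works. Since $\tilde\rho>\frac{e}{e-1+\ln2}$, a short computation shows $(1-\ln2)e^{\ln2-1}=\frac{2(1-\ln2)}{e}\le 2-\frac2\rho$, so $\tau=\ln2$ falls into the second case of \eqref{eq:mutau}. Substituting $e^{\ln2}=2$ collapses the denominator $2-(1-\ln2)e^{\ln2}$ to $2\ln2$, and the formula yields $\mu_{\ln2}(\rho)=\frac{2\left(1-\rho\frac{e-1}{e}\right)}{2\ln2}=A(\rho)=\mu(\rho)$. This regime is thus a direct evaluation of the known closed form.

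In the second regime, $1<\rho\le\tilde\rho$, I take $\tau^*=T^*+1>1$ and aim to show $\mu_{\tau^*}(\rho)=B(\rho)$. Setting $\mu=B(\rho)$ and using $\rho T^{*2}e^{-T^*}=\rho-1$, one computes $\mu\tau^*+\rho-\mu-1=\mu T^*+\rho-1=\rho T^*e^{-T^*}$, hence $(\mu\tau^*+\rho-\mu-1)e^{T^*}=\rho T^*$. Plugging this into the left-hand side of \eqref{eq:Mtau1} gives $\rho(T^*+1)+(T^*-\tau^*)\rho T^*=\rho$, so \eqref{eq:Mtau1} holds with equality; and \eqref{eq:Mtau2} reduces to $\rho-2\mu\le\rho T^*\le\rho-\mu$, whose upper inequality is equivalent to $e^{-T^*}\le1$ and whose lower inequality is equivalent to $T^*\le\ln2$, which is precisely where $\rho\le\tilde\rho$ places us (indeed $T^*<\ln2$ strictly, since $\tilde\rho<1.32$). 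Thus $(B,T^*)\in M_{\tau^*}(\rho)$, which already gives $\mu_{\tau^*}(\rho)\le B(\rho)$. For the reverse inequality I would argue through the cost: because \eqref{eq:Mtau2} holds, $e^{\tau^*-1}=e^{T^*}$ lies interior to the projection interval defining $T(\tau^*,B)$, so $T(\tau^*,B)=T^*$ and therefore $\cost(\tau^*,B)=\rho$ by \eqref{eq:costtaumu}; a one-line check ($e^{-T^*}<1$ for $T^*>0$) shows $B<1/\tau^*$, placing $B$ in the domain on which Lemma~\ref{lem:costDecr} applies. Since $\mu\mapsto\cost(\tau^*,\mu)$ is strictly decreasing and takes the value $\rho$ both at $B$ and, by Lemma~\ref{lem:rhoTight}, at $\mu_{\tau^*}(\rho)$, the two must coincide, giving $\mu_{\tau^*}(\rho)=B(\rho)=\mu(\rho)$.

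The main obstacle is this second regime: unlike the $\tau\le1$ case there is no closed form for $\mu_\tau(\rho)$, so the equality cannot be read off and must instead be forced indirectly, by exhibiting the tight feasible pair $(B,T^*)$ and then invoking strict monotonicity of the cost to rule out any smaller feasible $\mu$. The one genuinely delicate point is the endpoint $\rho=1$, where $T^*=0$ and $\tau^*=1$ degenerates to the Case~2/Case~3 boundary; there $\mu(1)=1$ is only \emph{approached} as $\tau\to1^-$ (in Case~1 one has $\mu_\tau(1)=e^{\tau-1}\to1$) rather than attained at a single $\tau$, so this endpoint should be handled by a separate limiting argument.
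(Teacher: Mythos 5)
Your proof is correct, but it takes a genuinely different route from the paper in the regime $\rho\le\tilde\rho$. For $\rho\ge\tilde\rho$ you and the paper do exactly the same thing: plug $\tau=\ln 2$ into the second branch of \eqref{eq:mutau}. For $\rho<\tilde\rho$, however, the paper stays on the $\tau\le 1$ side: it chooses $\tau=1-T$, checks that this falls into the \emph{first} branch of \eqref{eq:mutau} (the condition $(1-\tau)e^{\tau-1}>2-\frac2\rho$ reduces to $T<\frac12$), and reads off $\mu_\tau(\rho)=-\frac{\rho-1}{1-\tau}+\rho e^{\tau-1}=\rho(1-T)e^{-T}$ in one line from the closed form. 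You instead go to the mirror point $\tau^*=1+T^*>1$, where no closed form exists, and compensate with the Case-3 machinery: you exhibit the pair $(B,T^*)$ satisfying \eqref{eq:Mtau1} with equality and \eqref{eq:Mtau2}, which gives $\mu_{\tau^*}(\rho)\le B$, and then pin down equality by combining $\cost(\tau^*,B)=\rho$ (via \eqref{eq:costtaumu}, since $T(\tau^*,B)=T^*$) with Lemma~\ref{lem:rhoTight} and the \emph{strict} monotonicity of $\mu\mapsto\cost(\tau^*,\mu)$ from Lemma~\ref{lem:costDecr}. This is all sound — note only that your argument genuinely needs strictness of Lemma~\ref{lem:costDecr} (weak monotonicity would not exclude a plateau at value $\rho$), which the paper's proof of that lemma does deliver. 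What each approach buys: the paper's choice is shorter and purely computational; yours is heavier but reveals that the worst-case $\mu$ is attained symmetrically on the $\tau>1$ side as well, and it essentially re-derives, at a single point, the tightness structure underlying the paper's Lemma~\ref{lem:muBoundCase3} (where the same expression $\rho(1-T)e^{-T}$ arises from $\tau=T+1$). Alternatively, you could have saved your feasibility computation some work by citing Lemma~\ref{lem:mutaumu} for the inequality $\mu_{\tau^*}(\rho)\le\mu(\rho)$, but proving it directly is fine.

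Your caution about the endpoint $\rho=1$ is warranted, and in fact you are more careful there than the paper: the paper's own verification that $\tau=1-T$ lies in the first branch of \eqref{eq:mutau} silently requires $T>0$, since at $T=0$ the condition $(1-\tau)e^{\tau-1}>2-\frac2\rho$ reads $0>0$. As you observe, for $\rho=1$ the value $\mu(1)=1$ is only a supremum over $\tau$ (approached as $\tau\to 1^-$ through the first branch, where $\mu_\tau(1)=e^{\tau-1}$) and is not attained at any single $\tau$, so strictly speaking the lemma holds only for $\rho>1$; this is harmless for Theorem~\ref{thm:ski-LB}, which at $\rho=1$ follows from $\cost\ge\opt$ (and the paper anyway reduces to $\rho>1$ by continuity before Theorem~\ref{thm:ski-LB-tau}).
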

\begin{proof}
	Note that the two expressions in the definition \eqref{eq:mu} of $\mu(\rho)$
	yield the same value for $\tilde\rho \approx 1.16$.
	
	Whenever $\rho \geq \tilde\rho$, the first term in the definition \eqref{eq:mu} dominates. In this case, we choose $\tau = \ln 2$.
	Such $\tau$ belongs to the second case of the definition \eqref{eq:mutau}
	of $\mu_\tau(\rho)$ whenever $(1-\tau)e^{\tau-1} \leq 2-\frac2\rho$.
	This is equivalent to $\rho \geq \frac{e}{e- 1 + \ln 2} \approx 1.127$,
	which is in fact smaller than $\tilde \rho$.
	Therefore, for any $\rho \geq \tilde\rho > 1.127$, we have
	\[ \mu_{\tau}(\rho)
	= \frac{e^{\tau} (1-\rho\frac{e-1}{e})}{2 - (1-\tau)e^\tau}
	= \frac{1-\rho\frac{e-1}{e}}{\ln 2} = \mu(\rho). \]
	
	Now, consider $\rho < \tilde\rho$. Here, the second term in the definition \eqref{eq:mu} of $\mu(\rho)$ dominates. For such $\rho$, there is a (unique) $T \in \left[0,\frac{1}{2}\right)$ such that
	$T^2 e^{-T} = 1 - \frac1\rho$. We choose $\tau = 1-T$. This belongs to the first case of the definition \eqref{eq:mutau} of $\mu_\tau(\rho)$, since $(1-\tau)e^{\tau-1} > 2-\frac2\rho$ is equivalent to $1 - \frac1\rho > T(2-\frac2\rho)$, which is satisfied since $T<\frac{1}{2}$.
	Thus,
	\begin{align*}
	 \mu_\tau(\rho) &= -\frac{\rho-1}{1- \tau}+\rho e^{ \tau-1}\\
	&= \rho\left(- \frac{T^2 e^{-T}}{T} + e^{-T}\right) \\
	&= \rho(1-T) e^{-T} \\
	&= \mu(\rho),
	\end{align*}
	where the second equation uses the definition of $T$ and the choice of $\tau$.
\end{proof}

The goal in the remainder of this section is to prove the following theorem, which combined with Lemma~\ref{lem:worst-mu} yields Theorem \ref{thm:ski-LB}.
\begin{theorem}\label{thm:ski-LB-tau}
	For all $\rho \in \left[1, \frac{e}{e-1}\right]$, $\tau\ge 0$ and $\widehat\mu<\mu_\tau(\rho)$, there exists no $(\rho,\widehat\mu)$-competitive algorithm for ski rental instances with prediction $\tau$.
\end{theorem}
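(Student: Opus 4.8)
The plan is to fix $\tau\ge 0$ and $\widehat\mu<\mu_\tau(\rho)$, consider an arbitrary randomized algorithm for unit-cost instances ($\alpha=\beta=1$) with prediction $\tau$, described by the survival function $S(x):=1-F_\tau(x)=\P[\text{buying time}>x]$, and show that the family of competitiveness inequalities \eqref{eq:compIntegral} for all $x\ge 0$ is incompatible with $\widehat\mu<\mu_\tau(\rho)$; exhibiting any single violated $x$ then produces the required hard instance, and a lower bound for unit costs transfers to general $\alpha,\beta$ by the same scaling as in Lemma~\ref{lem:unit-costs}. The first step is to rewrite the expected cost in the clean form $\cost(x)=\int_0^x S(t)\,dt + 1 - S(x)$: renting until the buying time contributes $\int_0^x S$, and the buying cost is paid with probability $1-S(x)=F_\tau(x)$.

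The engine of the proof is the identity
\[ \int_0^\infty e^{-x}\cost(x)\,dx = 1, \]
valid for \emph{every} buying distribution. It follows from $\frac{d}{dx}\bigl(e^{-x}S(x)\bigr)=-e^{-x}\cost'(x)$ (using $\cost'(x)=S(x)-S'(x)$) by integration by parts together with $e^{-z}S(z)\to 0$; more generally one gets the restricted identity $\int_s^\infty e^{-x}\cost(x)\,dx=e^{-s}\bigl(\cost(s)+S(s)\bigr)=e^{-s}\bigl(1+\int_0^s S\bigr)$. Averaging $\cost(x)\le\rhs(x)$ against the weight $e^{-x}$ over $[0,\infty)$ and invoking the identity yields the single necessary condition $\int_0^\infty e^{-x}\,\rhs(x)\,dx\ge 1$. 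A direct computation unfolds this into $\rho\frac{e-1}{e}+\widehat\mu\bigl(\tau-1+2e^{-\tau}\bigr)\ge 1$, that is, exactly $\widehat\mu\ge \frac{e^\tau(1-\rho\frac{e-1}{e})}{2-(1-\tau)e^\tau}$, which is the second branch of \eqref{eq:mutau}. This settles every $\tau$ in the Case~2 regime, where by Observation~\ref{obs:tight2} the optimal algorithm is tight for all $x$ and so the averaged inequality is tight.

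In the other regimes the competitiveness constraint is binding only for $x\ge\tau$ (Observations~\ref{obs:tight1} and~\ref{obs:tight3}), so the weight $e^{-x}$ must be concentrated there. I would therefore use the restricted identity at $s=\tau$: averaging over $[\tau,\infty)$ gives $\cost(\tau)+S(\tau)\le e^{\tau}\int_\tau^\infty e^{-x}\rhs(x)\,dx$, and I would pair this with a matching \emph{lower} bound on $\cost(\tau)+S(\tau)=1+\int_0^\tau S$ coming from the constraints on $[0,\min\{1,\tau\}]$, which force sufficiently aggressive early buying; the residual $\int_0^\tau S$ is controlled via monotonicity of $S$ and the $x=0$ constraint $S(0)\ge 1-\widehat\mu\tau$. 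For $\tau<1$ in the Case~1 regime, matching the breakpoints to the tight instance of Observation~\ref{obs:tight1} collapses these two bounds to the closed form $\widehat\mu\ge -\frac{\rho-1}{1-\tau}+\rho e^{\tau-1}$, the first branch of \eqref{eq:mutau}; together with Lemma~\ref{lem:worst-mu} this already yields Theorem~\ref{thm:ski-LB}.

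The main obstacle is the case $\tau>1$, where $\mu_\tau(\rho)$ has no closed form and is defined only implicitly through the set $M_\tau(\rho)$. Here the plan is to reduce competitiveness to membership in $M_\tau(\rho)$: evaluating \eqref{eq:compIntegral} at $x=\tau$ gives $\cost(\tau)\le\rho$ (since $\rhs(\tau)=\rho$), while the constraints on $[0,1]$ force $\cost(\tau)$ to be at least the value attained by the extremal exponential family, which equals the left-hand side of \eqref{eq:Mtau1} for the corresponding breakpoint $T$, mirroring the computation of $\cost(\tau)$ in Lemma~\ref{lem:case3Equiv}. Non-negativity of the tail probability and of the buying density then translate into \eqref{eq:Mtau2}, so that $(\widehat\mu,T)\in M_\tau(\rho)$ and hence $\widehat\mu\ge\mu_\tau(\rho)$ by definition. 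The delicate part is the variational (exchange) argument showing that, among all distributions respecting the $[0,1]$-constraints, the exponential family minimizes $\cost(\tau)$; this is where the non-concavity of $\rhs$ for $\tau>1$ and the freedom in the tail obstruct a clean pointwise comparison, and where I expect the bulk of the technical effort to lie.
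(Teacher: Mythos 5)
Your exponential-averaging identity $\int_0^\infty e^{-x}\cost(x)\,dx=1$ is correct (a Fubini computation over the survival function shows it holds for an arbitrary buying distribution, atoms included), and it is a genuinely different device from the paper's proof: the paper never averages constraints, but instead compares an arbitrary CDF $\widehat F$ against the CDF $F$ of its own algorithm run with $\mu=\mu_\tau(\rho)$, using a first-point-of-deviation argument (Lemmas~\ref{lem:LB-greater} and~\ref{lem:LB-smaller}) together with the tightness sets of Lemma~\ref{lem:algoTight}. Your identity does settle, in one line, every $\tau$ for which the second branch of \eqref{eq:mutau} is active, since $\int_0^\infty e^{-x}\rhs(x)\,dx=\rho\frac{e-1}{e}+\widehat\mu(\tau-1+2e^{-\tau})$ and the resulting inequality is exactly $\widehat\mu\ge\frac{e^\tau(1-\rho\frac{e-1}{e})}{2-(1-\tau)e^\tau}$. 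Your Case-1 sketch can also be completed, but not with the ingredient you name: since $S$ is non-increasing, the $x=0$ constraint $S(0)\ge 1-\widehat\mu\tau$ gives no lower bound on $S(t)$ for $t>0$. What works is the constraint at $x=\tau$ itself, $\int_0^\tau S+F(\tau)\le\rho\tau$, combined with monotonicity $\int_0^\tau S\ge\tau\bigl(1-F(\tau)\bigr)$ to force $F(\tau)\le\frac{(\rho-1)\tau}{1-\tau}$; feeding this into your restricted identity at $s=\tau$, namely $1+\int_0^\tau S\le e^\tau\int_\tau^\infty e^{-x}\rhs(x)\,dx=\rho(1+\tau)-\rho e^{\tau-1}+\widehat\mu$, yields precisely $\widehat\mu\ge\rho e^{\tau-1}-\frac{\rho-1}{1-\tau}$. (Note that your alternative pairing, a Gr\"onwall bound from the constraints on all of $[0,\tau]$, reproduces only the second branch again, so the choice of which constraints enter is not cosmetic.)

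The genuine gap is Case 3 ($\tau>1$), which is the crux of the theorem precisely because $\mu_\tau(\rho)$ has no closed form there. Your plan rests on the assertion that the constraints on $[0,1]$ force $\cost(\tau)$ to be at least the value attained by the extremal exponential family; this is exactly the statement that needs proof, and you defer it as "the bulk of the technical effort." Moreover, as stated, the reduction to membership in $M_\tau(\rho)$ does not parse for an arbitrary competitor: an arbitrary constraint-satisfying distribution has no canonical breakpoint $T$, so there is no pair $(\widehat\mu,T)$ to which \eqref{eq:Mtau1} and \eqref{eq:Mtau2} can be attached, and the left inequality of \eqref{eq:Mtau2} is not a non-negativity condition at all --- it encodes $P_\infty\le\mu$, i.e., the competitiveness slope condition for $x>\tau$, which must be extracted from the tail constraints rather than from positivity of a density. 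The paper avoids this variational problem entirely: it establishes where its own algorithm is tight (Lemma~\ref{lem:rhoTight}, Observation~\ref{obs:tight3}), then shows any deviation of $\widehat F$ from $F$ strictly loses at some tight point, splitting into the sub-cases $\widehat F(\tau)\ge F(\tau)$ and $\widehat F(\tau)<F(\tau)$ and exploiting that $\tau-T\le1$ whenever $P_\infty>0$ and $\tau-T\ge1$ whenever $P_\infty<\mu$. Until you supply the exchange argument (and a definition of the competitor's $T$), your proposal proves the theorem only for $\tau\le1$.
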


Let $\rho\in\left(1,\frac{e}{e-1}\right]$ and a prediction $\tau\ge 0$ be fixed for the remainder of this section and write $\mu_\tau=\mu_\tau(\rho)$. Note that by continuity, it suffices to show the theorem for $\rho>1$.

The idea to prove Theorem~\ref{thm:ski-LB-tau} is to show that any algorithm must have at least the same expected cost as our algorithm on some instance for which the analysis of our algorithm is tight. Denote by $\cost(x)$ the expected cost of our algorithm for $\mu=\mu_\tau$ when the skiing season ends at time $x$, and denote by $\opt(x)=\min\{x,1\}$ the optimal cost and by $\eta(x)=|\tau-x|$ the prediction error. The following lemma summarizes the cases for which the analysis of our algorithm is tight.

\begin{lemma}\label{lem:algoTight}
	For $\mu=\mu_\tau$, our algorithm satisfies $\cost(x)=\rho\opt(x)+\mu_t\eta(x)$ in the following cases:
	\begin{itemize}
		\item In Case 1 for all $x\ge \tau$.
		\item In Case 2 for all $x\ge 0$.
		\item In Case 3 for all $x\in [0,T]\cup\{\tau\}$, and if $P_\infty=\mu$, then also for all $x > \tau$.
	\end{itemize}
\end{lemma}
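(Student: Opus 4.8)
The plan is to reduce the lemma, at $\mu=\mu_\tau$, to the three tightness statements already recorded in the excerpt, namely Observations~\ref{obs:tight1}, \ref{obs:tight2} and \ref{obs:tight3}, and then simply to invoke them. So the real task is to verify that, when we run $\alg_{\rho,\mu_\tau,\tau}$, the free parameters $P_0,a,b,P_\infty$ of the CDF attain exactly the boundary values that make those observations applicable. The guiding principle throughout is that $\mu_\tau$ is the \emph{minimal} admissible $\mu$, which should force every relevant well-definedness constraint to be tight.

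\textbf{Case~3 ($\tau>1$).} This is the easiest and follows directly from what is already proved. By Lemma~\ref{lem:rhoTight} we have $\mu_\tau<1/\tau$, so Observation~\ref{obs:tight3} applies and immediately gives tightness of \eqref{eq:compIntegral} for all $x\in[0,T]$. At $x=\tau$, Lemma~\ref{lem:rhoTight} gives $\cost(\tau,\mu_\tau)=\rho$, while $\rhs(\tau)=\rho\min\{\tau,1\}+\mu_\tau|\tau-\tau|=\rho$ since $\tau>1$; hence \eqref{eq:compIntegral} is tight at $x=\tau$ as well. The second part of Observation~\ref{obs:tight3} then yields tightness for all $x>\tau$ under the extra hypothesis $P_\infty=\mu$, which is precisely the conditional clause in the statement, so nothing further is needed.

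\textbf{Cases~1 and~2 ($\tau\le 1$).} Here I would show that minimality of $\mu_\tau$ drives the algorithm into the ``tight'' branch of its own well-definedness analysis: $b=\tau$ in Case~1 and $a=b=\tau$ in Case~2, together with $P_\infty=\mu$. The key identity is that the threshold value of $\mu$ for which $b=\tau$ is exactly $\mu_\tau$: substituting the first expression of \eqref{eq:mutau} into the Case~1 formula $b=1+\ln\frac{\mu+\frac{\rho-1}{1-\tau}}{\rho}$ gives $b=1+\ln e^{\tau-1}=\tau$, and substituting the second expression of \eqref{eq:mutau} into the Case~2 formula $b=1+\ln\frac{\mu(2-(1-\tau)e^{\tau})+e^{\tau}(\rho-1)}{\rho}$ likewise gives $b=\tau$; these are the same identities underlying Lemma~\ref{lem:conditionCases}, read at equality. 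Once $b=\tau$ is established, I note that $b<1$ (as $\tau<1$), so the buying mass $\int_{\tau}^{1}\rho e^{t-1}\,dt$ is strictly positive; since the probabilities sum to $1$ this forces $P_0+P_\infty<1$, and because $P_\infty=\min\{\mu,1-P_0\}$ we cannot have $P_\infty=1-P_0$, so $P_\infty=\mu$. In Case~2 one additionally checks $a=\tau$, i.e.\ that the full mass $\int_0^\tau(\mu\tau+\rho-\mu-1)e^t\,dt$ keeps the total probability at most $1$; this is the same slack computation and is exactly what places us in the ``Otherwise'' branch rather than the degenerate one that would set $b=1$. With these boundary values in hand, Observations~\ref{obs:tight1} and~\ref{obs:tight2} deliver tightness for all $x\ge\tau$ and all $x\ge 0$ respectively.

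\textbf{Main obstacle.} The hard part will be the bookkeeping that matches the two conditions in the definition \eqref{eq:mutau} of $\mu_\tau$ to the algorithm's Case~1 versus Case~2 regimes: I must confirm that at $\mu=\mu_\tau$ the operative branch is the first expression of \eqref{eq:mutau} precisely when the algorithm is in Case~1 and the second expression precisely in Case~2, so that the clean identity $b=\tau$ applies and we are never instead in a degenerate branch with $P_\infty=1-P_0$ or $b=1$. Since this equivalence is already captured (at equality) by Lemma~\ref{lem:conditionCases}, the substantive content is available, and what remains is a routine but careful endpoint check of $a,b,P_\infty$ against the hypotheses of Observations~\ref{obs:tight1}--\ref{obs:tight3}. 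I expect no genuinely new difficulty, only the need to dispatch the $\tau=1$ boundary and the Case~1/Case~2 transition by continuity.
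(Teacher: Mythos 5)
Your overall strategy is the same as the paper's: reduce everything to Observations~\ref{obs:tight1}, \ref{obs:tight2}, \ref{obs:tight3}, and verify that at $\mu=\mu_\tau$ the algorithm's parameters sit at the boundary values $b=\tau$ (Case~1), $a=b=\tau$ (Case~2) and $P_\infty=\mu$. Your Case~3 argument (Lemma~\ref{lem:rhoTight} giving $\mu_\tau<1/\tau$ and $\cost(\tau,\mu_\tau)=\rho$, then Observation~\ref{obs:tight3}) is exactly the paper's, and the substitutions you perform for Cases~1 and~2 are algebraically identical to the computations in the paper's proof. Your attention to matching the two branches of \eqref{eq:mutau} with the algorithm's Case~1/Case~2 conditions is also well placed; the paper leaves this implicit, and it does follow from running the algebra of Lemma~\ref{lem:conditionCases} in both directions.

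There is, however, one logical ordering problem in your Cases~1 and~2. The closed form $b=1+\ln\frac{\mu+\frac{\rho-1}{1-\tau}}{\rho}$ (and its Case~2 analogue) is the definition of $b$ \emph{only} in the non-degenerate branch of the algorithm, i.e., when $P_\infty=\min\{\mu,1-P_0\}=\mu$ (and, in Case~2, additionally $a=\tau$); in the degenerate branch one has $b=1$ and no buying density at all, and the formula does not apply. So your chain ``the formula gives $b=\tau<1$, hence the buying mass is positive, hence $P_\infty\neq 1-P_0$, hence $P_\infty=\mu$'' presupposes the very branch membership it is meant to establish. The repair is to invert the order, which is precisely what the paper does: read your substitution not as a formula for $b$ but as the identity
\[
P_0+\mu_\tau+\int_\tau^1\rho e^{t-1}\,dt=1
\qquad\text{(resp.\ } P_0+\mu_\tau+\int_0^\tau(\mu_\tau\tau+\rho-\mu_\tau-1)e^t\,dt+\int_\tau^1\rho e^{t-1}\,dt=1\text{ in Case~2)}.
\]
Since the integrals are nonnegative, this yields $\mu_\tau\le 1-P_0$ directly, hence $P_\infty=\mu_\tau$ with no reference to $b$; only then do the defining equation of $b$ and the maximality defining $a$ force $b=\tau$ (and $a=\tau$). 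This ordering also disposes of the $\tau=1$ endpoint of Case~2, where the mass on $(b,1]$ vanishes and your strict-positivity argument would have nothing to work with. With this reordering your proof coincides with the paper's.
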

\begin{proof}
\begin{description}
	\item[Case 1:] By Observation~\ref{obs:tight1}, it suffices to show that $b=\tau$ and $P_\infty=\mu$. 
	By definition of $\mu=\mu_\tau$, we have
	\begin{align*}
	P_0+\mu+\int_\tau^1\rho e^{t-1} dt&=\frac{\tau(\rho-1)}{1-\tau}-\frac{\rho-1}{1- \tau}+\rho e^{ \tau-1} + \rho-\rho e^{\tau-1}\\
	&=1.
	\end{align*}
	This implies $P_\infty=\mu$ since otherwise we would have $1=P_0+P_\infty<P_0+\mu\le1$, a contradiction. By definition of $b$, this also yields $b=\tau$.
	\item[Case 2:] By Observation~\ref{obs:tight2}, it suffices to show that $a=b=\tau$ and $P_\infty=\mu$. We have
	\begin{align*}
	P_0+\mu+ \int_{0}^\tau (\mu\tau+\rho-\mu-1)e^t dt
	+ \int_{\tau}^1 \rho e^{t-1} dt&= \mu\tau+\mu+ (\mu\tau+\rho-\mu-1)(e^\tau-1)
	+ \rho -\rho e^{\tau-1} dt\\
	&= \mu\left(2+(\tau-1)e^\tau)\right) +1+e^\tau(\rho-1-\rho/e)\\
	&= e^{ \tau}\left(1-\rho\frac{e-1}{e}\right) + 1+e^\tau(\rho-1-\rho/e)\\
	&= 1,
	\end{align*}
	implying $P_\infty=\mu$ by the same argument as before, and therefore also $a=b=\tau$ by definition of $a$ and $b$.
	\item[Case 3:] Follows from Observation~\ref{obs:tight3} and Lemma~\ref{lem:rhoTight}.\qedhere
\end{description}
\end{proof}

Let $\alg$ denote our algorithm (for prediction $\tau$ and $\mu=\mu_\tau$) and $F$ its CDF for the time of buying. Note that $F$ is continuous.

Suppose that a $(\rho,\widehat\mu)$-competitive algorithm $\widehat\alg$ for instances with prediction $\tau$ exists for some $\widehat\mu<\mu_\tau(\rho)$, and denote by $\widehat F$ its CDF. We may assume without loss of generality that $\widehat F$ is continuous (otherwise, $\widehat F$ can be approximated arbitrarily well by a continuous CDF whose corresponding algorithm is $(\rho,\widehat\mu')$-competitive for $\widehat\mu'$ arbitrarily close to $\widehat\mu$). Denote by $ \widehat\cost(x)$ the expected cost of $\widehat\alg$ when the skiing season has length $x$. To prove Theorem~\ref{thm:ski-LB-tau}, it suffices to show that there exists $x$ for which $\cost(x)$ is tight (as per Lemma~\ref{lem:algoTight}) and $\widehat\cost(x)>\cost(x)$.

Let $y=\inf\{t\ge 0\colon \widehat F(t)\ne F(t)\}$. Note that $y<\infty$ must exist, since otherwise $ \widehat\alg$ is the same as $\alg$,
which cannot achieve $\widehat\mu < \mu_\tau(\rho)$ by Lemma~\ref{lem:algoTight}. By continuity of $ \widehat F$ and $F$, we either have $y=0$
or $ \widehat F(t) = F(t)$ for all $t\leq y$. Also by continuity, there exists $y'>y$ such that either $\widehat F(t) > F(t)$ for all $t\in(y,y']$ or  $\widehat F(t) < F(t)$ for all $t\in(y,y']$. We may assume $y'<y+1$.

\begin{lemma}
\label{lem:LB-greater}
If $\widehat F(t) > F(t)$ for all $t\in(y,y']$, then there exists $z>y$ such that $\widehat\cost(z)>\cost(z)$.
\end{lemma}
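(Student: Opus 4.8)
The plan is to reduce the claim to a one-line comparison of the two cost functions and then extract the desired point $z$ by an integrating-factor argument. The first step is to record that for \emph{any} distribution of the buying time with cumulative distribution function $G$, the expected cost on a season of length $x$ (with $\alpha=\beta=1$) can be written as $\cost_G(x)=G(x)+x-\int_0^x G(t)\,dt$. This follows from the definition $\cost_G(x)=\int_{[0,x]}(1+t)\,dG(t)+x\bigl(1-G(x)\bigr)$ via Riemann--Stieltjes integration by parts, $\int_{[0,x]}(1+t)\,dG(t)=(1+x)G(x)-\int_0^x G(t)\,dt$, where the boundary term at $0$ vanishes because $G$ places no mass below $0$. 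Writing $D:=\widehat F-F$ and subtracting the two instances of this formula gives the clean identity
\[ \widehat\cost(x)-\cost(x)=D(x)-\int_0^x D(t)\,dt. \]

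Next I would record the two sign properties of $D$ that drive the argument. By the definition of $y$ we have $D(t)=0$ for all $t<y$, so $\int_0^y D(t)\,dt=0$; and by hypothesis $D(t)>0$ for all $t\in(y,y']$, so $\int_0^{y'}D(t)\,dt=\int_y^{y'}D(t)\,dt>0$. Both statements use only the continuity of $\widehat F$ and $F$, together with $D\ge 0$ on $[0,y']$.

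The key step is then an integrating-factor identity. Set $H(x):=\int_0^x D(t)\,dt$, so that $\widehat\cost(x)-\cost(x)=D(x)-H(x)$, and note that $H$ is continuously differentiable with $H'=D$. Hence
\[ \frac{d}{dx}\Bigl(H(x)e^{-x}\Bigr)=\bigl(D(x)-H(x)\bigr)e^{-x}=\bigl(\widehat\cost(x)-\cost(x)\bigr)e^{-x}. \]
Integrating over $[y,y']$ and using $H(y)=0$ yields
\[ \int_y^{y'}\bigl(\widehat\cost(x)-\cost(x)\bigr)e^{-x}\,dx=H(y')e^{-y'}>0. \]
Since the integrand is continuous and its integral over $[y,y']$ is strictly positive, there must exist some $z\in(y,y')$ with $\widehat\cost(z)-\cost(z)>0$; as $z>y$, this is exactly the assertion of the lemma.

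The only real obstacle is bookkeeping in the first step: the CDFs may carry an atom at time $0$ (buying immediately), so the cost must be treated as a Riemann--Stieltjes integral and the formula $\cost_G(x)=G(x)+x-\int_0^x G(t)\,dt$ justified with that atom in mind. Once this identity is in place, the integrating-factor computation converts the awkward task of controlling $D(z)$ against its own running integral pointwise into a single sign computation. I would also remark that this lemma supplies only the strict cost \emph{inequality}; it is combined with the tightness of $\cost$ on the relevant range (Lemma~\ref{lem:algoTight}) in the proof of Theorem~\ref{thm:ski-LB-tau} to contradict the assumed $(\rho,\widehat\mu)$-competitiveness of $\widehat\alg$.
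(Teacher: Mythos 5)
Your proof is correct, but it takes a genuinely different route from the paper's. You first establish the closed-form identity $\cost_G(x)=G(x)+x-\int_0^x G(t)\,dt$ (valid for both CDFs, atom at $0$ included), subtract to get $\widehat\cost(x)-\cost(x)=D(x)-\int_0^x D(t)\,dt$ with $D=\widehat F-F$, and then use the integrating factor $e^{-x}$ to show $\int_y^{y'}\bigl(\widehat\cost(x)-\cost(x)\bigr)e^{-x}\,dx=H(y')e^{-y'}>0$, which forces the integrand to be positive at some $z>y$. The paper instead argues directly: it treats the sub-case $y=0$ with $\widehat F(0)>F(0)$ separately (via continuity of the cost functions), and otherwise picks $z$ as the maximizer of $\widehat F-F$ on $(y,y']$, comparing the extra expected buying cost $\widehat F(z)-F(z)$ against the rental savings, which are at most $(z-y)\bigl(\widehat F(z)-F(z)\bigr)$; the conclusion then hinges on the normalization $y'<y+1$ fixed just before the lemma. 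Your argument buys two things: it needs neither the restriction $y'-y<1$ nor a separate treatment of the atom-at-zero case, and the cost-difference identity is a reusable tool (it also makes Lemma~\ref{lem:LB-smaller} nearly immediate). The paper's argument is more elementary --- no Stieltjes integration by parts --- and exhibits an explicit witness $z$ (the argmax) rather than a nonconstructive sign argument. Both establish exactly the stated claim, and your concluding remark about how the lemma feeds into Theorem~\ref{thm:ski-LB-tau} via Lemma~\ref{lem:algoTight} matches the paper's usage.
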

\begin{proof}
If $y=0$ and $ \widehat F(0) > F(0)$, then $\widehat\cost(0) > \cost(0)$ and the desired $z>0$ must exist by continuity of $\cost$ and $\widehat\cost$.

Otherwise, $\widehat F(t) = F(t)$ for all $t\leq y$  and therefore the expected cost incurred until time $y$ is the same for both algorithms. Let $z=\arg\max_{z\in(y,y']} \{\widehat F(z)-F(z)\}$. In the time interval $(y,z]$, the expected buying cost incurred by $\widehat\alg$ is precisely $\widehat F(z)-F(z)$ greater than that of $\alg$, and the rental cost saved by $\widehat\alg$ compared to \alg is at most $(z-y)(\widehat F(z)-F(z))$. Since $z-y\le y'-y<1$, this saving in rental cost cannot make up for the greater buying cost, and therefore $\widehat\cost(z)>\cost(z)$.
\end{proof}

\begin{lemma}
\label{lem:LB-smaller}
If $\widehat F(t) < F(t)$ for all $t\in(y,y']$,
then there is $z>y$ such that $ \widehat\cost(z) > \cost(z)$.
In particular, we can choose $z=\min\{z>y\mid \widehat F(z) = F(z)\}$ if it exists.
Otherwise, $z$ can be chosen very large so that, e.g., $z > \tau$.
\end{lemma}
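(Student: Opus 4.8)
The plan is to reduce the whole lemma to one clean identity for the cost gap and then split along the two cases named in the statement. The starting point: for any continuous buying strategy whose buying time has CDF $G$, the expected cost at season length $x$ decomposes into buying cost plus rental cost as $G(x)+\int_0^x\big(1-G(t)\big)\,dt$, because the expected buying cost is $G(x)$ (cost $1$, paid iff one has bought by time $x$) and the expected rental duration is $\int_0^x\big(1-G(t)\big)\,dt$. Applying this to both $F$ and $\widehat F$ and writing $D:=\widehat F-F$ (so that $D\equiv 0$ on $[0,y]$ by the definition of $y$), I obtain the identity
\[
\widehat\cost(x)-\cost(x)=D(x)-\int_0^x D(t)\,dt=D(x)-\int_y^x D(t)\,dt ,
\]
which drives everything that follows. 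The lemma's hypothesis is that $D<0$ on $(y,y']$.

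First I would handle the case where $z:=\min\{t>y:\widehat F(t)=F(t)\}$ exists. Since $D<0$ on $(y,y']$ and $z$ is by definition the first zero after $y$ of the continuous function $D$, the intermediate value theorem forces $D<0$ on all of $(y,z)$ together with $D(z)=0$. Substituting $x=z$ into the identity gives $\widehat\cost(z)-\cost(z)=-\int_y^z D(t)\,dt>0$, which is exactly the claimed inequality for this choice of $z$.

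Next I would treat the remaining case, where no such $z$ exists. Then $\widehat F(t)\ne F(t)$ for every $t>y$, so by continuity together with $D<0$ just after $y$ we get $D<0$ on all of $(y,\infty)$. Here I evaluate the identity as $x\to\infty$. The quantity $J(x):=-\int_y^x D(t)\,dt=\int_y^x\big(F(t)-\widehat F(t)\big)\,dt$ is positive and non-decreasing, hence converges to some $J_\infty\in(0,\infty]$ (positive because already $J(y')>0$), while $D(x)=\widehat F(x)-F(x)$ converges to some $D_\infty\le 0$ since $F$ and $\widehat F$ are bounded and monotone. The crucial observation is that $D_\infty+J_\infty>0$ in all cases: if $J_\infty=\infty$ this is immediate as $D$ is bounded, and if $J_\infty<\infty$ then $\int_y^\infty(-D)$ converges with a non-negative convergent integrand, which forces $D_\infty=0$ and leaves $J_\infty>0$. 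Therefore $\widehat\cost(x)-\cost(x)=D(x)+J(x)\to D_\infty+J_\infty>0$, so the gap is positive for all sufficiently large $x$, and one may in particular pick $z>\tau$.

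The cost identity is routine bookkeeping, so the genuinely delicate point is this last case: I must show that the rental-cost advantage that $\widehat\alg$ accumulates by buying later eventually outweighs its (bounded) saving in buying cost. The dichotomy $J_\infty=\infty$ versus $J_\infty<\infty\Rightarrow D_\infty=0$ is precisely what rules out the dangerous scenario in which $\widehat F$ stays just below $F$ forever while the gap shrinks so fast that its integral stays small; this is the step where the argument would fail without the monotonicity and boundedness of the two CDFs, and hence the one I would justify most carefully.
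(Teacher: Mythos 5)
Your proof is correct and follows essentially the same route as the paper's: the same case split (take $z$ to be the first point after $y$ where the CDFs coincide if it exists, otherwise take $z$ large), with the gap in the first case coming from equal buying cost but strictly larger rental cost on $(y,z)$, and in the second case from the rental-cost excess eventually dominating the bounded buying-cost saving. The only difference is one of rigor: you make the paper's informal assertion in the unbounded case precise via the identity $\widehat{\cost}(x)-\cost(x)=D(x)-\int_y^x D(t)\,dt$ with $D=\widehat F - F$, together with the dichotomy $J_\infty=\infty$ versus $J_\infty<\infty\Rightarrow D_\infty=0$, which tightens but does not change the argument.
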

\begin{proof}
If $ \widehat F(t) < F(t)$ for all $t > y$, then for $z$ large enough it holds
that $ \widehat\cost(z) > \cost(z)$, since $\widehat \alg$'s greater rental cost will eventually outweigh its smaller buying cost.
Clearly, if we choose $z$ sufficiently large, we also get the additional property
that $z > \tau$.

Otherwise, by continuity we can choose $z=\min\{z>y\mid \widehat F(z) = F(z)\}$.
Since $\widehat F(z) = F(z)$, the expected buying cost is the same for both algorithms.
On the other hand, since $\widehat F(t) < F(t)$ for all $t \in (y,z)$,
i.e., $ \widehat\alg$ is delaying buying, the expected rental cost of $ \widehat\alg$
is higher than that
of $\alg$.
\end{proof}

To complete the proof of Theorem~\ref{thm:ski-LB-tau}, we will show that in each of the three cases from the definition of our algorithm, there exists an $x$ satisfying the condition of Lemma~\ref{lem:algoTight} and such that $\widehat\cost(x)>\cost(x)$ or, when $x\ne\tau$ (so that $\eta>0$), $\widehat\cost(x)\ge\cost(x)$.

\begin{description}
	\item[Case 1:] By Lemma~\ref{lem:algoTight}, $\cost(x)$ is tight for all $x\geq \tau$.
	
	Recall from the definition of our algorithm that $F(0)=F(\tau)$. 
	
	If $\widehat F(\tau)>F(0)=F(\tau)$, then $\widehat\alg$'s expected buying cost up until time $\tau$ exceeds that of $\alg$ by exactly $\widehat F(\tau)-F(0)$, and its reduction in renting cost is at most $\tau(\widehat F(\tau)-F(0))$. Since $\tau<1$ in Case 1, the reduction cannot make up for the excess and we get $\widehat\cost(\tau)>\cost(\tau)$.
	
	Otherwise, $\widehat F(\tau)\le F(\tau)$.
	Lemmas~\ref{lem:LB-greater} and~\ref{lem:LB-smaller}
	give us $z$ such that $ \widehat\cost(z) > \cost(z)$.
	If $z \geq \tau$,
	we are done. Otherwise, we have
	\[  \widehat\cost(\tau) \geq  \widehat\cost(z) + (\tau-z)(1- \widehat F(\tau))
	> \cost(z) + (\tau-z)(1-F(\tau)) = \cost(\tau),
	\]
	since $ \widehat\cost(z) > \cost(z)$ and $ \widehat F(\tau) \leq F(\tau)$.
	\item[Case 2:] By Lemma~\ref{lem:algoTight}, $\cost(x)$ is tight for all $x\geq 0$. Thanks to Lemmas~\ref{lem:LB-greater} and~\ref{lem:LB-smaller}, we are done.
	
	\item [Case 3:] By Lemma~\ref{lem:algoTight}, $\cost(x)$ is tight for all $x\in [0,T]\cup\{\tau\}$, and if $P_\infty=\mu$, then also for all $x > \tau$.
	
	If $\widehat F(0)\ge F(0)$, then we choose $x=0\ne\tau$, noting that $\widehat\cost(0)=\widehat F(0)\ge F(0)=\cost(0)$.
	
	Otherwise $\widehat F(0)< F(0)$. 
	Lemma~\ref{lem:LB-smaller} gives us $z$ such that
	$ \widehat\cost(z) > \cost(z)$.
	Recall that $T$ is chosen within the interval
	defined by \eqref{eq:T_UB} and \eqref{eq:T_LB} as close as possible to
	$\tau - 1$. Thus, we observe that $T < \tau-1$ only if \eqref{eq:T_UB} is tight and
	$P_\infty=0$. Similarly, $T>\tau-1$ only if \eqref{eq:T_LB}
	is tight and $P_\infty = \mu$. We consider two cases.
	\begin{itemize}
		\item $\widehat F(\tau) \geq F(\tau)$. By continuity of
		$\widehat F$ and Lemma~\ref{lem:LB-smaller}, we have
		$z\leq \tau$ and $\widehat F(z) = F(z)$. If
		$z\in[0,T]\cup\{\tau\}$, we are done, so we may assume that
		$T<z<\tau$. By definition of $T$,
		either $P_\infty=0$ or $\tau-T \leq 1$.
		If $P_\infty = 0$, then $ \widehat F(t) = F(t) = 1$ for any $t\geq z$
		and we have $ \widehat\cost(\tau) =  \widehat\cost(z)
		> \cost(z) = \cost(\tau)$.
		Otherwise, by considering the additional buying and
		rental cost after time $z$, we have
		\begin{align*}
		\widehat\cost(\tau) &\geq \widehat\cost(z) + \big( \widehat F(\tau)
		- \widehat F(z)\big) + (\tau - z)\big(1- \widehat F(\tau)\big)\\
		&> \cost(z) + (\tau-z)\big( \widehat F(\tau) - \widehat F(z)\big)
		+ (\tau-z)\big(1-\widehat F(\tau)\big)\\
		&= \cost(z) + (\tau-z)(1-F(\tau))\\
		&= \cost(\tau),
		\end{align*}
		where the strict inequality follows from $\widehat\cost(z) > \cost(z)$ and $\tau-z < \tau-T \leq 1$,
		and the equations use $\widehat F(z) = F(z) = F(\tau)$, recalling that our algorithm does not buy after time $T$.
		
		\item $\widehat F(\tau) < F(\tau)$. If there exists $t\in[0,T]$ such that $\widehat F(t)=F(t)$, then $z\in[0,T]$ by Lemma~\ref{lem:LB-smaller} and we are done. Otherwise, using $\widehat F(\tau) < F(\tau)$, monotonicity and continuity of the CDFs, and the fact that $F(t)$ is constant for $t\ge T$, we get that $\widehat F(t)<F(t)$ for all $t\in[0,\tau]$. Then $z>\tau$ by Lemma~\ref{lem:LB-smaller}. If $P_\infty = \mu$, then
		$\cost(z)$ is tight and
		therefore it is enough to consider the case $P_\infty < \mu$, where
		$\tau-T \geq 1$ by definition of $T$.
		
		In this case, we choose $x=\tau$. To compare $\widehat\cost(\tau)$ and $\cost(\tau)$, we will separately consider the contribution coming from buying cost, rental cost in the interval $[0,T)$ and rental cost in the interval $[T,\tau]$. Since $\widehat F(t)\le \widehat F(\tau)<F(\tau)=F(t)$ for all $t\in[T,\tau]$, the rental cost of $\widehat\alg$ in the time interval $[T,\tau]$ exceeds that of $\alg$ by at least $(\tau-T)\left(F(\tau)-\widehat F(\tau)\right)\ge F(\tau)-\widehat F(\tau)$, where the inequality is due to $\tau-T \geq 1$. The benefit in buying cost of $\widehat\alg$ compared to $\alg$, which amounts to $F(\tau)-\widehat F(\tau)$, can at best make up for this. Since $\widehat F(t)<F(t)$ for all $t\in[0,\tau]$, the rental cost of $\widehat\alg$ in the remaining time interval $[0,T)$ is strictly greater than that of $\alg$, so that overall $\widehat\cost(\tau)>\cost(\tau)$. Here, we used that $T>0$ by Fact~\ref{fact:T0}.\footnote{This implicitly assumes that $\rho>1$, since in Section~\ref{sec:Case3Comp} we assumed $\mu<1/\tau$, which in Section~\ref{sec:boundingMu3} was proved to be true for $\mu=\mu_\tau$ assuming $\rho>1$. However, by continuity Theorem~\ref{thm:ski-LB-tau} also extends to the case $\rho=1$. Alternatively, we can get strict inequality even for $T=0$ by noting that then $\tau-T=\tau>1$.}
	\end{itemize}
\end{description}

\section{Reduction from DPM to ski rental}
\label{sec:DPM-1}

We now give a reduction from DPM to ski rental in the learning-augmented setting (Lemma~\ref{lem:redDPMSki}), provided that the ski rental algorithm satisfies the following monotonicity property: We say that a ski rental algorithm for rental cost $\alpha=1$ and buying cost $\beta=1$ is \emph{monotone} if its CDF $F_{\tau}$ for the buying time when given prediction $\tau$ satisfies
\begin{align*}
F_{\tau}(t)\le F_{\tau'}(t) \qquad \text{for all }t \geq 0\text{ and }\tau<\tau'.
\end{align*}

Intuitively, this property is very natural: The longer the predicted duration of skiing, the greater should be our probability of buying. Indeed, our algorithm satisfies this property:
\begin{lemma}\label{lem:monotone-cdf}
For $\mu=\mu(\rho)$, the $(\mu,\rho)$-competitive ski rental algorithm
from Section~\ref{sec:ski} is monotone, i.e., its CDF $F_\tau$ when given
prediction $\tau$ satisfies $F_{\tau}(t)\le F_{\tau'}(t)$
for all $t \geq 0$ and $\tau < \tau'$.
\end{lemma}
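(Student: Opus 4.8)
The plan is to fix an arbitrary time $t\ge 0$ and show that the scalar function $\tau\mapsto F_\tau(t)$ is non-decreasing. Since $F_\tau(t)$ is jointly continuous in $(\tau,t)$ (the buying density is bounded and the breakpoints $a,b,T$ move continuously with $\tau$, including across the case boundaries at $\tau=1-\frac{\rho-1}{\mu}$ and $\tau=1$, where one checks that $P_0$, the density formulas and $P_\infty$ match up), it suffices to verify that $\frac{\partial}{\partial\tau}F_\tau(t)\ge 0$ on each maximal interval of $\tau$ on which $t$ stays within a single piece of the CDF; continuity then glues these pieces into a globally non-decreasing function. Thus I never need to track the (possibly non-monotone) motion of individual breakpoints directly, only the sign of a derivative within each region.

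The key structural observation is that the CDF has only three region types, and in each the $\tau$-derivative is transparent. In the \emph{early region} (where $p_t=(\mu\tau+\rho-\mu-1)e^t$, i.e.\ $t\le a$ in Case~2 or $t\le T$ in Case~3) we have $F_\tau(t)=\mu\tau+(\mu\tau+\rho-\mu-1)(e^t-1)$, whence $\frac{\partial}{\partial\tau}F_\tau(t)=\mu e^t>0$. In the \emph{late region} of Cases~1 and 2 (where $p_t=\rho e^{t-1}$ and $P_\infty=\mu$) a short computation using $P_0+\!\int\! p=1-\mu$ shows the universal identity $F_\tau(t)=\rho e^{t-1}+1-\rho-\mu=:\Phi(t)$, which is \emph{independent} of $\tau$, so the derivative is $0$. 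Finally, on the \emph{gap/tail} (where $p_t=0$) $F_\tau(t)$ equals the plateau value, which in Cases~1--2 is $\Phi(b)$ and in Case~3 is $1-P_\infty=1-\rho+\mu+(\mu\tau+\rho-\mu-1)e^{T}$; in Case~1 this plateau is simply $P_0=\frac{\tau(\rho-1)}{1-\tau}$, with derivative $\frac{\rho-1}{(1-\tau)^2}>0$.

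It therefore remains to prove that the single plateau scalar is non-decreasing in $\tau$. For Cases~1--2, since $\Phi$ is increasing, $\Phi(b)$ is non-decreasing iff $b$ is, and I would establish this through the binding-constraint dichotomy: when the late region is non-empty the plateau equals $\mu\tau+(\mu\tau+\rho-\mu-1)(e^{a}-1)$, whose $\tau$-derivative simplifies to $(\mu\tau+\rho-1)e^{a}>0$ when $a=\tau$, while in the saturated subcase the plateau is pinned at $1-\mu$ (constant). For Case~3 I would show $(\mu\tau+\rho-\mu-1)e^{T(\tau)}$ is non-decreasing by the same dichotomy: it equals the constant $\rho-\mu$ or $\rho-2\mu$ whenever bound \eqref{eq:T_UB} or \eqref{eq:T_LB} is tight, and equals $(\mu\tau+\rho-\mu-1)e^{\tau-1}$ with positive derivative $(\mu\tau+\rho-1)e^{\tau-1}$ when $T=\tau-1$; continuity of this quantity at the subcase boundaries (and the matching of its two expressions at $t=T$) finishes Case~3. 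I expect the main obstacle to be precisely this last step — verifying that the plateau height (equivalently, the total probability mass deposited before the gap) is non-decreasing in $\tau$ even though the breakpoints $a,b,T$ can move non-monotonically — together with the bookkeeping needed to confirm joint continuity of $F_\tau(t)$ across all case and subcase boundaries so that the piecewise argument may be glued.
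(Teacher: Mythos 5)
Your proposal is correct and takes essentially the same route as the paper's proof: both check that the distributions match at the case boundaries and then argue monotonicity in $\tau$ region by region --- the early-region mass $\mu\tau+(\mu\tau+\rho-\mu-1)(e^t-1)$ increases in $\tau$, the late-region CDF in Cases 1--2 is independent of $\tau$, and in Case 3 the plateau $1-P_\infty$ is non-decreasing via the identical dichotomy (a tight bound \eqref{eq:T_UB} or \eqref{eq:T_LB} pins $P_\infty$ at $0$ or $\mu$, while $T=\tau-1$ makes $P_\infty$ decreasing). The only difference is presentational: you phrase it as a fix-$t$, differentiate-in-$\tau$-and-glue argument, whereas the paper argues directly that each component of the CDF is monotone in $\tau$.
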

\begin{proof}
One can verify that at the boundary between adjacent cases of the algorithm, both cases define the same probability distribution (i.e., $\tau =1-\frac{\rho-1}{\mu}$ as the boundary between Case 1 and Case 2, and $\tau=1$ as the boundary between Case 2 and Case 3). In particular, for $\tau=1$ this can be seen by separately considering $\mu\ge 1/2$ (where Case 3 would give $T=0$) and $\mu<1/2$ (which makes~\eqref{eq:T_LB} tight, so that Case 3 would give $P_\infty=\mu$). It remains to show that $F_{\tau}(t)$ is non-decreasing as $\tau$ increases \emph{within} one of the three cases.

\textbf{Cases 1 and 2} ($\boldsymbol{\tau \leq 1}$)\textbf{.}
Here, $P_0$ is increasing in $\tau$ and so is
$(\tau\mu + \rho - \mu -1)e^t$.
Therefore, $F_{\tau}(t) = P_0 + \int_0^t p_t dt$
is increasing in $\tau$ as long as $t\leq b$.
If $F_{\tau}(\tau) \geq 1-\mu$, then $p_t=0$ for each $t > \tau$ and
we are done.
Otherwise, note that $F_{\tau}(t) = 1-\mu$ for any $t\geq1$. And, going
backwards in time, $F_{\tau}(t)$ has the same value for any $\tau$ until
it reaches $F_{\tau}(b)$ at $t=b$.

\textbf{Case 3} ($\boldsymbol{\tau> 1}$)\textbf{.}
Both $P_0$ and $(\tau\mu + \rho - \mu -1)e^t$ are increasing in $\tau$,
therefore $F_{\tau}(t)$ is increasing in $\tau$ until
it reaches $1-P_\infty$ at $t=T$.
Therefore, it is enough to show that $P_\infty$ is
non-increasing in $\tau$.
Either one of \eqref{eq:T_UB} and \eqref{eq:T_LB} is tight,
or we have $T = \tau-1$.
In the first case, $P_\infty$ is equal to $0$ or $\mu$, i.e., independent
on $\tau$. Otherwise, $T=\tau-1$ implies that $e^T$
is increasing in $\tau$, and therefore
$P_\infty$ is decreasing in $\tau$.
\end{proof}

As mentioned earlier, for many $\tau$ one could actually achieve a better $\mu_\tau(\rho)<\mu(\rho)$. However, somewhat surprisingly the optimal such algorithm would \emph{not} be monotone.
The monotonicity of our algorithm therefore crucially relies on our specific description (in particular the choice of $a$ and $b$), which only aims for $(\rho,\mu(\rho))$-competitiveness with $\mu(\rho)=\sup_\tau\mu_\tau(\rho)$.

Combining Theorem~\ref{thm:ski-UB}, Lemma~\ref{lem:redDPMSki} and Lemma~\ref{lem:monotone-cdf}, we get:
\begin{coro}\label{cor:rhoMuDPM}
	For every $\rho\in[1,\frac{e}{e-1}]$, there is a $(\rho,\mu(\rho))$-competitive algorithm for DPM.
\end{coro}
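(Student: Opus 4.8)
The plan is to assemble the corollary directly from the three results that immediately precede it, since $\mu(\rho)$ is exactly the error coefficient delivered by our ski rental algorithm and the reduction to DPM was purpose-built to preserve the $(\rho,\mu)$-competitiveness guarantee.

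First I would fix an arbitrary $\rho\in[1,\frac{e}{e-1}]$ and invoke Theorem~\ref{thm:ski-UB} with the parameter $\mu=\mu(\rho)$ from Equation~\eqref{eq:mu}, which produces a randomized ski rental algorithm that is $(\rho,\mu(\rho))$-competitive. The key point is that this is precisely the algorithm $\alg_{\rho,\mu(\rho),\tau}$ of Section~\ref{sec:algDescription} instantiated with the $\tau$-independent choice $\mu=\mu(\rho)$, rather than the tighter instance-dependent $\mu_\tau(\rho)$.

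Second, I would apply Lemma~\ref{lem:monotone-cdf}, which certifies that this specific algorithm (with $\mu=\mu(\rho)$) is monotone: its CDF $F_\tau$ satisfies $F_\tau(t)\le F_{\tau'}(t)$ for all $t\ge 0$ and $\tau<\tau'$. It is essential to use the uniform coefficient $\mu(\rho)$ here rather than $\mu_\tau(\rho)$, because the instance-optimal version fails to be monotone (cf. Figure~\ref{fig:cdfs}), so only the $\mu(\rho)$ algorithm is eligible as input to the reduction. Finally, feeding this monotone $(\rho,\mu(\rho))$-competitive ski rental algorithm into Lemma~\ref{lem:redDPMSki} yields a $(\rho,\mu(\rho))$-competitive algorithm for DPM, completing the proof.

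As for the main obstacle: the corollary itself is an immediate composition and carries no independent difficulty — all the genuine work lies upstream. The delicate ingredient being invoked is the monotonicity of Lemma~\ref{lem:monotone-cdf}, which forces us to commit to the worst-case coefficient $\mu(\rho)=\sup_\tau\mu_\tau(\rho)$ rather than the per-prediction optimum; verifying that monotonicity holds both within each of the three cases and across the case boundaries (at $\tau=1-\frac{\rho-1}{\mu}$ and $\tau=1$) is where the only subtlety appears, and that verification is discharged inside Lemma~\ref{lem:monotone-cdf} rather than in the corollary.
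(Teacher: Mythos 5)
Your proposal is correct and matches the paper's own proof exactly: the paper obtains the corollary by combining Theorem~\ref{thm:ski-UB}, Lemma~\ref{lem:monotone-cdf}, and Lemma~\ref{lem:redDPMSki}, precisely as you do, including the observation that the monotonicity requirement forces the uniform choice $\mu=\mu(\rho)$ rather than the per-prediction optimum $\mu_\tau(\rho)$.
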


To prove Lemma~\ref{lem:redDPMSki}, it suffices to describe a $(\rho,\mu)$-competitive algorithm for the special case of DPM with a single idle period: Running such an algorithm for each individual period yields a $(\rho,\mu)$-competitive algorithm for DPM with any number of idle periods, since we can simply sum inequality \eqref{eq:rhomu} over all periods to obtain the corresponding inequality for the entire instance.

Consider now a single idle period of length $\ell$ for DPM. We first recall some observations of \citet{IraniSG03} about the optimal \emph{offline} algorithm: It is easy to see that the optimal offline algorithm would transition to some state $j$ only once at the beginning of the period and remain there throughout the period, paying cost $\alpha_j\ell+\beta_j$. Thus, state $j$ is preferred over state $j-1$ if and only if $\alpha_{j-1} \ell + \beta_{j-1} > \alpha_j\ell + \beta_j$, or equivalently
$\ell > t_j := \frac{\beta_j-\beta_{j-1}}{\alpha_{j-1}-\alpha_j}$. We may assume without loss of generality that $t_1 < \dots <  t_k$: Indeed, suppose
$t_{j+1} \le  t_j$, then state $j$ is redundant because whenever $j$ is preferred over $j-1$, then $j+1$ is preferred over $j$.
Defining $t_0:=0$ and $t_{k+1}:=+\infty$, we get
a partition $[0,+\infty) = \bigcup_{j=0}^k I_j$, where
$I_j = [t_j, t_{j+1})$.
We can then express the cost of the offline optimum as
\begin{equation}
\label{eq:dpm-1-opt}
\opt = \alpha_{j^*} \ell + \beta_{j^*},
	\text{ with $j^*$ such that } \ell \in I_{j^*}.
\end{equation}

In the online setting, we of course do not know $\ell$.
The idea of our algorithm (similar to~\cite{LotkerPR12}) is to simulate $k$ ski rental algorithms $\alg_1,\dots,\alg_k$
in parallel, where the task of $\alg_j$ is to decide whether it is time
to transition from the state $j-1$ to $j$. For this, we choose $\alg_j$ to be an algorithm for ski rental with rental cost $\alpha_{j-1}-\alpha_j$ and buying cost $\beta_j-\beta_{j-1}$. Let $F_{\tau}$ be the CDF of the buying time of a \emph{monotone} ski rental algorithm (for $\alpha=\beta=1$) when given prediction $\tau$. Recalling our reduction from arbitrary $\alpha$ and $\beta$ to the case $\alpha=\beta=1$ in Lemma~\ref{lem:unit-costs}, the CDF of $\alg_j$ is given by
\begin{align}
F^{j}(t):=F_{\tau/t_j}\left(t/t_j\right).\label{eq:Fj}
\end{align}
An outline of our algorithm is given in Algorithm~\ref{alg:DPM-1}.

\begin{algorithm2e}
\caption{DPM with a single idle period}
\label{alg:DPM-1}
\For{j=1,\dots,k}{
	Let $F^j$ be as defined by~\eqref{eq:Fj}, induced by a monotone $(\rho,\mu)$-competitive ski rental algorithm\;
}
Choose $p \in [0,1]$ uniformly at random\;

At any time $t$: choose state $j = \max\{j\colon F^j(t)\geq p\}$\;
\end{algorithm2e}

\paragraph{Proof of Lemma~\ref{lem:redDPMSki}.}

We now show that Algorithm~\ref{alg:DPM-1} is $(\rho,\mu)$-competitive for DPM instances with a single idle period, completing the proof of Lemma~\ref*{lem:redDPMSki}.

Denote by $A_j:=\inf\{t\,\vert\; F^j(t) \geq p\}$ the random variable for the time when $\alg_j$ triggers the transition to state $j$. As shown in Lemma~\ref{lem:monotone-cdf}, the monotonicity of the ski rental algorithm yields $F^{j-1}(t)\ge F^j(t)$ for all $t$. Therefore, $A_1 \leq \dotsb \leq A_k$.

For a prediction $\tau$ and an idle period of length $x=\ell$, let
\[ \cost_j=(\beta_j-\beta_{j-1})
	\cdot \1_{A_j\le x}+(\alpha_{j-1}-\alpha_j)\cdot\min\{A_j,x\}
\]
be the random variable for the cost of $\alg_j$.
Since $\alg_j$ is $(\rho,\mu)$-competitive,
its expected cost is at most
\begin{align*}
E(\cost_j) \le
\rho \cdot \min\{ (\alpha_{j-1}-\alpha_j) x, \beta_j-\beta_{j-1} \}
	+ \mu \cdot (\alpha_{j-1}-\alpha_j) \cdot |\tau-x|.
\end{align*}

Let $\bj$ be the maximal $j$ such that $A_j\le x$. Writing $A_0:=0$ and
$A_{k+1}:=\infty$, the cost of Algorithm~\ref{alg:DPM-1} is
\begin{align*}
\cost &= \beta_{\bj} + \sum_{j=1}^{\bj} (A_j-A_{j-1})\alpha_{j-1} + (x-A_{\bj})\alpha_{\bj}\\
&= \sum_{j=1}^k (\beta_j-\beta_{j-1})\cdot \1_{A_j\le x}+\sum_{j=1}^{k+1}(\min\{A_j,x\}-\min\{A_{j-1},x\})\alpha_{j-1}\\
&= \sum_{j=1}^k (\beta_j-\beta_{j-1})\cdot \1_{A_j\le
x}+\sum_{j=1}^k(\alpha_{j-1}-\alpha_j)\min\{A_j,x\}+\alpha_kx,
\end{align*}
which is precisely $\alpha_k x + \sum_{j=1}^k\cost_j$.
Taking expectations, we get
\begin{align*}
E(\cost)
&\le \alpha_k x + \rho\sum_{j=1}^k \min \big\{ (\alpha_{j-1}-\alpha_j) x,
	\beta_j-\beta_{j-1}\big\}
	+ \mu\cdot|\tau-x|\sum_{j=1}^k(\alpha_{j-1}-\alpha_j)\\
&= \alpha_k x + \rho \big((\alpha_{j^*}-\alpha_k)x + \beta_{j^*}-\beta_{0} \big)
	+ \mu \cdot |\tau-x| \cdot (\alpha_0-\alpha_k),
\end{align*}
where $j^*$ denotes the optimal state such that
$\opt = \alpha_{j^*} x + \beta_{j^*}$.
The equality holds since
\begin{align*}
\beta_j-\beta_{j-1}\le x\cdot(\alpha_{j-1}-\alpha_j)\iff t_j\le x \iff j \leq j^*.
\end{align*}
The theorem now follows since $\beta_0=0$, $\alpha_k \geq 0$ and $1\le \rho$.
\qed

\subsection{Conversion to a prudent algorithm}
\label{section:prudent}

It was shown by~\citet{LotkerPR12} that any DPM algorithm can be converted (online) into one that assigns a non-zero probability to at most two adjacent power states, and the resulting algorithm can only have smaller expected cost than the original algorithm. More precisely, for a given time, denote by $p_i$ the probability that the algorithm is in power state $i$. We call the probability vector $(p_0,\dots,p_k)$ \emph{prudent} if it is of the form $(0,\dots,0,p_{i-1},p_i,0,\dots,0)$ for some $i=1,\dots k$. A DPM algorithm is prudent if its probability vector is prudent at all times. Note that our DPM algorithm is \emph{not} prudent: Indeed, our ski rental algorithm typically buys with non-zero probability already at time $0$, and thus the DPM algorithm obtained via the above reduction assigns non-zero probability to \emph{all} power states at time $0$. Since conversion to a prudent algorithm using the method of~\cite{LotkerPR12} reduces an algorithm's cost, any implementation should apply this conversion, which we now describe.

To make the DPM algorithm prudent, we need to convert each non-prudent probability vector $p=(p_0,\dots,p_k)$ into a prudent one $\tilde p$. Let $B_p=\sum_i p_i\beta_i$ be the expected wake-up cost of vector $p$, and let $m=\max\{i\mid \beta_i\le B_p\}$ be the deepest state whose wake-up cost is at most $B_p$. Note that if $m=k$, then $p=(0,\dots,0,1)$ is already prudent. Otherwise, $m<k$ and define
\begin{align*}
\tilde p_i = \begin{cases}
\frac{\beta_{m+1}-B_p}{\beta_{m+1}-\beta_m}\quad&i=m\\
1-\tilde p_m&i=m+1\\
0&\textit{otherwise.}
\end{cases}
\end{align*}
It was shown in \cite[Theorem 4.2]{LotkerPR12} that when replacing any non-prudent vector $p$ by a prudent vector $\tilde p$ in this way, the resulting algorithm pays the same expected wake-up cost but less running cost than the original non-prudent algorithm.

\section{Finding the best trade-off online}
\label{section:BlumBurch}

Our goal is to design an algorithm whose performance almost matches that of Corollary~\ref{cor:rhoMuDPM} simultaneously for \emph{all} $\rho$,
proving Theorem~\ref{thm:DPM}.
It will be useful to view DPM as a Metrical Task System.

\paragraph{Metrical Task Systems (MTS).}
Metrical Task Systems (MTS), introduced by \citet{BorodinLS92},
is a broad class of online problems containing many other problems as special cases.
In MTS, we are given a metric space $M$ of {\em states}.
We start at a predefined initial state $x_0$.
At each time $t = 1, 2, \dotsc, T$, we are presented with
a {\em cost function} $c_t\colon M \to \R_+$.
Then, we have to choose our new state $x_t$ and pay
$\dist(x_{t-1}, x_t) + c_t(x_t)$, where $\dist(x_{t-1}, x_t)$ is the distance
between $x_{t-1}$ and $x_t$ in $M$.
The objective is to minimize the overall cost incurred over time.

To formulate DPM as a Metrical Task System, we choose states
$0, 1, \dotsc, k$ corresponding to the power states, with distances
$\dist(i, j) = \frac12 |\beta_i - \beta_j|$, so that
the cost of switching from the state 0 to $j$ and back is
$\beta_j$. We choose $0$ as the initial state.
We discretize time
in the DPM instance using time steps of some small length $\delta>0$. At each time step
belonging to some idle period, we issue a cost function
$c$ such that $c(j) = \delta \alpha_j$ for each $j = 0, \dotsc, k$.
At the end of each idle period, we issue a cost function
where $c(0) = 0$ and $c(j) = +\infty$ for $j=1, \dotsc, k$,
which forces any algorithm to move back to the active state.

We use the result of \citet{BlumB00}
to combine multiple instances of our algorithm with
different parameters $\rho$. 

\begin{theorem}[\citet{BlumB00}]
\label{thm:BB}
There is an algorithm which,
given $N$ online algorithms $A_1,\dots A_N$ for an MTS with diameter $D$
and $\epsilon_1 < 1/2$, achieves expected cost at most
\[ (1+\epsilon_1)\cdot \min_i \{\cost(A_i)\} + O(D/\epsilon_1)\ln N. \]
\end{theorem}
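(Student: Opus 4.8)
The plan is to read this as the problem of \emph{prediction with expert advice under switching costs} and to prove it with a multiplicative-weights (Hedge) argument, the only twist being that changing one's mind about which algorithm to follow costs movement in the metric space. First I would treat each $A_i$ as an expert and let the combining algorithm act \emph{fractionally}: at each (discretized) time step $t$ it maintains a distribution $p^t=(p_1^t,\dots,p_N^t)$ over the experts and is located at the state of $A_i$ with probability $p_i^t$. Writing $\ell_i^t$ for the per-step cost incurred by $A_i$ (its service cost plus its own internal movement), the expected cost of the combined strategy at step $t$ splits into the weighted cost $\sum_i p_i^t \ell_i^t$ and the cost of re-routing probability mass when $p^t\ne p^{t-1}$, the latter being at most $\tfrac{D}{2}\|p^t-p^{t-1}\|_1$ because each unit of relocated mass travels at most the diameter $D$.

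For the weights I would use the Hedge update $w_i^t\propto\exp\!\bigl(-\eta\sum_{s<t}\ell_i^s\bigr)$ with $p^t=w^t/\|w^t\|_1$, and --- crucially --- set the learning rate to $\eta\asymp\epsilon_1/D$. Sending the discretization length $\delta\to 0$ makes each per-step loss infinitesimal, so the usual second-order error term in the Hedge analysis vanishes and the potential argument on $\Phi^t=\tfrac1\eta\ln\sum_i w_i^t$ gives the clean bound
\[
\sum_t\sum_i p_i^t\,\ell_i^t \;\le\; \min_i\sum_t \ell_i^t \;+\; \frac{\ln N}{\eta}
\;=\; \min_i \cost(A_i) \;+\; O\!\Bigl(\frac{D}{\epsilon_1}\Bigr)\ln N,
\]
with no multiplicative loss yet. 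This controls the service-and-internal-movement portion of the combined algorithm's cost.

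The hard part, and the reason the diameter $D$ enters, is the mass-redistribution term $\tfrac{D}{2}\sum_t\|p^t-p^{t-1}\|_1$, which has no analogue in the plain experts setting. I would bound it by showing that Hedge shifts mass slowly: in the $\delta\to 0$ limit the instantaneous rate satisfies $\|\dot p\|_1=\eta\sum_i p_i|\ell_i-\bar\ell|\le 2\eta\,\bar\ell$, where $\bar\ell=\sum_i p_i\ell_i$, so the total distributional path length is at most $2\eta$ times the weighted total cost. Hence the redistribution cost is at most $D\eta\sum_t\sum_i p_i^t\ell_i^t$, and plugging in the display from the previous step together with $\eta\asymp\epsilon_1/D$ turns this into $\epsilon_1\min_i\cost(A_i)+O(D\ln N)$. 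Adding the two contributions yields exactly $(1+\epsilon_1)\min_i\{\cost(A_i)\}+O(D/\epsilon_1)\ln N$; note that it is precisely this movement term that supplies the multiplicative $(1+\epsilon_1)$ slack. The main obstacle to make rigorous is therefore the slow-movement estimate on $p$ and the balancing of $\eta$ against $D$: one must verify that the weight vector never moves so fast that redistribution swamps the gain from following a good expert, which is exactly the feature distinguishing MTS-combining from standard online learning.
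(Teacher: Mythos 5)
This paper never proves Theorem~\ref{thm:BB}: it is imported as a black box from \citet{BlumB00}, so the only meaningful comparison is with that source's argument. Measured against it, your reconstruction is essentially faithful: Blum and Burch likewise run multiplicative weights over the $N$ algorithms, realize the distribution $p^t$ by a randomized master that pays at most $D$ per unit of probability mass re-routed, bound the re-routing by the slow change of the weight vector, and tune the learning rate $\eta\asymp\epsilon_1/D$ so that the movement term becomes the multiplicative $\epsilon_1$-slack while the Hedge regret supplies the additive $O(D/\epsilon_1)\ln N$. Your two key estimates are both correct as stated in the continuous-time model: $\tfrac{d}{dt}\ln W=-\eta\langle p,\ell\rangle$ gives the regret bound $\int\langle p,\ell\rangle\,dt\le \min_i \cost(A_i)+(\ln N)/\eta$ with no second-order term, and $\dot p_i=\eta p_i(\bar\ell-\ell_i)$ with $\ell_i\ge 0$ gives $\|\dot p\|_1\le 2\eta\bar\ell$; the arithmetic combining them yields exactly the claimed bound.

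The one step that does not survive for a \emph{general} MTS is ``send $\delta\to 0$.'' The discretization length $\delta$ is a feature of this paper's DPM-to-MTS reduction, not of the MTS abstraction: in Theorem~\ref{thm:BB} the cost functions $c_t$ are arbitrary (in this paper they even take the value $+\infty$), so per-step losses cannot be assumed infinitesimal, and with large losses the discrete Hedge analysis based on $p^{t-1}$ acquires second-order terms that do not vanish and can swamp the bound. The standard repair exploits the look-ahead inherent in MTS: at step $t$ the master sees $c_t$, updates to $p^t_i\propto e^{-\eta L_i^t}$ with $L_i^t$ \emph{including} the step-$t$ loss, moves, and only then pays. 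Writing $\Phi^t=-\tfrac1\eta\ln\sum_i e^{-\eta L_i^t}$, one checks $\langle p^t,\ell^t\rangle\le \Phi^t-\Phi^{t-1}$, which is Jensen's inequality for the concave map $u\mapsto -u\ln u$ applied to $u_i=e^{-\eta\ell_i^t}$, and also $\mathrm{TV}(p^{t-1},p^t)\le 1-\sum_i p_i^{t-1}e^{-\eta\ell_i^t}\le \eta\,(\Phi^t-\Phi^{t-1})$; telescoping the potential then reproduces your two displays verbatim, with no smallness assumption on the losses. (Alternatively one can subdivide each $c_t$ into many tiny copies, but the same look-ahead ordering is needed to argue the master's service cost is controlled rather than paid at the pre-update distribution.) With that substitution, your argument is a complete and correct proof of the cited theorem.
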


\paragraph{Proof of Theorem~\ref{thm:DPM}}
For $\epsilon_2 >0$, we choose a set $P \subset [1,\frac{e}{e-1}]$
of $O(1/\epsilon_2)$ values of $\rho$ as follows:
\[ P := \left\{1,\tilde\rho,\frac{e}{e-1}\right\}
	\cup \left\{ \rho_i\,\Big\vert\;
	\mu(\rho_i) = (1+i\epsilon_2) \mu(\tilde\rho), \text{ where }
	i = 1, \dotsc,
	\left\lfloor\frac{1-\mu(\tilde\rho)}{\mu(\tilde\rho)\epsilon_2}\right\rfloor
	\right\},
\]
where $\tilde\rho\approx 1.16$ is as defined below equation \eqref{eq:mu}.
We claim that
\[ \min_{\rho \in P} \{\rho\opt + \mu(\rho)\eta\}
\leq (1+\epsilon_2) \min_{\rho \in [1, e/(e-1)]} \{\rho\opt + \mu(\rho)\eta\}.
\]
If the minimizer $\rho$ of the right-hand side lies in
$\left[\tilde\rho, \frac{e}{e-1}\right]$, then it is one of the endpoints $\rho\in\left\{\tilde\rho,\frac{e}{e-1}\right\}$ since $\mu(\rho)$ is linear in this interval.
If the minimizer is $\rho \in [1, \tilde\rho)$,
then there exists $i$ such that $\rho_- \leq \rho \leq \rho_i$, where $\rho_-=\rho_{i+1}$ if it exists and $\rho_-=1$ otherwise. Here, we have
$\mu(\rho_-) \leq (1+\epsilon_2)\mu(\rho_i) \leq (1+\epsilon_2)\mu(\rho)$,
since $\mu$ is a decreasing function. Thus, $\rho_-\opt+\mu(\rho_-)\eta\le \rho\opt + (1+\epsilon_2)\mu(\rho)\eta$.

Combining our $(\rho,\mu(\rho))$-competitive algorithms for DPM for $\rho \in P$ using Theorem~\ref{thm:BB}, the resulting algorithm has cost at most
\[ (1+\epsilon_1)(1+\epsilon_2)
	\min_{\rho \in [1, e/(e-1)]} \big\{\rho\opt + \mu(\rho)\eta\big\}
	+ O\bigg(\frac{\beta_k}{\epsilon_1} \cdot \ln \frac1{\epsilon_2}\bigg),
\]
since $D=\beta_k/2$ in our case.
This concludes the proof.\qed

\subsection{Remarks on shifting/dynamic regret}

Instead of the connection to MTS and the result of \citet{BlumB00},
we could also use other online learning techniques.
In particular, we could change the parameter $\rho$ of a variant of our algorithm
with bounded cost per iteration (see Section~\ref{sec:ski-bounded} below)
in the beginning of each idle period based on any
algorithm for the {\em experts problem} with
{\em vanishing regret}, see \citep{BianchiL06} for reference.
Under a mild assumption that the average length of the idle periods
is a positive number independent of the number of idle periods, this would also lose only
a factor $(1+\epsilon)$ compared to the best $\rho$.

Of special interest are the results on {\em shifting/dynamic} regret,
see work of \citet{Chen18}, \citet{Bianchi12}, and \citet{HallW13}.
They allow us to reach a
cost comparable not only to the algorithm with the best fixed $\rho$, but
also to the best strategy of switching between multiple values of $\rho$ a
bounded number of times.
For example, the additive regret of Fixed Share Forecaster
\citep{HerbsterW95}
with respect to a strategy with $m$ switches, is at most
\[\textstyle O\big(\sqrt{m T \ln \frac1{\epsilon_2} \ln\frac{m}{T}}\big),\]
when $T$ is the number of idle periods.

This kind of result can be very useful in scenarios where
well-predictable parts of the input are interleaved with unpredictable or
adversarial sequences which make the total prediction error high, forcing any
static strategy to use large $\rho$. The shifting regret approach allows
us to make use of the good predictions outside of the adversarial parts of the
sequence.

\subsubsection{Bounded cost per iteration}
\label{sec:ski-bounded}

The cost of Algorithm~\ref{alg:DPM-1} in one iteration may be arbitrary high
(in case of an infinite prediction error),
making it hard to use with some of the online learning techniques mentioned
above.
However, we can make sure that the cost of Algorithm~\ref{alg:DPM-1} is bounded
in each iteration at a small cost.
In fact, it is enough to ensure bounded cost for the ski rental algorithm
from Section~\ref{sec:ski} used in the construction of Algorithm~\ref{alg:DPM-1}. Note that if $P_\infty>0$, then our algorithm might never buy skis, thus leading to unbounded cost.
We can achieve bounded cost on the expense of worsening $\mu(\rho)$ by a factor of
$(1+\epsilon)$ in the following way.
For $t < u = \frac{\beta}{\alpha} (3 + 1/\epsilon)$,
we define $F_{\tau}(t)$ as for our ski rental algorithm from
Section~\ref{sec:ski} (for general $\alpha$ and $\beta$),
and $F_{\tau}(t) = 1$ for any $t \geq u$, making sure no more costs
are incurred in times $t > u$.

\begin{claim}
The algorithm with modified $F_{\tau}$ is
$\big(\rho,\, (1+\epsilon)\mu(\rho)\big)$-competitive and its cost
is never larger than $\beta(4+1/\epsilon)$.
\end{claim}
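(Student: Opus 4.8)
The plan is to analyze the modified algorithm by comparing it, instance by instance, against the original ski rental algorithm $\alg$ of Section~\ref{sec:ski} (instantiated for general $\alpha,\beta$ via Lemma~\ref{lem:unit-costs}), whose expected cost for a season of length $x$ is $\cost(x)$, and to denote the modified algorithm's expected cost by $\widetilde{\cost}(x)$. The only difference between the two is that the mass $P_\infty$ that $\alg$ assigns to \emph{never} buying is instead assigned by the modified algorithm to buying at exactly time $u=\frac{\beta}{\alpha}(3+1/\epsilon)$; all remaining mass buys inside $[0,\beta/\alpha]$ in both algorithms (the buying support of $\alg$ lies in $[0,\beta/\alpha]$, since the unit-cost support lies in $[0,1]$ and $T\le 1$ in Case~3). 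Since $u>\beta/\alpha$, the two CDFs agree on $[0,u)$.

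First I would dispatch the cost bound. Because $F_\tau(u)=1$, the modified algorithm buys no later than time $u$, so in the worst case it rents until $u$ and then buys, paying $\alpha u+\beta=\beta(3+1/\epsilon)+\beta=\beta(4+1/\epsilon)$; if instead the season ends before buying, the cost is only $\alpha x\le\alpha u$, which is smaller. This establishes the claimed uniform bound.

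For competitiveness I would compare $\widetilde{\cost}(x)$ and $\cost(x)$ directly. For $x<u$ the CDFs agree and the redirected mass still merely rents (the season ends before $u$), so $\widetilde{\cost}(x)=\cost(x)$ and the original guarantee $\cost(x)\le\rho\opt(x)+\mu(\rho)\eta(x)$ suffices. For $x\ge u$, the redirected mass now pays $\alpha u+\beta$ instead of $\alpha x$, hence
\[
\widetilde{\cost}(x)=\cost(x)+P_\infty\bigl((\alpha u+\beta)-\alpha x\bigr)\le\cost(x)+P_\infty\beta\le\cost(x)+\mu(\rho)\beta,
\]
using $x\ge u$ and the fact that $P_\infty\le\mu(\rho)$ in every case (e.g.\ via \eqref{eq:T_LB} in Case~3). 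It then remains to absorb the additive $\mu(\rho)\beta$ into $\epsilon\,\mu(\rho)\eta(x)$, i.e.\ to show $\eta(x)\ge\beta/\epsilon$ whenever $x\ge u$ and $P_\infty>0$.

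The crux is the observation that $P_\infty>0$ forces the unit-scaled prediction $\tau'=\frac{\alpha}{\beta}\tau$ to satisfy $\tau'\le 2$, equivalently $\tau\le 2\beta/\alpha$. In Cases~1 and~2 this is immediate since $\tau'\le 1$. In Case~3, $T$ is the projection of $\tau'-1$ onto the feasible interval cut out by \eqref{eq:T_UB}--\eqref{eq:T_LB}; since $P_\infty>0$ means the upper bound \eqref{eq:T_UB} is not tight, the projection does not land at the upper endpoint, so $T\ge\tau'-1$, and together with $T\le 1$ (Lemma~\ref{lem:case3Equiv}) this yields $\tau'\le 2$. Given $\tau\le 2\beta/\alpha<u$, for $x\ge u$ we obtain
\[
\eta(x)=\alpha(x-\tau)\ge\alpha\Bigl(u-\tfrac{2\beta}{\alpha}\Bigr)=\beta(1+1/\epsilon)\ge\beta/\epsilon,
\]
whence $\mu(\rho)\beta\le\epsilon\,\mu(\rho)\eta(x)$ and therefore $\widetilde{\cost}(x)\le\rho\opt(x)+(1+\epsilon)\mu(\rho)\eta(x)$, establishing $\bigl(\rho,(1+\epsilon)\mu(\rho)\bigr)$-competitiveness. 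I expect the main obstacle to be exactly this last step: correctly extracting $\tau'\le 2$ from $P_\infty>0$ through the projection characterization of $T$ combined with $T\le 1$; once that is in hand, the remaining arithmetic linking $u$, $\eta(x)$, and the constant $4+1/\epsilon$ is routine bookkeeping.
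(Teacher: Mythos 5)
Your proof is correct, and it shares the paper's overall skeleton: the cost bound is identical, and for competitiveness both arguments bound the truncation's extra cost for $x\ge u$ by $P_\infty\beta\le\mu(\rho)\beta$, absorb this into $\epsilon\,\mu(\rho)\eta$ when the prediction is small (since $x\ge u$ then forces a large error), and show the modification is vacuous ($P_\infty=0$) when the prediction is large. The genuine difference is how the ``large prediction implies $P_\infty=0$'' step is established. The paper splits on $\tau\le 3$ versus $\tau>3$ (in unit costs) and, for $\tau>3$, verifies by explicit computation with the closed form of $\mu(\rho)$ that \eqref{eq:T_UB} must be tight, distinguishing $\mu>\mu(\tilde\rho)$ from $\mu\le\mu(\tilde\rho)$ and using $\mu(\rho)>1-\rho\frac{e-1}{e}$. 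You instead prove the contrapositive structurally: $P_\infty>0$ means the projection defining $e^T$ does not land at the upper endpoint of the interval cut out by \eqref{eq:T_UB}--\eqref{eq:T_LB}, hence $T\ge\tau'-1$ for the unit-scaled prediction $\tau'=\frac{\alpha}{\beta}\tau$, and combined with $T\le 1$ from Lemma~\ref{lem:case3Equiv} this forces $\tau'\le 2$. Your route is cleaner and slightly stronger (threshold $2$ rather than $3$, and no dependence on the particular formula for $\mu(\rho)$, so it would work verbatim for any admissible $\mu\in[\mu_\tau(\rho),1]$), at the price of importing the machinery behind Lemma~\ref{lem:case3Equiv} -- namely that the unmodified algorithm is competitive, which is what yields $T\le 1$ -- whereas the paper's computation is self-contained within the Case~3 definitions. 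One small point worth making explicit: in Case~3 with $\mu\tau'\ge 1$ there is no $T$ at all (the algorithm buys at time $0$), but then $P_\infty=0$, so your implication holds vacuously there.
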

\begin{proof}
The bound on the cost is easy to see: In the worst case,
we buy at time $u$, paying
$\alpha \cdot \frac\beta\alpha (3+1/\epsilon)$ for renting
until $u$ and additional $\beta$ for buying.

We will show that the algorithm is
$\big(\rho,\,(1+\epsilon)\mu(\rho)\big)$-competitive
under the assumption $\alpha=\beta=1$.
The result for general $\alpha$ and $\beta$ then follows from
Lemma~\ref{lem:unit-costs}.

\textbf{First case:} $\tau \leq 3$.
Here, $\cost(x) \leq \rho\opt+ \mu(\rho)\cdot\eta$
for any $x \leq u$ since the modified algorithm has the same behaviour as the
one in Section~\ref{sec:ski}. If $x \geq u$, we pay an additional cost of $P_\infty$ for
buying at time $u$:
\begin{align*}
\cost(x) &\leq \rho\opt+ \mu(\rho)\cdot |\tau-x| + P_\infty\\
	&\leq \rho\opt+ \mu(\rho)\cdot \big(1+\frac{1}{|\tau-x|}\big)|\tau-x|\\
	&\leq \rho\opt + \mu(\rho) (1+\epsilon)\cdot |\tau-x|,
\end{align*}
since $P_\infty \leq \mu(\rho)$, $\tau \leq 3$ and $x \geq 3+1/\epsilon$.

\textbf{Second case:} $\tau > 3$. Here, we claim that $P_\infty = 0$
and therefore $F_{\rho,\tau}(1)=1$, i.e.,
no modification is needed.
Note that $P_\infty =0$ whenever \eqref{eq:T_UB} is tight, i.e.,
\begin{equation}
\label{eq:ap-T-UB}
e^{\tau-1} \geq \frac{\rho-\mu}{\rho - \mu + \tau\mu -1}.
\end{equation}
If we have $\mu > \mu(\tilde\rho) \approx 0.36$, then
$\tau\mu-1>0$ for any $\tau >3$, making the right-hand side
of \eqref{eq:ap-T-UB} smaller than 1.
Otherwise,
$\mu(\rho) = \frac{1-\rho\frac{e-1}{e}}{\ln 2} > 1-\rho\frac{e-1}{e}$,
and we have
\[ \frac{\rho-\mu}{\rho - \mu + \tau\mu -1}
	\leq \frac{\rho}{\rho + (\tau-1)\mu - 1}
	< \frac{\rho}{\rho - \rho \frac{e-1}{e}} = e.\]
making the right-hand side of \eqref{eq:ap-T-UB} smaller than $e$,
while $e^{\tau-1}> e$.
\end{proof}

\section{Experiments}\label{sec:exp}

\newcommand{\kumards}{PSK$_4$\xspace}
\newcommand{\kumareightds}{PSK$_8$\xspace}
\newcommand{\ftp}{FTP\xspace}
\newcommand{\angelo}{ADJKR\xspace}
\newcommand{\kumar}{PSK\xspace}

In this section, we present an experimental evaluation of our algorithms compared to existing learning-augmented ski rental algorithms for the ski rental and DPM problems\footnote{We run prior ski rental algorithms on DPM using Lemma~\ref{lem:redDPMSki} and Theorem~\ref{thm:BB}, as discussed in Sections \ref{sec:DPM-1} and \ref{section:BlumBurch}.}. We use a synthetic dataset, which was introduced by Purohit et al.~\cite{PurohitSK18}, and a real-world dataset, which is based on smartphone traces from~\cite{Zhou15}. The results of our experiments\footnote{The code and datasets we used to run our experiments are available at \url{https://github.com/adampolak/dpm}.} suggest that the performance of learning-augmented algorithms indeed degrades smoothly when the error increases, providing solutions which are better, for medium errors, than naive algorithms trusting the predictions and online (predictionless) algorithms.
In the experiments, the performance of our algorithms  degrades more smoothly when the prediction error increases than previous learning-augmented online algorithms do.
This is expected, since consistency-robustness trade-offs of previous algorithms optimize the two extreme scenarios of perfect predictions and adversarially bad predictions, whereas the notion of $(\rho,\mu)$-competitiveness also captures the case of useful but imperfect predictions.

For the ski rental problem, in addition to the classical $e/(e-1)$-competitive online algorithm, we consider the following algorithms: FTP, which blindly follows the prediction (i.e., it either buys at time 0 or never); \kumar, the randomized algorithm from \cite{PurohitSK18}; and \angelo, the deterministic algorithm from \cite{Angelopoulos19}. As three algorithms -- PSK, ADJKR, and ours -- each depend on a hyperparameter, we set them in such a way that we obtain the same consistency ($\rho$ in the notation of this paper). For example, the consistency of $\rho=1.216$ corresponds to the parameter $\lambda\approx\ln(3/2)$ for \kumar, as selected in \cite{PurohitSK18}.

We consider either two power states (of respective power consumption $\{0,1\}$ and wake-up costs $\{1,0\}$, which correspond to ski rental) or four power states, whose respective power consumption are $\{1, 0.47, 0.105, 0\}$ and wake-up costs are $\{0, 0.12, 0.33, 1\}$, values corresponding to the \emph{active}, \emph{idle}, \emph{stand-by} and \emph{sleep} states of an IBM mobile hard-drive~\cite{IraniSG03}. For four power states, we convert algorithms initially designed for two power states (i.e., multi-round ski rental) using Lemma~\ref{lem:redDPMSki} and convert the resulting algorithms to their prudent variants as discussed in Section~\ref{section:prudent}.

We either run algorithms with fixed values of $\rho$, handling each idle period independently from the past ones, or we use Theorem~\ref{thm:BB} to let the algorithms adjust the value of $\rho$ across multiple idle periods.
For randomized algorithms (i.e., \kumar and ours) we use $\rho\in \{1,1.1,1.16, 1.3, 1.4, 1.5,\frac{e}{e-1}\}$; values $1$ and $\frac{e}{e-1}$ correspond to \ftp and the classical randomized $e/(e-1)$-competitive online algorithm. For the deterministic \angelo algorithm, we do not combine it with the randomized $e/(e-1)$-competitive algorithm; instead, we additionally use $\rho\in \{1.6, 1.7, 1.8, 1.9, 2\}$, where $\rho=2$ corresponds to the classical deterministic $2$-competitive online algorithm. Finally, for FTP, we combine it only with the classical randomized online algorithm.  The value of the parameter $\epsilon_1$ is set to $0.1$.

We repeated each experiment $10$ times. As the maximum standard deviation was smaller than $0.025$, we do not print error bars on the charts and only display the average result.

\paragraph{Synthetic scenario.}
For the synthetic dataset, we generate both the input data and predictions following the approach proposed by Purohit et al.~\cite{PurohitSK18}. We use two datasets, each composed of $10\,000$ independently generated durations (of the idle periods/ski seasons). For the first dataset \kumards, durations are drawn uniformly from $[0,4]$ as originally proposed in \cite{PurohitSK18}. For the second dataset \kumareightds, durations are drawn uniformly from $[0,8]$, in order to offer instances in which also the deepest power state might be chosen by the optimal offline algorithm. We feed the learning-augmented algorithms with synthetic predictions generated as follows: each prediction is equal to the exact request plus a random noise drawn from a normal distribution of mean 0 and standard deviation $\sigma$ (rounding any negative predictions to $0$). The performance is shown as the competitive ratio observed in function of $\sigma$.

\begin{figure}
  \begin{subfigure}{0.49\textwidth}
    \centering
    \includegraphics[width=\textwidth]{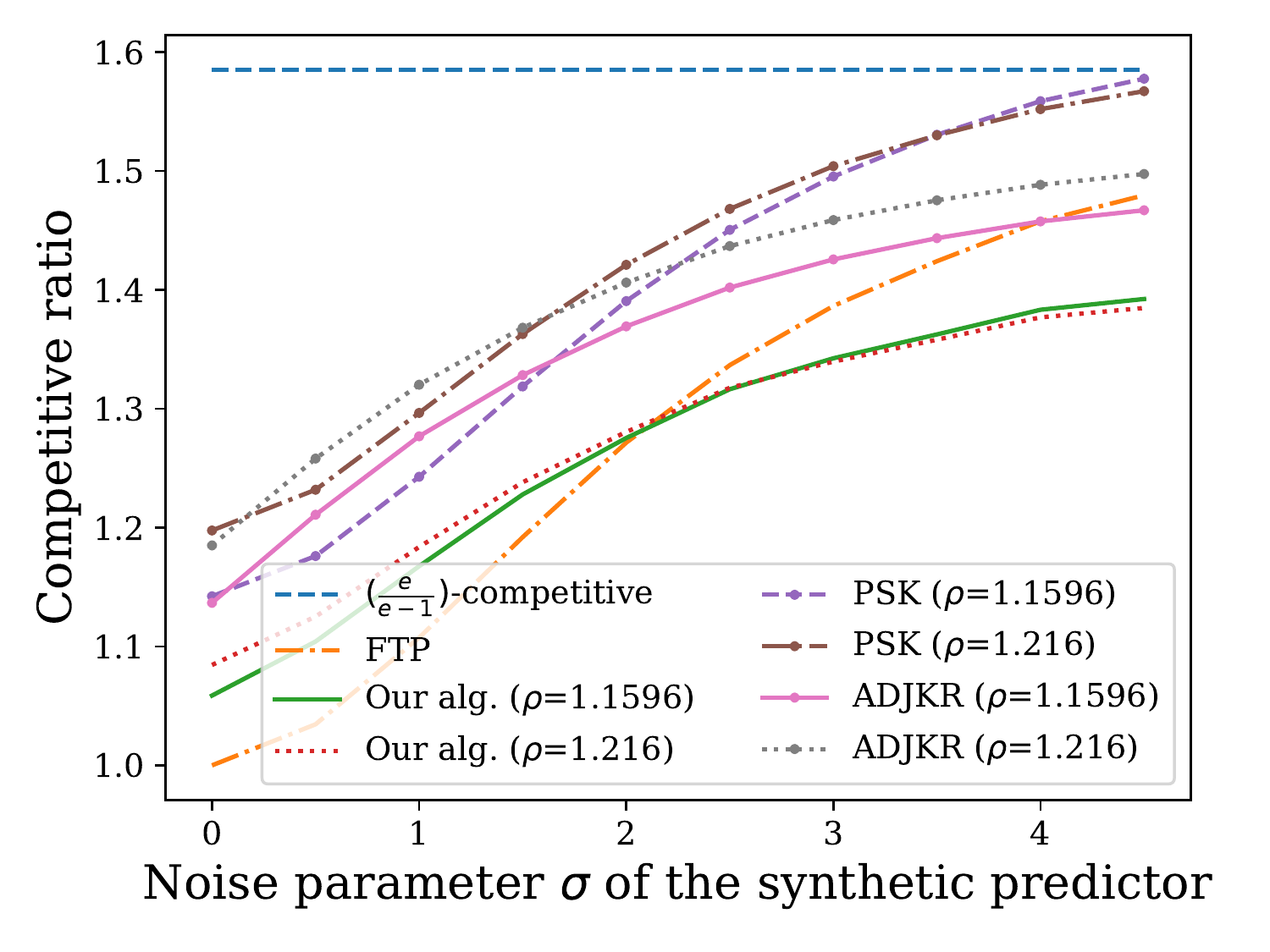}
    \caption{Algorithms with fixed consistency $\rho$.}
    \label{fig:expapp1rhos}
  \end{subfigure}
  \hfill
  \begin{subfigure}{0.49\textwidth}
    \centering
    \includegraphics[width=\textwidth]{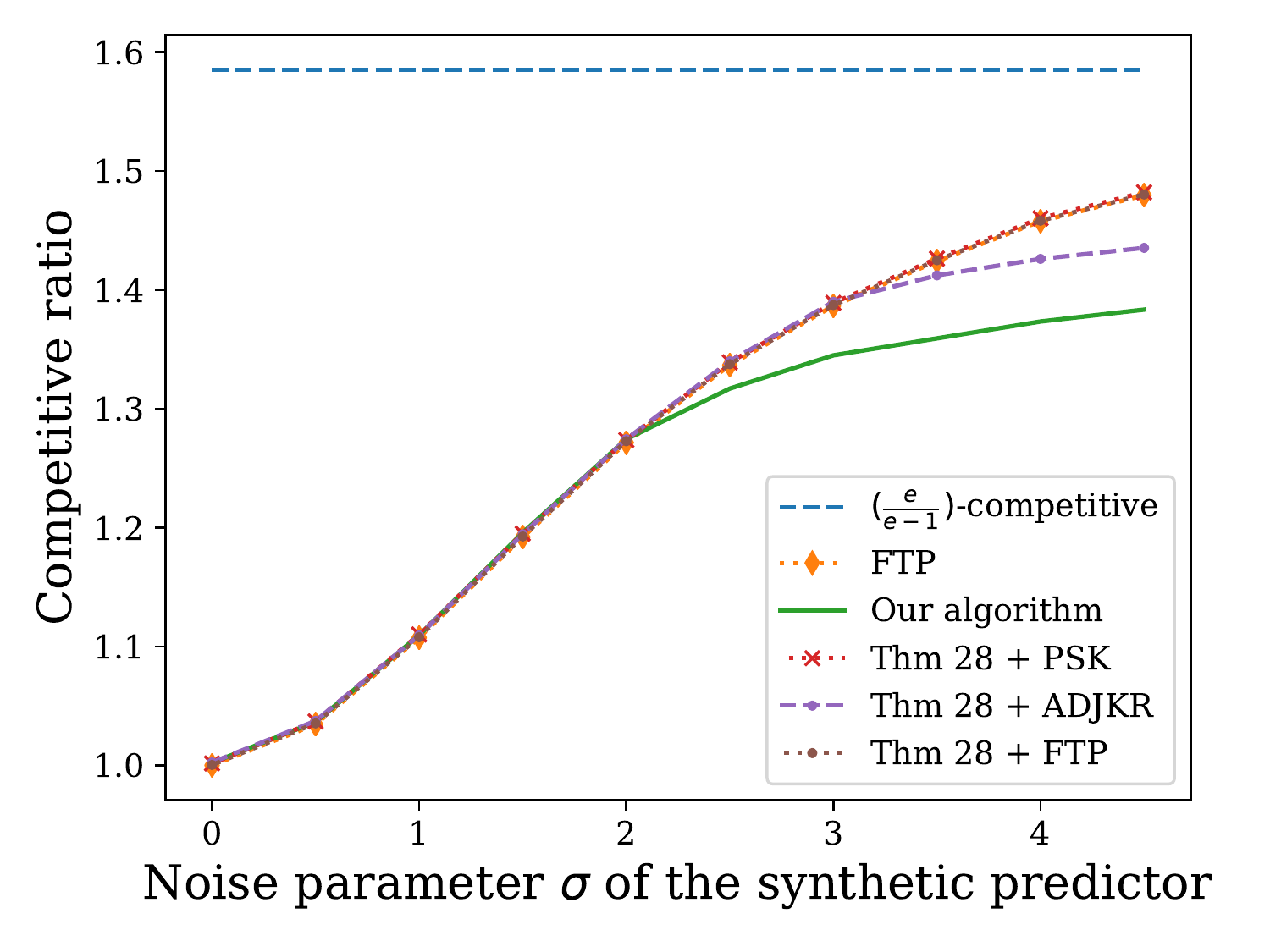}
    \caption{Combined algorithms utilising Theorem~\ref{thm:BB}.}
    \label{fig:expapp1best}
  \end{subfigure}
  \caption{Performance achieved for the two-state DPM (i.e., ski rental) problem on the synthetic dataset \kumards. Each synthetic prediction is equal to the exact request plus a random noise drawn from a normal distribution of mean 0 and standard deviation $\sigma$.}
\end{figure}

\begin{figure}
  \begin{subfigure}{0.49\textwidth}
    \centering
    \includegraphics[width=\textwidth]{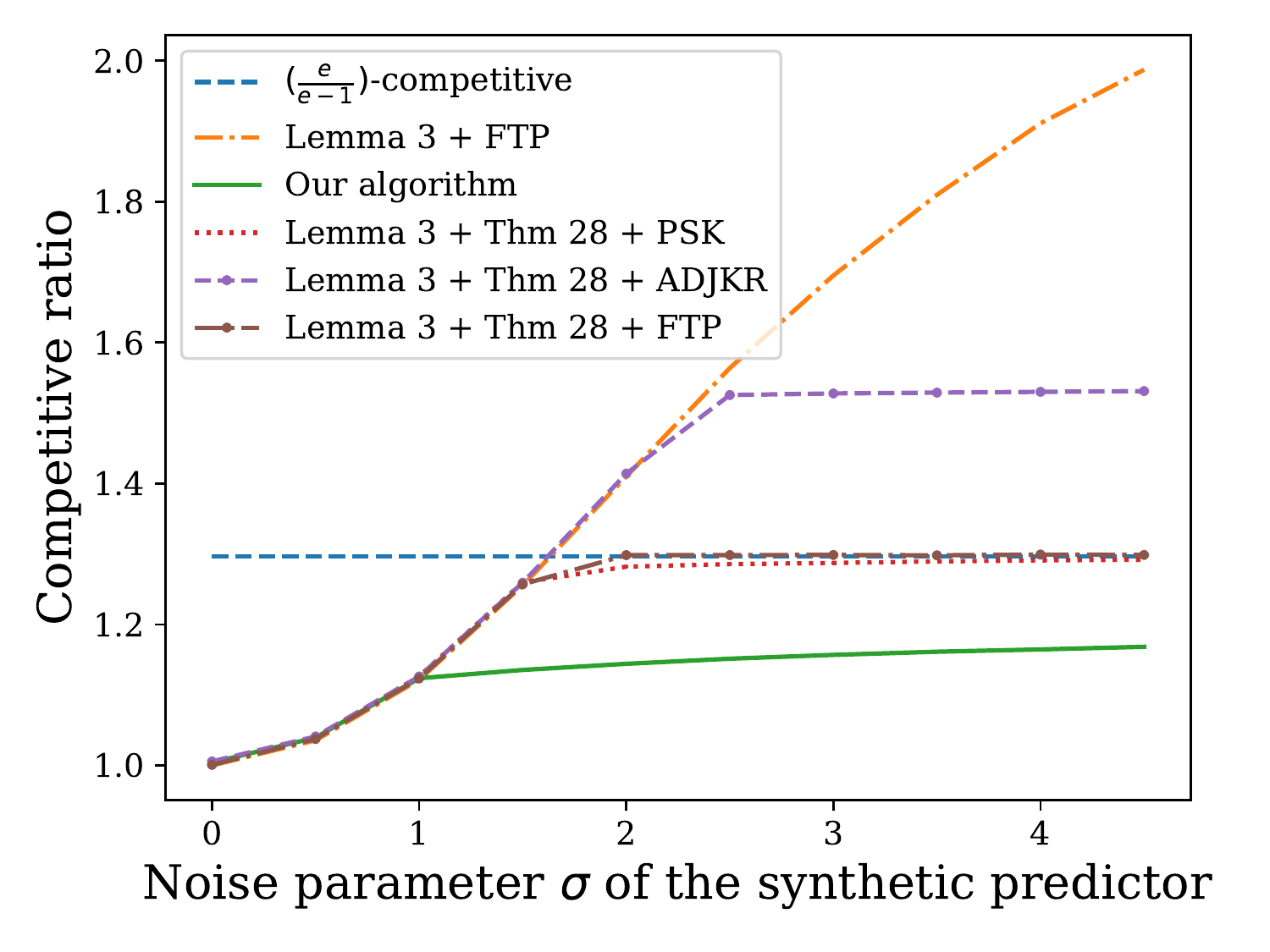}
    \caption{Dataset \kumards.}
    \label{fig:expapp2psk4}
  \end{subfigure}
  \hfill
  \begin{subfigure}{0.49\textwidth}
    \centering
    \includegraphics[width=\textwidth]{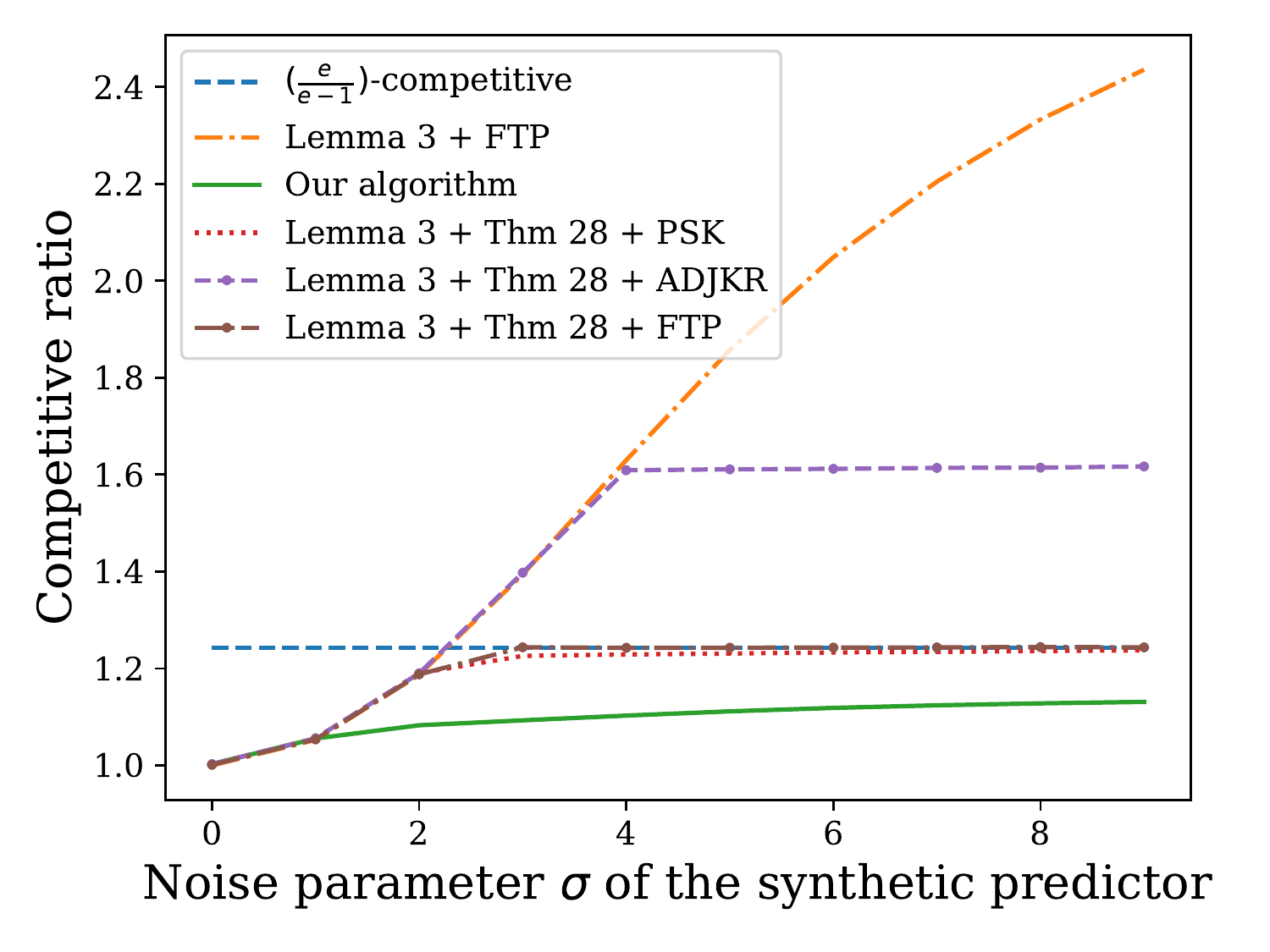}
    \caption{Dataset \kumareightds.}
    \label{fig:expapp2psk8}
  \end{subfigure}
  \caption{Performance achieved for the four-state DPM problem on the synthetic datasets by algorithms utilising Theorem~\ref{thm:BB}.}
  \label{fig:expapp2}

  \vspace{.4cm}

  \begin{minipage}[t]{0.49\textwidth}
    \includegraphics[width=\textwidth]{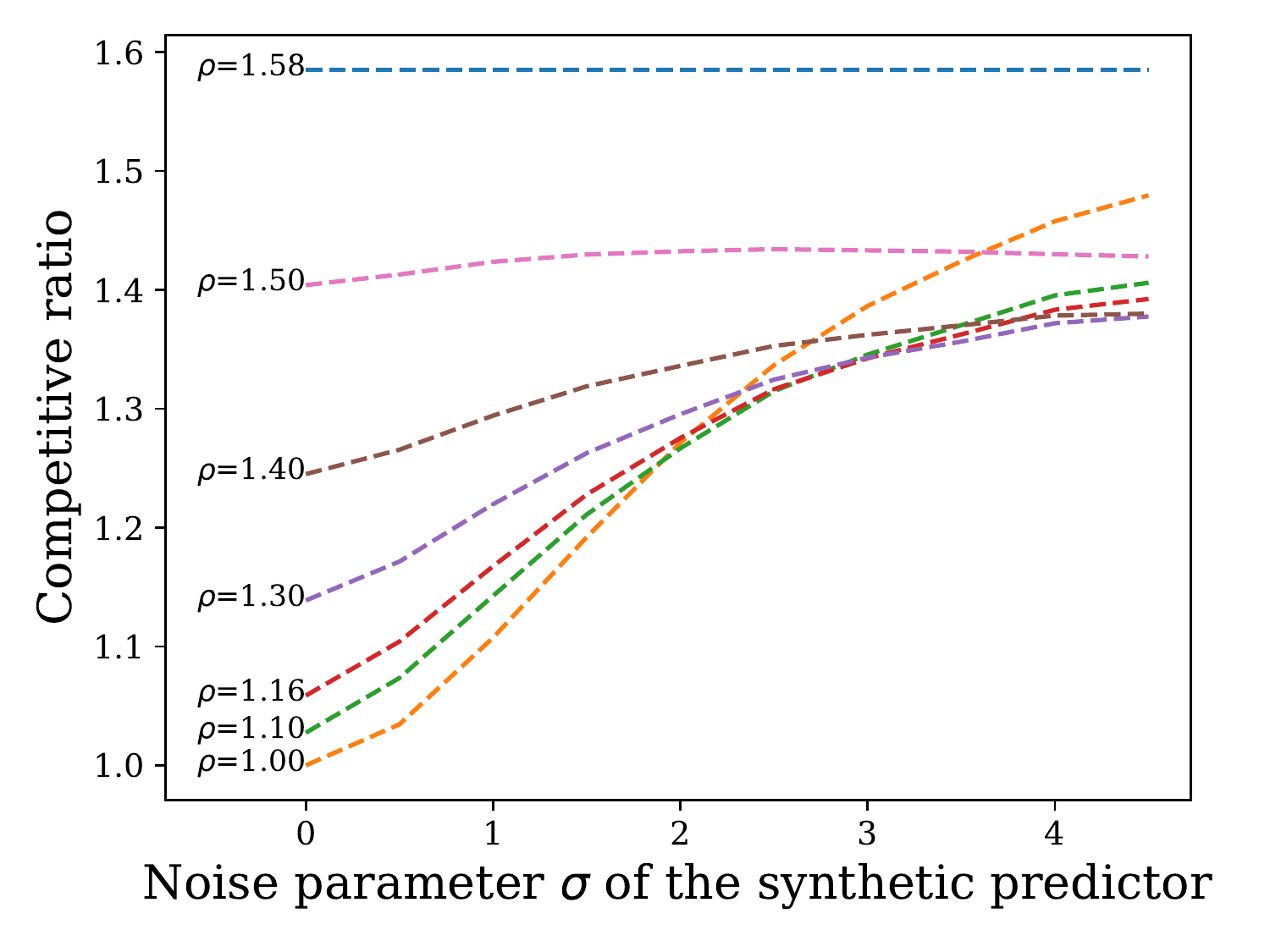}
    \caption{Performance of our ski rental algorithm for different values of $\rho$ on \kumards dataset. Recall that for $\rho = e/(e-1) \approx 1.58$, our algorithm emulates exactly the classical $e/(e-1)$-competitive online algorithm, while for $\rho=1$ it emulates the FTP algorithm.}
    \label{fig:expapp3}
  \end{minipage}
  \hfill
  \begin{minipage}[t]{0.49\textwidth}
    \includegraphics[width=\textwidth]{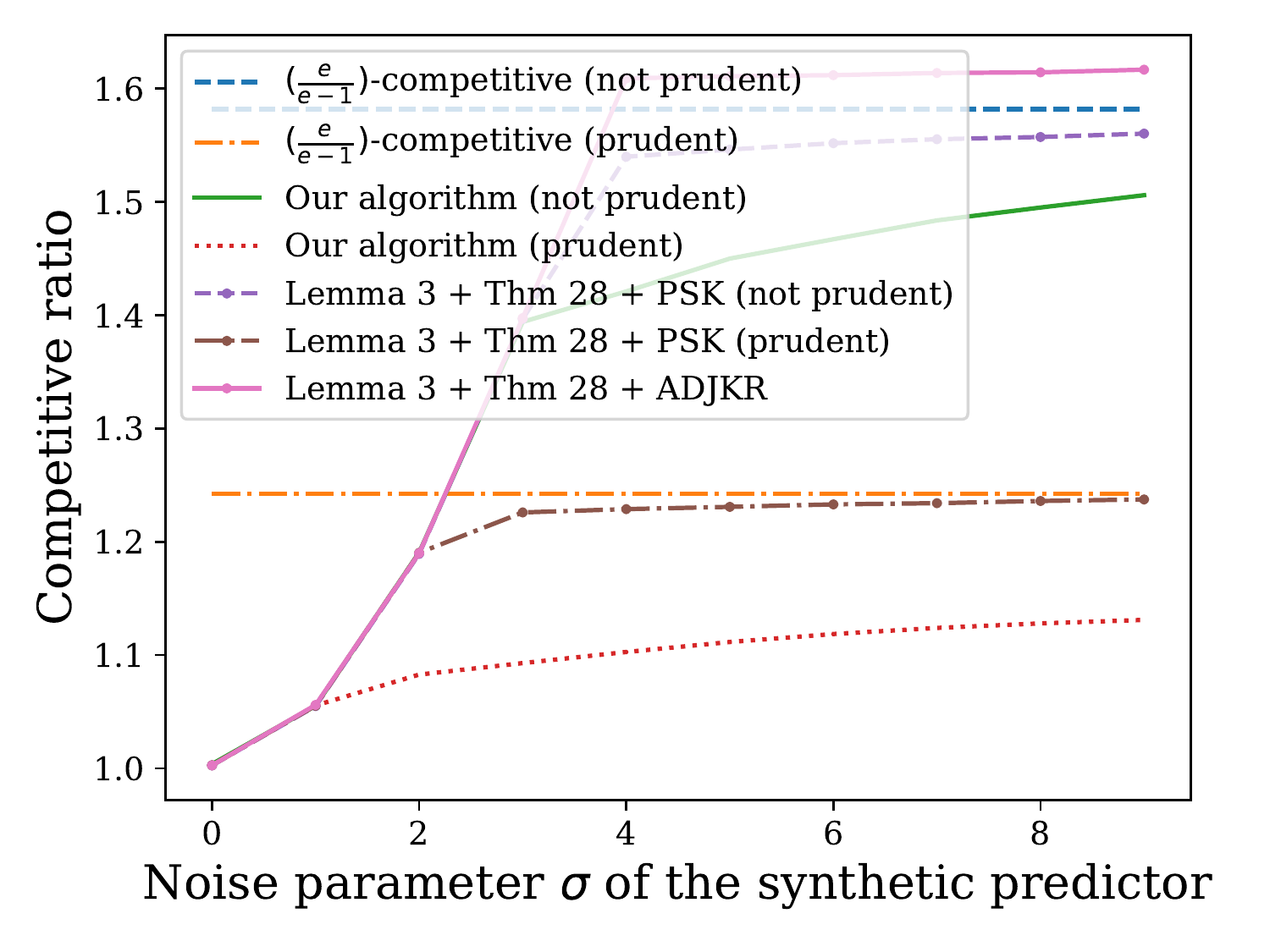}
    \caption{Impact of \emph{prudence} (see Section~\ref{section:prudent}) on performance achieved for the four-state DPM problem on \kumareightds dataset by algorithms utilising Theorem~\ref{thm:BB}.}
    \label{fig:expapp4}  
  \end{minipage}

  \vspace{-.2cm}
\end{figure}

\begin{figure}
  \centering
  \includegraphics[width=\textwidth]{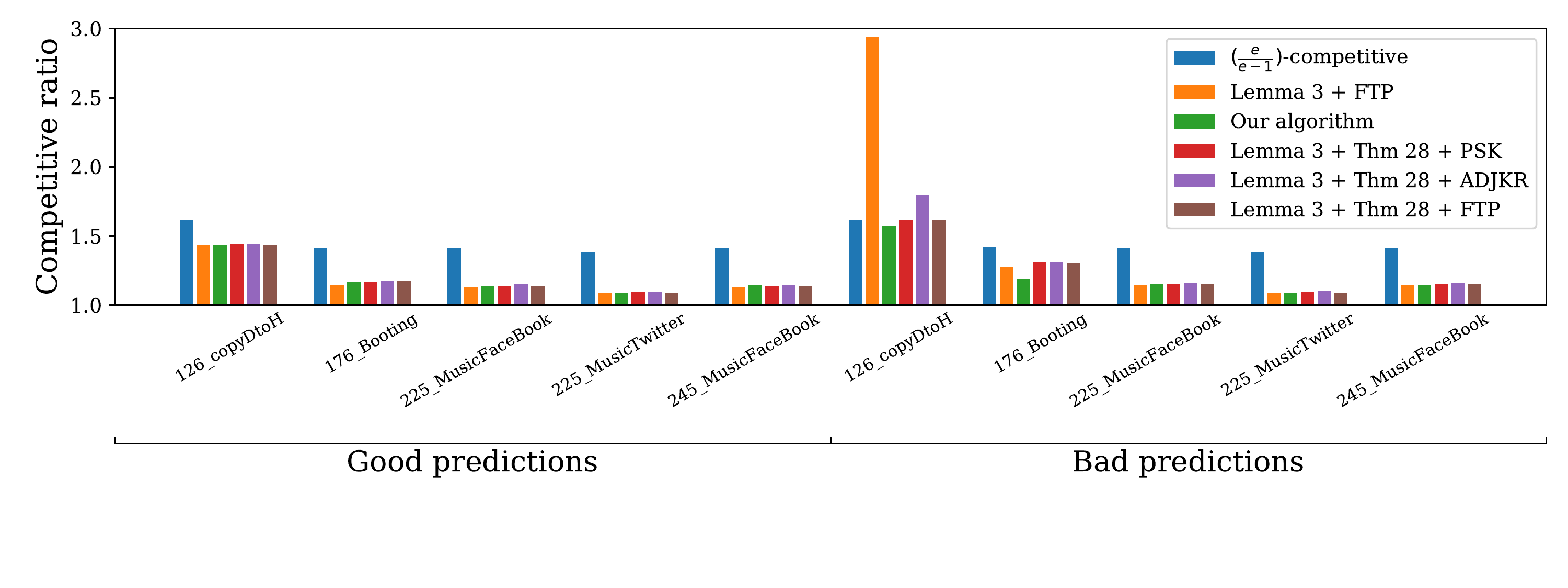}
  \vspace{-1.25cm}
  \caption{Performance achieved by algorithms utilising Theorem~\ref{thm:BB} for the four-state DPM problem on the real-world dataset based on I/O traces from Nexus 5 smartphone~\cite{Zhou15} with predictions generated by \emph{Share} algorithm of Herbster and Warmuth~\cite{HerbsterW95} as proposed by Helmbold et al.~\cite{helmbold96}.}
  \label{fig:expreal}
\end{figure}

In Figure~\ref{fig:expapp1rhos}, we plot the results for two values of the consistency $\rho$ for the two-state DPM problem (i.e., for multi-round ski rental; however the algorithms in that figure have fixed values of $\rho$ and do not carry any information between consecutive rounds).
Figure~\ref{fig:expapp1best} shows the performance of the algorithms with Theorem \ref{thm:BB} applied to automatically adjust the value of $\rho$. We can observe that for low error the algorithms perform equally well (because they quickly determine that $\rho=1$ is best, which corresponds to executing FTP). As the prediction error increases, our algorithm performs better than the other algorithms on this dataset.

Figure~\ref{fig:expapp2} presents analogous results to Figure~\ref{fig:expapp1best} for four-state DPM on both datasets, \kumards and \kumareightds. The figure shows that the algorithms PSK and ADJKR (combined with Lemma~\ref{lem:redDPMSki} and Theorem~\ref{thm:BB}) essentially perform as well as the better of the two algorithms FTP and the classical online algorithm without predictions. This reflects the fact that these algorithms were designed to optimize the trade-off between consistency and robustness in every single idle period, thus aiming primarily at the two extremes of perfect predictions and adversarially bad predictions. In contrast, the way in which our algorithm's performance degrades as the error increases is (near-)optimal in terms of $(\rho,\mu)$-competitiveness, and the algorithm's consistency and robustness are still optimal over many idle periods (but not necessarily for each idle period individually). Accordingly, our algorithm achieves a significant improvement over previous algorithms in the regime of medium-sized errors and even when predictions are only very weakly correlated with the truth.

The impact of the choice of the consistency $\rho$ on our algorithm is depicted in Figure~\ref{fig:expapp3}. Lower values of $\rho$ imply a better consistency, so a better performance when $\sigma$ is small, but may lead to worse results for larger prediction errors.

Finally, Figure~\ref{fig:expapp4} shows that converting randomized algorithms to prudent ones (see Section~\ref{section:prudent}) is crucial for achieving good experimental performance. Four-state DPM algorithms with that modification ablated perform significantly worse than their prudent variants on the \kumareightds dataset. Recall that since ADJKR is deterministic, there exists no non-prudent version of this algorithm.

\paragraph{Real-world scenario.}
Even though some of the previous works on DPM~\cite{helmbold96,IraniSG03} include real-world experiments, these papers are rather aged and the datasets used in them seem no longer available. For that reason, in order to test the algorithms in a real-world-inspired scenario, we created a dataset based on I/O traces\footnote{We obtained the traces from \url{http://iotta.snia.org/traces/block-io}.} from a Nexus 5 smartphone~\cite{Zhou15}. We took the five largest traces and extracted from them durations of idle periods between requests. Since we could not find a power states specification for that device, we used the same power states as for the synthetic experiments, i.e., an IBM mobile hard-drive's power states reported in~\cite{IraniSG03}. Because of that, we had to scale up the idle period durations so that their order of magnitude becomes similar to that in the synthetic experiments.

For this experiment, instead of resorting to synthetic predictions, we implemented a simple actual predictor, proposed by Helmbold et al.~\cite{helmbold96} in the context of spinning down disks of mobile computers. The predictor adapts the \emph{Share} learning algorithm of Herbster and Warmuth~\cite{HerbsterW95}, which is based on the multiplicative weights update method. Since it is interesting to evaluate learning-augmented algorithms both in the presence of good and bad predictions, we consider two variants of that predictor. The good variant uses hyperparameters proposed in~\cite{helmbold96}; the bad variant has the rate parameter of weight updates negated.

Figure~\ref{fig:expreal} presents the results of the experiment. In particular, on each dataset, either our algorithm performs better than all the others, or the nonrobust \ftp is the best one and all robust learning-augmented algorithms are almost equally good.

\bibliography{draft}

\begin{thebibliography}{44}
\providecommand{\natexlab}[1]{#1}
\providecommand{\url}[1]{\texttt{#1}}
\expandafter\ifx\csname urlstyle\endcsname\relax
  \providecommand{\doi}[1]{doi: #1}\else
  \providecommand{\doi}{doi: \begingroup \urlstyle{rm}\Url}\fi

\bibitem[Albers(2010)]{Albers10}
S.~Albers.
\newblock Energy-efficient algorithms.
\newblock \emph{Commun. {ACM}}, 53\penalty0 (5):\penalty0 86--96, 2010.

\bibitem[Albers(2019)]{Albers19}
S.~Albers.
\newblock On energy conservation in data centers.
\newblock \emph{{ACM} Trans. Parallel Comput.}, 6\penalty0 (3):\penalty0
  13:1--13:26, 2019.

\bibitem[Albers and Quedenfeld(2018)]{AlbersQ18}
S.~Albers and J.~Quedenfeld.
\newblock Optimal algorithms for right-sizing data centers.
\newblock In C.~Scheideler and J.~T. Fineman, editors, \emph{Proceedings of
  {SPAA}'18,}, pages 363--372. {ACM}, 2018.

\bibitem[Angelopoulos et~al.(2020)Angelopoulos, D{\"u}rr, Jin, Kamali, and
  Renault]{Angelopoulos19}
S.~Angelopoulos, C.~D{\"u}rr, S.~Jin, S.~Kamali, and M.~Renault.
\newblock {Online Computation with Untrusted Advice}.
\newblock In \emph{Proceedings of ITCS'20}, volume 151, pages 52:1--52:15,
  2020.

\bibitem[Antoniadis et~al.(2020{\natexlab{a}})Antoniadis, Coester, Eli{\'{a}}s,
  Polak, and Simon]{AntoniadisCE0S20}
A.~Antoniadis, C.~Coester, M.~Eli{\'{a}}s, A.~Polak, and B.~Simon.
\newblock Online metric algorithms with untrusted predictions.
\newblock In \emph{Proceedings of {ICML}'20}, pages 345--355. {PMLR},
  2020{\natexlab{a}}.

\bibitem[Antoniadis et~al.(2020{\natexlab{b}})Antoniadis, Garg, Kumar, and
  Kumar]{AntoniadisGKK20}
A.~Antoniadis, N.~Garg, G.~Kumar, and N.~Kumar.
\newblock Parallel machine scheduling to minimize energy consumption.
\newblock In \emph{Proceedings of {SODA}'20}, pages 2758--2769. {SIAM},
  2020{\natexlab{b}}.

\bibitem[Antoniadis et~al.(2020{\natexlab{c}})Antoniadis, Gouleakis, Kleer, and
  Kolev]{AntoniadisGKK20-secretaries}
A.~Antoniadis, T.~Gouleakis, P.~Kleer, and P.~Kolev.
\newblock Secretary and online matching problems with machine learned advice.
\newblock In \emph{Proceedings of {NeurIPS}'20}, 2020{\natexlab{c}}.

\bibitem[Augustine et~al.(2008)Augustine, Irani, and Swamy]{AugustineIS08}
J.~Augustine, S.~Irani, and C.~Swamy.
\newblock Optimal power-down strategies.
\newblock \emph{{SIAM} J. Comput.}, 37\penalty0 (5):\penalty0 1499--1516, 2008.

\bibitem[Bamas et~al.(2020)Bamas, Maggiori, Rohwedder, and
  Svensson]{BamasMRS20}
{\'{E}}.~Bamas, A.~Maggiori, L.~Rohwedder, and O.~Svensson.
\newblock Learning augmented energy minimization via speed scaling.
\newblock In \emph{Proceedings of {NeurIPS}'20}, 2020.

\bibitem[Baptiste et~al.(2012)Baptiste, Chrobak, and D{\"{u}}rr]{BaptisteCD12}
P.~Baptiste, M.~Chrobak, and C.~D{\"{u}}rr.
\newblock Polynomial-time algorithms for minimum energy scheduling.
\newblock \emph{{ACM} Trans. Algorithms}, 8\penalty0 (3):\penalty0 26:1--26:29,
  2012.

\bibitem[Benini et~al.(2000)Benini, Bogliolo, and Micheli]{BeniniBM00}
L.~Benini, A.~Bogliolo, and G.~D. Micheli.
\newblock A survey of design techniques for system-level dynamic power
  management.
\newblock \emph{{IEEE} Trans. Very Large Scale Integr. Syst.}, 8\penalty0
  (3):\penalty0 299--316, 2000.

\bibitem[Blum and Burch(2000)]{BlumB00}
A.~Blum and C.~Burch.
\newblock On-line learning and the metrical task system problem.
\newblock \emph{Machine Learning}, 39\penalty0 (1):\penalty0 35--58, 2000.

\bibitem[Borodin et~al.(1992)Borodin, Linial, and Saks]{BorodinLS92}
A.~Borodin, N.~Linial, and M.~E. Saks.
\newblock An optimal on-line algorithm for metrical task system.
\newblock \emph{J. {ACM}}, 39\penalty0 (4):\penalty0 745--763, 1992.

\bibitem[Cesa{-}Bianchi and Lugosi(2006)]{BianchiL06}
N.~Cesa{-}Bianchi and G.~Lugosi.
\newblock \emph{Prediction, learning, and games}.
\newblock Cambridge University Press, 2006.
\newblock ISBN 978-0-521-84108-5.

\bibitem[Cesa{-}Bianchi et~al.(2012)Cesa{-}Bianchi, Gaillard, Lugosi, and
  Stoltz]{Bianchi12}
N.~Cesa{-}Bianchi, P.~Gaillard, G.~Lugosi, and G.~Stoltz.
\newblock A new look at shifting regret.
\newblock \emph{CoRR}, abs/1202.3323, 2012.
\newblock URL \url{http://arxiv.org/abs/1202.3323}.

\bibitem[Chen et~al.(2018)Chen, Goel, and Wierman]{Chen18}
N.~Chen, G.~Goel, and A.~Wierman.
\newblock Smoothed online convex optimization in high dimensions via online
  balanced descent.
\newblock In \emph{Proceedings of {COLT}'18}, volume~75, pages 1574--1594.
  PMLR, 2018.

\bibitem[Chung et~al.(2002)Chung, Benini, Bogliolo, Lu, and
  Micheli]{ChungBBLM02}
E.~Chung, L.~Benini, A.~Bogliolo, Y.~Lu, and G.~D. Micheli.
\newblock Dynamic power management for nonstationary service requests.
\newblock \emph{{IEEE} Trans. Computers}, 51\penalty0 (11):\penalty0
  1345--1361, 2002.

\bibitem[Dirk~Blevins(2008)]{intel}
J.~R.~J. Dirk~Blevins, Michael~Loewe.
\newblock {White Paper, Inter Corporation: Designing Systems without a Suspend
  Supply}.
\newblock
  \url{https://www.intel.com/content/dam/www/public/us/en/documents/white-papers/systems-without-suspend-supply-paper.pdf},
  2008.
\newblock [Online; accessed 14-May-2021].

\bibitem[D{\"{u}}tting et~al.(2020)D{\"{u}}tting, Lattanzi, Leme, and
  Vassilvitskii]{secretaries-dutting}
P.~D{\"{u}}tting, S.~Lattanzi, R.~P. Leme, and S.~Vassilvitskii.
\newblock Secretaries with advice.
\newblock \emph{CoRR}, abs/2011.06726, 2020.

\bibitem[Gollapudi and Panigrahi(2019)]{GollapudiP19}
S.~Gollapudi and D.~Panigrahi.
\newblock Online algorithms for rent-or-buy with expert advice.
\newblock In \emph{Proceedings of ICML'19}, pages 2319--2327, 2019.

\bibitem[Hall and Willett(2013)]{HallW13}
E.~Hall and R.~Willett.
\newblock Dynamical models and tracking regret in online convex programming.
\newblock In \emph{Proceedings of {ICML}'13}, volume 28:1, pages 579--587.
  PMLR, 2013.

\bibitem[Helmbold et~al.(1996)Helmbold, Long, and Sherrod]{helmbold96}
D.~P. Helmbold, D.~D. Long, and B.~Sherrod.
\newblock A dynamic disk spin-down technique for mobile computing.
\newblock In \emph{Proceedings of the 2nd annual international conference on
  Mobile computing and networking}, pages 130--142, 1996.

\bibitem[Herbster and Warmuth(1995)]{HerbsterW95}
M.~Herbster and M.~K. Warmuth.
\newblock Tracking the best expert.
\newblock In \emph{Proceedings of {ICML}'95}, pages 286--294. Morgan Kaufmann,
  1995.

\bibitem[{Hewlett-Packard Corporation} et~al.(2013){Hewlett-Packard
  Corporation}, {Intel Corporation}, {Microsoft Corporation}, {Phoenix
  Technologies Ltd.}, and {Toshiba Corporation}]{ACPI}
{Hewlett-Packard Corporation}, {Intel Corporation}, {Microsoft Corporation},
  {Phoenix Technologies Ltd.}, and {Toshiba Corporation}.
\newblock {Advanced Configuration and Power Interface Specification (ACPI)}.
\newblock \url{https://uefi.org/acpi/specs}, 2013.
\newblock [Online; accessed 14-May-2021].

\bibitem[Irani and Karlin(1996)]{Irani-Karlin}
S.~Irani and A.~R. Karlin.
\newblock Online computation.
\newblock In D.~S. Hochbaum, editor, \emph{Approximation Algorithms for NP-Hard
  Problems}, page 521–564. PWS Publishing Co., USA, 1996.
\newblock ISBN 0534949681.

\bibitem[Irani and Pruhs(2005)]{IraniP05}
S.~Irani and K.~Pruhs.
\newblock Algorithmic problems in power management.
\newblock \emph{{SIGACT} News}, 36\penalty0 (2):\penalty0 63--76, 2005.

\bibitem[Irani et~al.(2003)Irani, Shukla, and Gupta]{IraniSG03}
S.~Irani, S.~Shukla, and R.~Gupta.
\newblock Online strategies for dynamic power management in systems with
  multiple power-saving states.
\newblock \emph{ACM Trans. Embed. Comput. Syst.}, 2\penalty0 (3):\penalty0
  325–346, 2003.

\bibitem[Irani et~al.(2007)Irani, Shukla, and Gupta]{IraniSG07}
S.~Irani, S.~K. Shukla, and R.~Gupta.
\newblock Algorithms for power savings.
\newblock \emph{{ACM} Trans. Algorithms}, 3\penalty0 (4):\penalty0 41, 2007.

\bibitem[Karlin et~al.(1994)Karlin, Manasse, McGeoch, and Owicki]{KarlinMMO94}
A.~R. Karlin, M.~S. Manasse, L.~A. McGeoch, and S.~S. Owicki.
\newblock Competitive randomized algorithms for nonuniform problems.
\newblock \emph{Algorithmica}, 11\penalty0 (6):\penalty0 542--571, 1994.

\bibitem[Khuller et~al.(2010)Khuller, Li, and Saha]{KhullerLS10}
S.~Khuller, J.~Li, and B.~Saha.
\newblock Energy efficient scheduling via partial shutdown.
\newblock In \emph{{SODA}}, pages 1360--1372. {SIAM}, 2010.

\bibitem[Lattanzi et~al.(2020)Lattanzi, Lavastida, Moseley, and
  Vassilvitskii]{LattanziLMV20}
S.~Lattanzi, T.~Lavastida, B.~Moseley, and S.~Vassilvitskii.
\newblock Online scheduling via learned weights.
\newblock In \emph{Proceedings of {SODA}'20}, pages 1859--1877, 2020.

\bibitem[Li and Khuller(2011)]{LiK11}
J.~Li and S.~Khuller.
\newblock Generalized machine activation problems.
\newblock In \emph{Proceedings of {SODA}'11}, pages 80--94. {SIAM}, 2011.

\bibitem[Lim et~al.(2011)Lim, Sharma, Tak, and Das]{lim2011}
S.-H. Lim, B.~Sharma, B.~C. Tak, and C.~R. Das.
\newblock A dynamic energy management scheme for multi-tier data centers.
\newblock In \emph{(IEEE ISPASS) IEEE International Symposium on Performance
  Analysis of Systems and Software}, pages 257--266. IEEE, 2011.

\bibitem[Lindermayr et~al.(2021)Lindermayr, Megow, and Simon]{double-coverage}
A.~Lindermayr, N.~Megow, and B.~Simon.
\newblock Double coverage with machine-learned advice.
\newblock \emph{CoRR}, abs/2103.01640, 2021.
\newblock URL \url{https://arxiv.org/abs/2103.01640}.

\bibitem[Lotker et~al.(2012)Lotker, Patt{-}Shamir, and Rawitz]{LotkerPR12}
Z.~Lotker, B.~Patt{-}Shamir, and D.~Rawitz.
\newblock Rent, lease, or buy: Randomized algorithms for multislope ski rental.
\newblock \emph{{SIAM} J. Discret. Math.}, 26\penalty0 (2):\penalty0 718--736,
  2012.

\bibitem[{Lykouris} and {Vassilvitskii}(2018)]{LykourisV18}
T.~{Lykouris} and S.~{Vassilvitskii}.
\newblock Competitive caching with machine learned advice.
\newblock In \emph{Proceedings of ICML'18}, pages 3302--3311, 2018.

\bibitem[Mitzenmacher and Vassilvitskii(2020)]{MitzenmacherV20}
M.~Mitzenmacher and S.~Vassilvitskii.
\newblock Algorithms with predictions.
\newblock In T.~Roughgarden, editor, \emph{Beyond the Worst-Case Analysis of
  Algorithms}, pages 646--662. Cambridge University Press, 2020.

\bibitem[Phillips and Westbrook(1999)]{Phillips-Westbrook}
S.~Phillips and J.~Westbrook.
\newblock On-line algorithms: Competitive analysis and beyond.
\newblock In M.~J. Atallah, editor, \emph{Algorithms and Theory of Computation
  Handbook}, chapter~10. CRC Press, 1999.

\bibitem[Purohit et~al.(2018)Purohit, Svitkina, and Kumar]{PurohitSK18}
M.~Purohit, Z.~Svitkina, and R.~Kumar.
\newblock Improving online algorithms via {ML} predictions.
\newblock In \emph{Proceedings of NeurIPS'18}, pages 9684--9693, 2018.

\bibitem[Rareshide(2017)]{SSG}
M.~Rareshide.
\newblock Power in the data center and its cost across the {U.S.}
\newblock
  \href{https://info.siteselectiongroup.com/blog/power-in-the-data-center-and-its-costs-across-the-united-states}{\texttt{https://info.siteselectiongroup.com/blog/power-in-the-data-center-and-
  its-costs-across-the-united-states}}, 2017.
\newblock [Online, acessed 23-May-2021].

\bibitem[Rohatgi(2020)]{Rohatgi20}
D.~Rohatgi.
\newblock Near-optimal bounds for online caching with machine learned advice.
\newblock In \emph{Proceedings of {SODA}'20}, pages 1834--1845, 2020.

\bibitem[Wei(2020)]{AlexWei20}
A.~Wei.
\newblock Better and simpler learning-augmented online caching.
\newblock In \emph{Proceedings of {APPROX/RANDOM}'20}, volume 176 of
  \emph{LIPIcs}, pages 60:1--60:17, 2020.

\bibitem[Wei and Zhang(2020)]{WeiZ20}
A.~Wei and F.~Zhang.
\newblock Optimal robustness-consistency trade-offs for learning-augmented
  online algorithms.
\newblock In \emph{Proceedings of {NeurIPS}'20}, 2020.

\bibitem[Zhou et~al.(2015)Zhou, Pan, Wang, and Xie]{Zhou15}
D.~Zhou, W.~Pan, W.~Wang, and T.~Xie.
\newblock {I/O} characteristics of smartphone applications and their
  implications for {eMMC} design.
\newblock In \emph{2015 {IEEE} International Symposium on Workload
  Characterization}, pages 12--21. {IEEE} Computer Society, 2015.
\newblock \doi{10.1109/IISWC.2015.8}.

\end{thebibliography}
\bibliographystyle{abbrvnat}

\end{document}